\newtheorem{theorem}{Theorem}
\newtheorem{definition}{Definition}
\newtheorem{proposition}{Proposition}
\newtheorem{lemma}{Lemma}
\newtheorem{corollary}{Corollary}
\DeclareMathOperator*{\argmax}{arg\,max}
\DeclareMathOperator*{\argmin}{arg\,min}
\title{A Game-Theoretic Approach to Privacy-Utility Tradeoff in Sharing Genomic Summary Statistics}
\author{%
  Tao Zhang\thanks{Corresponding author} \\
    Computer Science \& Engineering\\
  Washington University in St. Louis\\
  St. Louis, MO 63130, USA \\
  \texttt{tz636@nyu.edu} \\
  % examples of more authors
  \And
  Rajagopal Venkatesaramani\\
  Khoury College of Computer Sciences\\
  Northeastern University\\
  Boston, MA 02115, USA \\
  \texttt{r.venkatesaramani@northeastern.edu} \\
  \AND
  Rajat K. De \\
  Machine Intelligence Unit\\
  Indian Statistical Institute\\
Kolkata 700108, India. \\
  \texttt{rajat@isical.ac.in} \\
  \And
  Bradley A. Malin\\
  Biomedical Informatics\\
  Vanderbilt University\\
  Nashville, TN 37232, USA\\
  \texttt{b.malin@vumc.org} \\
  \And
  Yevgeniy Vorobeychik \\
  Computer Science \& Engineering\\
  Washington University in St. Louis\\
  St. Louis, MO 63130, USA \\
  \texttt{yvorobeychik@wustl.edu} \\
}
\begin{document}

\maketitle

\begin{abstract}
The advent of online genomic data-sharing services has sought to enhance the accessibility of large genomic datasets by allowing queries about genetic variants, such as summary statistics, aiding care providers in distinguishing between spurious genomic variations and those with clinical significance. 
However, numerous studies have demonstrated that even sharing summary genomic information exposes individual members of such datasets to a significant privacy risk due to membership inference attacks.
While several approaches have emerged that reduce privacy risks by adding noise or reducing the amount of information shared, these typically assume non-adaptive attacks that use likelihood ratio test (LRT) statistics.
We propose a Bayesian game-theoretic framework for optimal privacy-utility tradeoff in the sharing of genomic summary statistics.
Our first contribution is to prove that a very general Bayesian attacker model that anchors our game-theoretic approach is more powerful than the conventional LRT-based threat models in that it induces worse privacy loss for the defender who is modeled as a von Neumann-Morgenstern (vNM) decision-maker.
We show this to be true even when the attacker uses a non-informative subjective prior.
Next, we present an analytically tractable approach to compare the Bayesian attacks with arbitrary subjective priors and the Neyman-Pearson optimal LRT attacks under the Gaussian mechanism common in differential privacy frameworks.
Finally, we propose an approach for approximating Bayes-Nash equilibria of the game using deep neural network generators to implicitly represent player mixed strategies.
Our experiments demonstrate that the proposed game-theoretic framework yields \emph{both} stronger attacks and stronger defense strategies than the state of the art.
\end{abstract}

%%%%%%%%%%%%%%%%%%
% \input{intro}

\section{Introduction}

In recent years, genomic sequencing for medical purposes as well as in direct-to-consumer services like ancestry discovery has become common.
This increase, however, has brought about substantial privacy concerns regarding the sharing of genomic data.
Due to these concerns, genomic information is typically shared publicly in summary form.
For example, the Global Alliance for Genomics and Health (GA4GH) introduced the genomic data-sharing standard named \textit{genomic data-sharing beacon protocol}, in which queries involving \emph{single-nucleotide variants (SNVs)} involve presence or absence of an alternate or minor allele in a dataset \cite{global2016federated}.
A common alternative that provides somewhat more information is to share \emph{summary statistics} for SNVs---that is, fraction of individuals in the dataset who have the alternate allele (also known as \emph{alternate allele frequencies (AAFs)}\cite{sankararaman2009genomic,macarthur2021workshop}.

Nevertheless, a series of studies have demonstrated that summary genomic data sharing is insufficient to protect the privacy of individuals in the dataset\cite{homer2008resolving,sankararaman2009genomic,shringarpure2015privacy,ayoz2021genome,bu2021haplotype,samani2015quantifying,von2019re}.
In particular, these studies have shown that it is possible for a malicious actor in possession of a target genome to make accurate inferences about membership of the associated individual in such a database using only the released summary information by leveraging likelihood ratio test statistics (LRT).
In response to such \emph{membership inference attacks (MIAs)}, several approaches have been proposed to improve the privacy of released genomic summary statistics, including differentially private mechanisms \cite{wang2014differentially,yilmaz2022genomic}, falsifying responses for rare alleles \cite{raisaro2017addressing,wan2017controlling}, optimization techniques \cite{venkatesaramani2023defending,venkatesaramani2023enabling}, and game-theoretic approaches \cite{wan2017expanding}.

Most existing approaches for defending against membership inference attacks (MIAs) assume a non-adaptive likelihood ratio test (LRT) attacker. An alternative to these methods is differential privacy (DP), which has the advantage of making few assumptions about the nature of the attack. However, the privacy parameter for DP is often selected for specific use cases without prior evaluation of its impact \cite{loukides2010anonymization,tramer2015differential}.
This approach can lead to coarse trade-offs between privacy and utility, potentially resulting in either insufficient protection or excessive sacrifice of data utility.
We address these limitations by modeling the privacy problem as a Bayesian game with a bounded rational attacker.
In our game model, the attacker's bounded rationality manifests as a \emph{subjective prior} about dataset membership, which may be inconsistent with the objective membership distribution.
We assume that the defender---such as a major medical center or a consortium as in the case of GA4GH---has better information about the membership prior, and in this regard has an informational advantage.
Moreover, we treat the actual data as the defender's \emph{private information}.
Thus, the defender's strategy maps a dataset to a probability distribution over summary statistics that are released, whereas the attacker's strategy maps summary data release to a decision about membership claims.
We assume that the defender incurs a cost that stems from any noise added in the released summary statistics, whereas the attacker's cost is linear in the number of positive membership claims.

Our main theoretical result compares our bounded-rational Bayesian attacker to the common LRT-based threat model in terms of the worst-case loss incurred by the defender. Specifically, we show that under mild assumptions, even a bounded-rational Bayesian attacker with a non-informative prior results in a higher worst-case loss for the defender than the best LRT attack, which is the uniformly most powerful (UMP) test for any significance level according to the Neyman-Pearson lemma \cite{neyman1933ix}.
Additionally, we provide a tractable method to quantitatively compare the Bayesian attack and the LRT attack when the Bayesian attacker has arbitrary subjective priors within a class of Gaussian defense mechanisms. We derive the necessary and sufficient conditions under which Bayesian attacks with arbitrary subjective priors outperform or underperform UMP LRT attacks in terms of the number of SNVs involved in the Beacon dataset.

Our second contribution is a novel approach for approximating a Bayes-Nash equilibrium of the game by representing the defender mixed strategies as a neural network generator, whereas the attacker strategies are represented as a neural network binary classifier (since the attacker can decide whether to attack each potential dataset member independently).
We then use gradient-based methods to effectively solve the game for any parameters trading off the privacy risks and utility loss due to added noise for the defender.
Our experiments on a genomic dataset provided by the 2016 iDASH privacy challenge demonstrate that the proposed approach significantly outperforms prior art in terms of \emph{both} efficacy of defense and efficacy of attacks.

\noindent\textbf{Related Work: }
In a seminal study in 2008, Homer et al. \cite{homer2008resolving} uncovered the possibility of inferring identities of individuals in genomic databases solely through the allele frequencies of the participants  (i.e., the summary statistics).
%Subsequent to the findings of \cite{homer2008resolving}, 
Subsequently, an extended spectrum of vulnerabilities for membership inference attacks has been explored \cite{erlich2014}, underscoring the significant privacy risks associated with the public release of summary statistics from genetic databases.
Our work considers threat models within the context of membership attacks aimed at summary data releases \cite{shringarpure2015privacy,raisaro2017addressing,von2019re,ayoz2020effect,ayoz2021genome}.
Typically, the threat models considered in such scenarios are framed as LRT attacks \cite{sankararaman2009genomic,shringarpure2015privacy,raisaro2017addressing,venkatesaramani2021defending,venkatesaramani2023enabling}.
Our approach to privacy protection aligns recent frameworks for optimizing privacy-utility tradeoff in the context of LRT attacks \cite{venkatesaramani2023enabling,venkatesaramani2021re}, as well as differential privacy \cite{cho2020privacy,dwork2006calibrating,dwork2006differential} which perturb responses (e.g., adding noise, flipping) before their release.
Venkatesaramani et al. \cite{venkatesaramani2023enabling} offer state-of-the-art approach for defending against membership inference on summary statistics releases.
However, they consider a bounded-rational attacker leveraging solely LRT for membership inference with either fixed and adaptive thresholds.
%They use deterministic perturbation and can efficiently manipulate the attacker's decisions given the privacy-utility trade-offs preferred by the data custodian.
However, we find that the resulting defense mechanisms against LRT attacks are not robust against even bounded-rational Bayesian attackers who can induce higher privacy loss for the defender under the same defense mechanism.
Finally, our work is closely related to game-theoretic approaches to privacy-preserving data-sharing \cite{do2017game,freudiger2009non,wan2015game}, including in the context of sharing genomic summary data \cite{wan2017expanding,wan2020game,wan2021using}.
Most such approaches, however, treated the problem as a complete information game, and address either a very different domain (e.g., location privacy) or very different privacy mechanisms (e.g., suppression rather than noise addition).

\section{Preliminaries}

Consider a universe (population) $U\equiv[K]$ of $K$ individuals containing their genomic information in the form of \emph{single-nucleotide variants (SNVs)}.
A particular SNV $j$ for an individual $k$ is commonly coded using a binary value which indicates whether or not this individual has an alternate allele for SNV $j$.
Thus, we represent the data of an individual $k$ in the population as a binary vector $d_k$, with entries corresponding to SNVs with $d_{kj}=0$ whenever $k$ has a minor allele $j$, and $d_{ij}=1$ otherwise.
Let $Q$ be the set of all SNVs under consideration.

The goal is to share \emph{summary statistics} for a private database $B$ of individuals.
To formalize, we assume that $B \subseteq U$.
Consequently, we can use a vector $b \in W \equiv \{0,1\}^K$ to denote a binary membership vector associated with $B$, where $b_k=1$ iff individual $k$ is in $B$.
Further, we assume that membership in $B$ (that is, the vector $b$) is generated stochastically according to a prior probability distribution $\theta \in\Delta(W)$ which is common knowledge (this assumption is important in formalizing the game-theoretic model).
Finally, summary statistics of $B$ constitute a vector of alternate allele frequencies (AAFs) for for set $Q$ of SNVs, denoted by $x=(x_{1},\dots, x_{|Q|})=f(b,d)\in \mathcal{X}\equiv[0,1]^{|Q|}$, where $f$ computes the fraction of individuals who do not have a minor allele for each SNV, that is, $x_j = f_j(b,d) \equiv \frac{1}{\sum_k b_k}\sum_k b_k d_{kj}$ for all $j$.
Throughout the paper, we assume SNVs are independent, ensured by a prefiltering protocol that retains a subset in linkage equilibrium \cite{kimura1965attainment}.

%Each individual $k\in U$ is associated with a binary vector $d_{k}=(d_{kj})_{j\in Q}$, where each $d_{kj}=0$ means that individual $k$ has an alternate allele for a \emph{single-nucleotide variant (SNV)} $j$ of $Q$ and $d_{kj}=1$ otherwise.

%Consider a universe $U\equiv[K]$ of $K$ individuals from which the subset of individuals is sampled to be in a database of the Beacon service, denoted by $B\subseteq U$.
%Let $Q$ denote the set of SNVs queried.
%Each individual $k\in U$ is associated with a binary vector $d_{k}=(d_{kj})_{j\in Q}$, where each $d_{kj}=0$ represents that individual $k$ has alternate allel for SNV $j$ of $Q$ and $d_{kj}=1$ otherwise.
%Let $b \in W \equiv \{0,1\}^K$ denote a binary membership vector associated with $B$.
%In particular, $b_{k}=1$ if and only if individual $k\in B$.
%Let $P$ be a probability distribution over $U$ that assigns individuals of $U$ to be in the dataset $B$; it induces a distribution over $b$, denoted by $\theta\in\Delta(W)$.
%The Beacon service releases summary statistics, a vector of alternate allele frequencies (AAFs) for set $Q$ of SNVs, dentoed by $x=(x_{1},\dots, x_{|Q|})=f(b,z)\in \mathcal{X}\equiv[0,1]^{|Q|}$.

Our primary concern is with \emph{membership inference attacks (MIAs)} that use summary statistics to determine whether particular individuals $k \in U$ are in the private data $B$.
Let $s=(s_{k})_{k\in U}$ be the vector of binary predictions about membership, where $s_k \in \{0,1\}$ denotes the prediction whether $k$ is in $B$ or not.
%The privacy loss regarding a membership vector $b$ is measured by the distance between $b$ and the membership that can be inferred from the summary statistics $x$.
%Let $s=(s_{i})_{i\in U}$, where each $s_{i}\in\{0,1\}$, denote the inference (i.e., inferred membership vector) by any attack model.
Given $b$ and $s$, we capture privacy loss of each individual $k$ by $v_{k}(s_{k},b_{k})$, which satisfies $v_{k}(1,1)>v_{k}(0,0)\geq v_{k}(1,0)=v_{k}(0,1)\geq 0$.
In this work, we consider a simple representative formulation $v_{k}(s_{k}, b_{k})\equiv s_{k}b_{k}$.
Hence, the total privacy loss is given by $v(s,b)\equiv\sum\nolimits_{k\in U} v_{k}(s_{k}, b_{k})=\sum_{k\in U} s_{k}b_{k}$. 
We refer to the data curator of the private dataset $B$ as the \emph{defender} who aims to protect the privacy of $k \in B$ against membership inference attacks. 
We refer to the agent who undertakes such inference as the \emph{attacker}.
The total privacy loss $v(s,b)$ is what the defender aims to minimize, while the attacker wishes to maximize.
%Any entity conducting membership inference attacks (MIAs) is considered the attacker.
%The privacy loss is a cost for the defender to minimize and a utility for the attacker to maximize.

\paragraph{Data Protection Model}

Our approach to privacy protection entails releasing perturbed summary statistics, which follows the common paradigms of randomized response (such as common differential privacy frameworks) in prior literature \cite{blair2015design,dwork2006calibrating,dwork2006differential,venkatesaramani2023enabling}.
%aligns with the frameworks of \cite{venkatesaramani2023defending,venkatesaramani2023enabling} by perturbing the responses before their release.
Specifically, the defender adds a noise $\delta=(\delta_{j})_{j\in Q}\in D$ to AAFs (summary statistics) $x$, so that $r=\mathtt{R}(x+\delta)\in [0,1]$ is released as the (perturbed) summary statistics, where typically $\mathtt{R}(x+\delta)\equiv \mathrm{Clip}_{[0,1]}(x+\delta)$ to ensure that $r\in[0,1]^{|Q|}$.
%$\mathtt{R}(x+\delta)\equiv \left(\min\left(\max\left(x_j+\delta_j,0\right),1\right)\right)_{j\in Q}$ ensures $r\in[0,1]^{|Q|}$.

%%%%%%%%%%%%%%%%%%%%%%%%%%%%%%%%%%%%%%%%%%%
\paragraph{Likelihood Ratio Test Attacks}

MIAs targeting genomic summary data releases commonly treat it as a hypothesis test problem \cite{sankararaman2009genomic,shringarpure2015privacy,raisaro2017addressing,venkatesaramani2021defending,venkatesaramani2023enabling}: $H^{k}_{0}: b_{k}=1$ versus $H^{k}_{1}: b_{k}=0$, for each $k\in U$.
In addition, $\bar{p}_{j}$ is the frequency of the alternate allele at $j$-th SNV in a reference population of individuals who are not in $B$.
%Let us first consider when there is no defender.
First, suppose that $\delta = 0$.
The attacker is assumed to possess external knowledge about the genome of individuals in $[K]$ in terms of $\bar{p}=(\bar{p}_{j})_{j\in Q}$ and $d=(d_{kj})_{k\in [K], j\in Q}$.
The log-likelihood ratio statistics (LRS) for each individual $k$ is given by \cite{sankararaman2009genomic}:
\begin{equation}\label{eq:standard_LRT}
    \begin{aligned}
        \ell(d_{k}, x) = \sum\nolimits_{j\in Q}\left( d_{kj} \log\frac{\bar{p}_{j}}{x_{j}} + (1-d_{kj}) \log\frac{1-\bar{p}_{j}}{1-x_{j}}\right).
    \end{aligned}
\end{equation}
An LRT attacker performs MIA by testing $H^{k}_{0}$ against $H^{k}_{1}$ using $\ell(d_{k}, x)$ for every $k\in[K]$. 
The null hypothesis $H^{k}_{0}$ is rejected in favor of $H^{k}_{1}$ if $\ell(d_{k}, x)\leq\tau$ for a rejection rule $\tau^{k}$, and $H^{k}_{0}$ is accepted if $\ell(d_{k}, x)>\tau$.
Let $P^{k}_{0}(\cdot)\equiv\textup{Pr}(\cdot|H^{k}_{0})$ and $P_{1}(\cdot)\equiv\textup{Pr}(\cdot|H^{k}_{1})$ denote the probability distributions associated with $H_0$ and $H_1$, respectively.
\begin{definition}[$(\alpha_{\tau},\beta_{\tau})$-LRT Attack]
    The attacker performs \textup{$(\alpha_{\tau},\beta_{\tau})$-LRT Attack} if $P^{k}_{0}(\ell(d_{k}, x)\leq\tau)= \alpha_{\tau}$ and $1-P^{k}_{1}(\ell(d_{k}, x)\leq\tau)=\beta_{\tau}$, for all $k\in U$, where $\alpha_{\tau}$ is the \textup{significance level} and $1-\beta_{\tau}$ is the \textup{power} of the test with the \textup{threshold} $\tau$.
\end{definition}
Define the trade-off function \cite{dong2021gaussian}, $T[P^{k}_{0}, P^{k}_{1}](\alpha) \equiv \inf_{\tau}\{\beta_{\tau}: \alpha_{\tau}\leq \alpha\}$.
By Neyman-Pearson lemma \cite{neyman1933ix}, the LRT test is the uniformly most powerful (UMP) test for a given significance level.
In particular, for a given significance level $\alpha_{\tau^{*}}$, there exists a LRT with threshold $\tau^{*}$ such that no other hypothesis test with $\alpha\leq \alpha_{\tau^{*}}$ can achieve a strictly smaller $\beta_{\tau}< \beta_{\tau^{*}}$. 
Hence, $T[P^{k}_{0}, P^{k}_{1}](\alpha_{\tau^{*}})=\beta_{\tau^{*}}$, for all $k\in U$.
We use $\alpha$-LRT to denote a UMP $(\alpha,\beta)$-LRT and interchangeably add and drop the notation of the corresponding threshold.
%The $\alpha$-LRT attacks consider an attacker a Frequentist decision-maker, who uses a fixed inference framework (i.e., threshold-based comparisons of the LRS) with the significance of the privacy inference results determined based on the long-run frequency of outcomes if the summary statistics are queried many times.

% The frameworks in \cite{venkatesaramani2023defending,venkatesaramani2023enabling} consider a deterministic defense against non-UMP LRT attacks, where the defender adopts non-randomized strategy to choose the noise, and the attacker uses a fixed threshold or a deterministic adaptive threshold that leverages the statistics of the reference population.
% In particular, when the summary statistics $x$ is perturbed and $r=\mathtt{R}(x+\delta)$ is released, the $(\alpha_{\tau}, \beta_{\tau})$-LRT attacker's LRS becomes $\{\ell(d_{k},r)\}_{k\in U}$. 

% \input{game}

\section{Genomic Privacy Protection as a Bayesian Game}

%%%%%%%%%%%%%%%%%%%%%%%%%%%%%%%
\subsection{Von Neumann-Morgenstern Defender}

The trade-off between privacy and utility (or data quality) inevitably constrains the defender's choice of noise to perturb the summary statistics.
We model the utility degradation due to the perturbation using the noise $\delta=(\delta_{j})_{j\in Q}\in D$.
Thus, when the true membership is $b$, the added noise is $\delta$, and $s$ is inferred, the cost of the defender is given by
\begin{equation*}
    u_{D}(\delta, b, s)\equiv v(s,b) + \sum\nolimits_{j\in Q}\kappa_{j}|\delta_{j}|,
\end{equation*}
where $\kappa_{j}\geq 0$ represents the defender's preference over the privacy-utility trade-off given the summary statistics of the $j$th SNVs, for all $j\in Q$.

In this work, we consider when the defender is a Von Neumann-Morgenstern (vNM) decision-maker, who deals with privacy risks by acting to minimize the expected privacy loss.
In particular, let $g_{D}:W\mapsto \Delta(D)$ denote the defense mechanism, so that $g_{D}(\delta|b)$ specifies the probability of adding a noise $\delta$ to obtain $r=\mathtt{R}(x+\delta)$.
The probability distribution $\rho_{D}(\cdot|b)\in \Delta([0,1]^{|Q|})$ of the random variable $\tilde{r}$ on $[0,1]^{|Q|}$ is uniquely determined by $g_{D}$; i.e., 
$\tilde{r}=\mathtt{R}(f(b,d)+\tilde{\delta})\sim\rho_{D}(\cdot|b)$ if and only if $\tilde{\delta}\sim g_{D}(\cdot|b)$.
The randomness of $g_{D}$ leads to an expected utility cost denoted by $\mathbb{E}[\|\delta\||g_{D}]$.
Suppose $g_{D}$ is a Laplace distribution centered at zero, i.e., $\textup{Laplace}(0, \mathtt{scale})$. 
Then, the defense mechanism $g_{D}$ is $\epsilon$-differentially private \cite{dwork2006differential}, where $\epsilon=\frac{\mathtt{sensitivity}}{\mathtt{scale}}$ with $\mathtt{sensitivity}$ as the \textit{sensitivity} of $f$.
It is easy to verify that $\mathbb{E}[\|\delta\||g_{D}]=\mathtt{scale}=\frac{\mathtt{sensitivity}}{\epsilon}$.
Let $\mathtt{sens}$. 
Thus, in the Laplace defense mechanism, our formulation of the utility loss reflects the privacy-utility trade-off of the standard differential privacy: the decreasing (resp. increasing) $\epsilon$ leads to the increasing (resp. decreasing) of utility loss.
% In this work, we consider when the defender is a Von Neumann-Morgenstern (vNM) decision-maker, who deals with privacy risks by acting to minimize the expected privacy loss.
% In particular, let $g_{D}:W\mapsto \Delta(D)$ denote the defense mechanism, so that $g_{D}(\delta|b)$ specifies the probability of adding a noise $\delta$ to obtain $r=\mathtt{R}(x+\delta)$.
% The probability distribution $\rho_{D}(\cdot|b)\in \Delta([0,1]^{|Q|})$ of the random variable $\tilde{r}$ on $[0,1]^{|Q|}$ is uniquely determined by $g_{D}$; i.e., 
% $\tilde{r}=\mathtt{R}(f(b,v)+\tilde{\delta})\sim\rho_{D}(\cdot|b)$ if and only if $\tilde{\delta}\sim g_{D}(\cdot|b)$.

From a vNM defender's perspective, the expected privacy losses under an $\alpha$-LRT attack, without and with defense $g_{D}$, respectively, are given by
\begin{equation*}
    \begin{aligned}
        &L^{o}(\tau^{o}, \alpha)\equiv\mathbb{E}\left[v(\tilde{s}, \tilde{b})\middle|\alpha\right]= \sum\nolimits_{k}P^{k}_{1}\left[y_{k}(f(b,z),\tau^{o})=1\right] \theta(b_{k}=1)=\sum\nolimits_{k}(1-\beta^{0}_{\tau})\theta(b_{k}=1),\\
&L(g_{D},\tau^{o},\alpha)\equiv\mathbb{E}\left[v(\tilde{s}, \tilde{b})\middle|g_{D},\tau^{o},\alpha\right]= \sum\nolimits_{k}P^{k}_{1}\left[y_{k}(r,\tau^{o})=1\middle|g_{D}\right] \theta(b_{k}=1),
    \end{aligned}
\end{equation*}
where $y_{k}(x,\tau^{o})\equiv \mathbf{1}\left\{\ell(d_{k}, x)\geq \tau^{o}\right\}$, $\mathbf{1}\{\cdot\}$ is the indicator function,  $P^{k}_{1}[y_{k}(r,\tau^{o})=1|g_{D}]\equiv \int_{r} \mathbf{1}\left\{y_{k}(r,\tau^{o})=1\right\} \rho_{D}(r|b)dr$, and $\tau^{o}$ is the threshold associated with the $\alpha$-LRT.

\paragraph{Fixed-Threshold LRT Attack \cite{sankararaman2009genomic,shringarpure2015privacy,venkatesaramani2021defending,venkatesaramani2023enabling}}
An attacker can be seen as \textit{naive or credulous} if the attacker performs MIA without considering the existence of the defense protecting the privacy.
The naive attacker chooses a threshold $\tau^{o}$ that balances the Type-I and Type-II errors that lead to a UMP $\alpha$-LRT test when there is no defense.
The approximation of a UMP $\alpha$-LRT test may be obtained by simulating Beacons on datasets that are publicly available or are synthesized based on the knowledge of AAFs \cite{venkatesaramani2023enabling}.
The defender's optimal strategy against the naive $\alpha_{\tau^{o}}$-LRT attack solves the following problem 
\begin{equation}\tag{\texttt{NaiveLRT}}\label{eq:nariveLRT}
    \min\nolimits_{g_{D}} L(g_{D}, \tau^{o},\alpha_{\tau^{o}}) + \kappa U(g_{D}, \tau^{o},\alpha_{\tau^{o}}),
\end{equation}
where $U(g_{D}, \tau^{o},\alpha_{\tau^{o}})\equiv \mathbb{E}\left[\|\tilde{\delta}\|\middle|g_{D}, \tau^{o}, \alpha_{\tau^{o}}\right]$ is induced expected utility loss.

Let $\beta^{k}(\tau, g_{D}, \alpha)\equiv 1-P^{k}_{1}[y_{k}(r,\tau)=1|g_{D}]$ denote the actual Type-II error of the naive (UMP) $\alpha$-LRT attack with $\tau$ under the defense mechanism $g_{D}$.
Thus, the defender can decrease the (expected) privacy loss under the naive $\alpha$-LRT attack by choosing $g_{D}$ to increase $\beta^{k}(\tau, g_{D}, \alpha)$ for all $k\in U$.
Obviously, the efficacy of $g^{\dagger}_{D}$ that solves (\ref{eq:nariveLRT}) to attain a privacy loss $L(g^{\dagger}_{D}, \tau, \alpha)$  requires the defense implementation to be \textit{stealthy}.

\paragraph{Adaptive-Threshold LRT Attack \cite{venkatesaramani2021defending,venkatesaramani2023enabling}}
The attacker using \textit{adaptive-threshold LRT} attacks is aware of the implementation of the defense. 
The attacker attempts to separate $U$ from those who are in the reference population $\bar{U}$ (individuals not in $U$).
Let $\bar{U}^{(N)}\subset \bar{U}$ denote the set of $N$ individuals in $\bar{U}$, who have the lowest LRSs. 
The \textit{adaptive threshold} is given by $\tau^{(N)}(r)=\frac{1}{N}\sum_{i\in \bar{U}^{{N}}} \ell(d_{i}, r)$. 
Hence, the hypothesis $H_{0}$ is rejected if $\ell(d_{k}, r)\leq \tau^{(N)}(r)$.
Thus, the defender's problem 
\begin{equation}\tag{\texttt{AdaptLRT}}\label{eq:adaptiveLRT}
    \min\nolimits_{g_{D}} L(g_{D}, \tau^{(N)}(r),\alpha_{\tau^{(N)}(r)}) + \kappa U(g_{D}, \tau^{(N)}(r),\alpha_{\tau^{(N)}(r)}),
\end{equation}
where $\alpha_{\tau^{(N)}(r)}$ is the Type-I error attained by the using $\tau^{(N)}(r)$.

\paragraph{Optimal LRT Attack}
Denote by $P^{k}_{0}(g_{D})=P^{k}_{0}[\cdot|g_{D}]$ and $P^{k}_{1}(g_{D})=P^{k}_{1}[\cdot|g_{D}]$ the probability distributions associated with $g_{D}$ under $H^{k}_{0}$ and $H^{k}_{1}$, respectively. 
Then, the \textit{worse-case privacy loss} (WCPL) of the defender is attained when the attacker's hypothesis test achieves $\beta^{k}(\tau^{*}, g_{D}, \alpha)=T[P^{k}_{0}(g_{D}), P^{k}_{1}(g_{D})](\alpha)$ for some threshold $\tau^{*}$, which is a UMP test under the defense $g_{D}$.
We refer to such attack models as \textit{optimal $\alpha$-LRT attacks}.
Thus, the defender's optimal strategy against the optimal $\alpha$-LRT attack solves the following optimization problem:
\begin{equation}\tag{\texttt{OptLRT}}\label{eq:optimalLRT}
    \begin{aligned}
        \min\nolimits_{g_{D}} L(g_{D}, \tau^{*}, \alpha)+ \kappa U(g_{D}, \tau^{*},\alpha), \textup{ s.t. } \beta^{k}(\tau^{*}, g_{D}, \alpha)=T[P^{k}_{0}(g_{D}), P^{k}_{1}(g_{D})](\alpha).
    \end{aligned}
\end{equation}
By the Neyman-Pearson lemma, $\alpha$-LRT with LR statistics $\ell(d_{k},r;g_{D})\equiv\sum_{j\in Q}\frac{\rho_{D}(r|b_{k}=0,b_{-k})}{\rho_{D}(r|b_{k}=1,b_{-k})}$ for all $k\in U$ is optimal one that attains $\beta^{k}(\tau^{*}, g_{D}, \alpha)=T[P^{k}_{0}(g_{D}), P^{k}_{1}(g_{D})](\alpha)$.
In addition, it is not hard to verify that the $g_{D}$ obtained by solving (\ref{eq:optimalLRT}) is robust against the adaptive-threshold LRT attack.

%%%%%%%%%%%%%%%%%%%%%%%%%%%%%%%%%%%
\subsection{\texorpdfstring{$\sigma$}{Lg}-Bayesian Attacks}

In this section, we consider a strategic Bayesian attacker, who is also a vNM decision-maker.
The external knowledge of the attacker is captured by the subjective prior beliefs about $b\in W$, denoted by $\sigma\in \Delta(W)$.
We refer to such attacks as $\sigma$-Bayesian attacks.
The attacker launches MIA, aiming to infer $b\in W$ by obtaining an inference outcome $s\in W$.
The defender's privacy loss $v(s,b)$ is a privacy value for the attacker. 
The attacker may also encounter trade-offs between extracting as more amount of privacy and the operational costs induced by any post-processing to extract the value of private membership information, such as promoting personalized medication and marketing.
Hence, the final conclusions $s_{k}=1$ indicates two key outcomes: \textit{(i)} the individual $k$ is identified as a member of the set $B$, and \textit{(ii)} the attacker proceeds with the post-processing action on individual $k$.
The post-processing operation cost is then given by $c_{A}(s)=\sum_{k\in U} s_{k}$.
Therefore, with such a trade-off between the privacy value and the operation costs, the attack can be seen as a constrained membership inference.
When the true membership is $b$ and the attacker's inference is $s$, the cost function of the attacker is given by
\begin{equation*}
    u_A(s,b) = -v(s,b) + \gamma c_A(s),
\end{equation*}
where $\gamma\geq 0$ captures the attacker's preference over the trade-off of the privacy value and the operational cost.
The attacker adopts a mixed strategy, denoted by $h_{A}:\Gamma\mapsto\Delta(W)$, which specifies a probability distribution over $W$ based on an observation $r$ (i.e., perturbed summary statistics).

When the defender adopts $g_{D}$ (which induces $\rho_{D}$) and the attacker adopts $h_{A}$, the \textit{ex-ante} expected costs of the defender and the attacker are given by, respectively,
\begin{equation}
    \begin{aligned}
        &U_{D}\left(g_{D}, h_{A};\theta\right)\equiv \sum\nolimits_{s,b}\int\nolimits_{r} u_{D}(r, s,b)h_{A}(s|r)\rho_{D}(r|b)dr\theta(b),\\
        &U_{A}\left(g_{D}, h_{A}; \sigma\right)\equiv \sum\nolimits_{s,b}\int_{r}u_{A}(s,b)h_{A}(s|r)\rho_{D}(r|b)dr \sigma(b),
    \end{aligned}
\end{equation}
where $\theta\in\Delta(W)$ and $\sigma\in \Delta(W)$ are the true prior and the attackers' subjective prior, respectively.
Furthermore, any observation $r\in\Gamma$ can be used as evidential data for the attacker to update his subjective prior $\sigma$ to form a posterior belief according to Bayes' law: $\mu_{\sigma}(b|r) = \rho_{D}(r|b) \sigma(b)/\sum\nolimits_{b'\in W} \rho_{D}(r|b')\sigma(b')$,
which leads to more informed decision-making of the attacker to choose $h_{A}$ that minimizes the \textit{interim} expected cost:
\begin{equation}
    V_{A}(g_{D},h_{A}, r;\sigma)\equiv \sum\nolimits_{s,b}  u_{A}(s,b)h_{A}(s|r) \mu_{\sigma}(b|r).
\end{equation}
We let $\mathcal{BR}^{\sigma}[g_{D}]\equiv\{h_{A}|h_{A}\in\argmax\nolimits_{h_{A}} U_{A}(g_{D}, h_{A};\sigma)\}$ be the set of the attacker's ex-ante best responses to $g_{D}$ given $\sigma$, and let $\mathcal{BR}^{\sigma}_{\Gamma}[g_{D}]\equiv\{h_{A}|h_{A}\in\argmax\nolimits_{h_{A}} V_{A}(g_{D}, h_{A}, r;\sigma) \forall r\in\Gamma\}$ be the set of the attacker's interim best responses to $g_{D}$ given $\sigma$.

The interactions between the defender and the attacker can be modeled as a Bayesian game.
\textit{Bayesian Nash equilibrium} and \textit{perfect Bayesian Nash equilibrium} are two common equilibrium solution concepts for Bayesian games.

\begin{definition}[$\sigma$-Bayesian Nash Equilibrium ($\sigma$-BNE)]
    A profile $<g^{*}_{D}, h^{*}_{A}\}>$ is a \textup{$\sigma$-Bayesian Nash Equilibrium } if 
    \[
    \begin{aligned}
        g^{*}_{D}\in \argmax\nolimits_{g_{D}} U_{D}(g_{D}, h^{*}_{A};\theta) \textup{ and } h^{*}_{A}\in \mathcal{BR}^{\sigma}[g^{*}_{D}].
    \end{aligned}
    \]
\end{definition}

\begin{definition}[$\sigma$-Perfect Bayesian Nash Equilibrium ($\sigma$-PBNE)]
    A profile $<g^{*}_{D}, h^{*}_{D}>$ with posterior belief $\mu_{\sigma}$ is a \textup{$\sigma$-Perfect Bayesian Nash Equilibrium} if (i) $\mu_{\sigma}$ is updated according to Bayes' law, and (ii)
    \[
    \begin{aligned}
        g^{*}_{D}\in \argmax\nolimits_{g_{D}} U_{D}(g_{D}, h^{*}_{A};\theta) \textup{ and } h^{*}_{A}\in \mathcal{BR}^{\sigma}_{\Gamma}[g^{*}_{D}].
    \end{aligned}
    \]
\end{definition}
Since every PBNE is a BNE \cite{mas1995microeconomic}, it is not hard to obtain $\mathcal{BR}^{\sigma}_{\Gamma}[g_{D}]\subseteq \mathcal{BR}^{\sigma}[g_{D}]$, for any $g_{D}$.

\subsection{BNE: General-Sum GAN}
We train the BNE strategies using a GAN-like fashion termed general-sum GAN.
In particular, the defender's strategy is represented by a neural network \textit{generator} $G_{\lambda_D}(b,\nu)$ with parameter $\lambda_{D}$, which takes the true membership vector $b$ and an auxiliary vector $\nu$ as inputs and outputs a noise vector $\delta$ to perturb the summary statistics.
Here, we assume the vector $\nu$ with dimension $q$ has each coordinate uniform in $[0,1]$; $\tilde{\nu}\sim \mathcal{U}$
The attacker's strategy is represented by a neural network \textit{discriminator} $H_{\lambda_{A}}(r)$ with parameter $\lambda_{A}$, which takes the $r=\mathtt{R}(x+G_{\lambda_{A}}(b,\nu))$ as input and outputs an inference result $s$.
For simplicity, let $r(G_{\lambda_{A}}(b,\nu))=\mathtt{R}(x+G_{\lambda_{A}}(b,\nu))$
Define
\begin{align*}
    &\begin{aligned}
        U_{D}\left(G_{\lambda_{D}}, H_{\lambda_{A}}\right)\equiv \mathbb{E}^{\tilde{\nu}\sim \mathcal{U}}_{\tilde{b}\sim q}\left[\|G_{\lambda_{D}}(\tilde{b},\tilde{\nu})\|\right] + \kappa \mathbb{E}^{\tilde{\nu}\sim \mathcal{U}}_{\tilde{b}\sim q}\left[v\left(H_{\lambda_{A}}\left(r\left(G_{\lambda_{D}}(\tilde{b}, \tilde{\nu}) \right)\right), \tilde{b}\right)\right],
    \end{aligned}\\
    & \begin{aligned}
        &U^{\sigma}_{A}(G^{*}_{\lambda^{*}_{D}}, H^{*}_{\lambda^{*}_{A}})\\
        &\equiv \mathbb{E}^{\tilde{\nu}\sim \mathcal{U}}_{\tilde{b}\sim \sigma}\left[-v\left(H_{\lambda_{A}}\left(r\left(G_{\lambda_{D}}(\tilde{b}, \tilde{\nu}) \right)\right), \tilde{b}\right)\right]+ \gamma \mathbb{E}^{\tilde{\nu}\sim \mathcal{U}}_{\tilde{b}\sim \sigma}\left[c_{A}\left(H_{\lambda_{A}}\left(r\left(G_{\lambda_{D}}(\tilde{b}, \tilde{\nu}) \right)\right)\right)\right].
    \end{aligned}
\end{align*}
Thus, $G_{\lambda_{D}}$ and $H_{\lambda_{A}}$ play the following game:
\begin{equation}\label{eq:NN_BNE}
    \begin{aligned}
       &G^{*}_{\lambda^{*}_{D}}\in\arg\min\nolimits_{G_{\lambda_{D}}} U_{D}\left(G_{\lambda_{D}}, H^{*}_{\lambda^{*}_{A}}\right), 
       H^{*}_{\lambda^{*}_{A}}\in \arg\min\nolimits_{H_{\lambda_{A}}}U^{\sigma}_{A}(G^{*}_{\lambda^{*}_{D}}, H_{\lambda_{A}}).
    \end{aligned}
\end{equation}
The output of the neural network $H_{\lambda_{A}}$ is a vector of real values in $(0,1)$ due to sigmoid activation, which serves as a probabilistic estimate based on learned features of the noise-perturbed observation.
This output should be distinguished from a mixed-strategy probability of choosing $s_{k}=1$ for $k\in U$, derived not solely from observation data but also from considerations of minimizing the attacker's expected costs.
Let $p=(p_{k})_{k\in U}$ with each $p_{k}\in(0,1)$ denote the output of $H_{\lambda_{A}}$.
To approximate the original cost functions, we use the proxies for $v$ and $c_{A}$ given by $v(p,b)=\sum\nolimits_{k\in U}b_{k}p_{k} \textup{ and } c_{A}(p) = \sum\nolimits_{k\in U}p_{k}$, which are differentiable functions of each $p_{k}$.
Due to the linear transformations and the sigmoid function, each $p_{k}$ is a differentiable function of the neural networks' parameters.
Thus, the proxy cost functions are also differentiable with respect to these parameters.

\section{The Ordering of Privacy Robustness}

In this section, we compare the optimal $\alpha$-LRT attacks and the BNE $\sigma$-Bayesian attacks regarding the worst-case privacy loss (WCPL) from the vNM defender's perspective.
The degree of privacy risk of a $g_{D}$ is measured in terms of the worst-case privacy loss (WCPL) perceived by the vNM defender.
Under the $\alpha$-LRT attack, the WCPL $L(g_{D}, \tau^{*}, \alpha)$ is attained when the attack is Neyman-Pearson optimal.
Under the $\sigma$-Bayesian attack, the WCPL is given by $ L^{\sigma}(g_{D})\equiv \max\nolimits_{h_{A}\in\mathcal{BR}^{\sigma}[g_{D}]}L(g_{D}, h_{A})$,
%
% \[
% \begin{aligned}
%     L^{\sigma}(g_{D})\equiv \max\nolimits_{h_{A}\in\mathcal{BR}^{\sigma}[g_{D}]}L(g_{D}, h_{A}),
% \end{aligned}
% \]
% %
where $L(g_{D}, h_{A})\equiv \sum_{b,s}\int_{r} v(s,b)h_{A}(s|r)\rho_{D}(r|b)dr q(b)$,
in which $q$ is the objective prior distribution of the membership.
In addition to the defender's strategy $g_{D}$, the $\sigma$-Bayesian attacker's subjective prior also impacts the privacy loss.
Since we do not consider parameterization or reparameterization for the priors over $W$, we consider the uniform distribution over $W$ as the \textit{non-informative prior} according to Laplace \cite{fienberg2006did}.
We restrict attention to subjective priors classified according to their informativeness with respect to the true prior $q$.

\begin{definition}[Aligned and Misaligned $\sigma$]\label{def:informative}
    Fix any $g_{D}$. We say that $\sigma$ is \textup{(weakly) informative} if $L(g_{D}, h^{\sigma}_{A})\leq L(g_{D}, h^{q}_{A})$,
    for $h^{\sigma}_{A}\in \mathcal{BR}^{\sigma}[g_{D}]$ and $h^{q}_{A}\in \mathcal{BR}^{q}[g_{D}]$.
    We say that $\sigma$ is \textup{non-informative} if it is a \textup{uniform distribution} over $W$.
    We refer to $\sigma$ as \textup{aligned (with $q$)} if it is either informative or non-informative, and as \textup{misaligned} if it is neither informative nor non-informative.
    In addition, $\sigma$ is \textup{strictly informative} if the strict inequality replaces the weak inequality.
\end{definition}

% %
% \begin{definition}[Aligned and Misaligned $\sigma$]\label{def:informative}
% %
% Fix any $g_{D}$. We say that $\sigma$ is \textup{(weakly) informative} if $L(g_{D}, h^{\sigma}_{A})\geq L(g_{D}, h^{q}_{A})$, for $h^{\sigma}_{A}\in \mathcal{BR}^{\sigma}[g_{D}]$ and $h^{q}_{A}\in \mathcal{BR}^{q}[g_{D}]$.
%     We say that $\sigma$ is \textup{non-informative} if it is a \textup{uniform distribution} over $W$.
%     We refer to $\sigma$ as \textup{aligned (with $q$)} if it is either informative or non-informative, and as \textup{misaligned} if it is neither informative nor non-informative.
%     In addition, $\sigma$ is \textup{strictly informative} if the strict inequality replaces the weak inequality.
% \end{definition}

Thus, a $\sigma$ is aligned if $\sigma$ leads to best responses such that the attacker's true ex-ante expected cost is no larger than the true prior $q$ does.
The following theorem summarizes our first main result.

\begin{restatable}{theorem}{TheoremBayesianLRT}\label{thm:Bayesian_better_LRT}
Fix any $g_{D}$ and $\alpha$. 
Suppose $\sigma\in\Delta(W)$ is an \textup{aligned prior}. Then, it holds
\[
L^{\sigma}(g_{D})\geq L(g_{D}, \tau^{*}, \alpha)\geq L(g_{D}, \tau^{(N)}(\tilde{r}),\alpha_{\tau^{(N)}(\tilde{r})})\geq L(g_{D}, \tau^{o},\alpha_{\tau^{o}}),
\]
where $\mathbb{E}[\alpha_{\tau^{(N)}(\tilde{r})}]=\alpha$. 
Suppose $\sigma\in\Delta(W)$ is \textup{strict aligned}. Then, $L^{\sigma}(g_{D})> L(g_{D}, \tau^{*}, \alpha)$.
\end{restatable}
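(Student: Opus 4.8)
The plan is to establish the four-term chain by treating the three inequalities separately and then upgrading the leftmost one to a strict inequality under strict alignment. Throughout I would exploit the fact that, by SNV independence and the additive form $v(s,b)=\sum_k s_k b_k$, every quantity decomposes across individuals $k\in U$, so it suffices to compare per-individual detection probabilities (powers) weighted by $\theta(b_k=1)$. The unifying device is the trade-off function $T\equiv T[P_0^k(g_D),P_1^k(g_D)]$: each candidate attack induces, for every $k$, an operating point $(\alpha^k,1-\beta^k)$, and the Neyman--Pearson lemma says that any likelihood-ratio threshold rule sits exactly on the curve $\beta^k=T(\alpha^k)$, while every other test lies weakly above it.

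For the two rightmost inequalities I would compare the LRT attacks to the optimal $\alpha$-LRT, which by construction uses the true defended likelihood ratio $\ell(d_k,r;g_D)$ and is therefore UMP at significance $\alpha$, attaining $\beta^k=T(\alpha)$. The adaptive-threshold attack uses the undefended statistic $\ell(d_k,r)$ with a data-dependent threshold $\tau^{(N)}(r)$ whose significance level is random with mean $\alpha$; since at each realization it is some test under $g_D$ it lies above the trade-off curve, and convexity of $T$ together with Jensen's inequality gives $\mathbb{E}[\beta^k(\tau^{(N)}(\tilde r))]\ge \mathbb{E}[T(\alpha_{\tau^{(N)}})]\ge T(\alpha)$, i.e.\ weakly smaller power, yielding $L(g_D,\tau^*,\alpha)\ge L(g_D,\tau^{(N)}(\tilde r),\alpha_{\tau^{(N)}(\tilde r)})$. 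For adaptive $\ge$ naive I would argue that the fixed $\tau^o$ is calibrated to the \emph{undefended} laws and is hence mis-calibrated once $g_D$ shifts the null and alternative, whereas $\tau^{(N)}(r)$ recalibrates against the reference population under the actual release; formalizing this shows the naive test's operating point is weakly dominated by the adaptive one, giving the last inequality.

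The crux is the leftmost inequality $L^\sigma(g_D)\ge L(g_D,\tau^*,\alpha)$. First I would characterize the Bayesian best response: because $u_A(s,b)=\sum_k s_k(\gamma-b_k)$ separates across $k$, the interim optimizer sets $s_k=1$ iff $\mu_\sigma(b_k=1\mid r)\ge\gamma$, which by Bayes' rule is exactly a likelihood-ratio threshold test on $\rho_D(r\mid b_k=1)/\rho_D(r\mid b_k=0)$ with threshold fixed by $\gamma$ and the prior odds. Hence the Bayesian attack is itself an LR test and lands on the trade-off curve at some point $(\alpha_\sigma^k,1-T(\alpha_\sigma^k))$. It then remains to show this operating point yields true privacy loss at least $\sum_k(1-T(\alpha))\theta(b_k=1)$. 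I would treat the non-informative (uniform) prior as the base case, where the threshold reduces to a pure-likelihood rule, and show directly that its induced power dominates that of the significance-$\alpha$ LRT; the alignment hypothesis (Definition~\ref{def:informative}) then transfers this domination to every aligned $\sigma$, since such $\sigma$ produce best responses whose true-prior performance is controlled relative to the $q$-best response, and $L^\sigma(g_D)$ selects the defender-worst element of $\mathcal{BR}^\sigma[g_D]$.

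I expect this operating-point domination to be the main obstacle, precisely because the LRT significance level $\alpha$ is arbitrary: the argument must use that the Bayesian rule, being unconstrained by any Type-I budget and driven only by the cost trade-off $\gamma$, selects an operating point that the alignment condition pins at or beyond the LRT's, rather than appealing to a naive ``larger $\alpha$ gives larger power'' comparison that would fail as $\alpha\to1$. Finally, for the strict statement I would track where slack enters: under strict alignment the defining inequality in Definition~\ref{def:informative} is strict, and provided $T$ is strictly decreasing at the relevant operating point (a non-degeneracy of the defended likelihood ratio, i.e.\ $g_D$ does not make $P_0^k(g_D)$ and $P_1^k(g_D)$ coincide on a positive-measure event), the weak domination becomes strict for at least one $k$ with $\theta(b_k=1)>0$, giving $L^\sigma(g_D)>L(g_D,\tau^*,\alpha)$.
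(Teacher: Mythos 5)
Your proposal's outline of the two rightmost inequalities (Neyman--Pearson plus Jensen/convexity for the random threshold, calibration for naive vs.\ adaptive) is reasonable, and your characterization of the interim Bayesian best response as the rule $s_{k}=1$ iff $\mu_{\sigma}(b_{k}=1\mid r)\geq\gamma$ matches what the paper's Lemma~\ref{lemma:MuAlsoBestGamma} exploits. But the crux --- $L^{\sigma}(g_{D})\geq L(g_{D},\tau^{*},\alpha)$ for \emph{every} $\alpha$ --- is never proved in your proposal; you name it as ``the main obstacle,'' describe what the argument ``must use,'' and then assert that alignment ``pins'' the Bayesian operating point at or beyond the LRT's. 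No such mechanism exists in Definition~\ref{def:informative}: alignment compares the loss of $\sigma$-best responses with that of $q$-best responses, and says nothing that ties either to an arbitrary significance level $\alpha$. Worse, your own framing shows why the route you chose cannot close the gap: the Bayesian best response sits at a \emph{single} $\gamma$-determined point on the trade-off curve, while the $\alpha$-LRT sweeps the entire curve as $\alpha$ varies; since the defender's loss $v(s,b)=\sum_{k}s_{k}b_{k}$ rewards only true positives, relaxing the Type-I budget pushes the LRT loss toward its maximum $\sum_{k}\theta(b_{k}=1)$, which no fixed operating point can weakly dominate. A point-versus-curve comparison is therefore structurally incapable of delivering the uniform-in-$\alpha$ statement; some other device is needed.

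The paper avoids this entirely by a different two-step route. First, Proposition~\ref{thm:attacker_mirror_mu} (proved via Blackwell's comparison-of-experiments/garbling theorem, Lemmas~\ref{lemma:attacker_preference_inclusive} and~\ref{lemma:blackwell_garbling}) identifies the worst-case best response as the posterior-\emph{mirroring} randomized strategy $h^{\mu}_{A}(s|r)=\mu_{\sigma}(b|r)\mathbf{1}\{s=b\}$, so that $L^{\sigma}(g_{D})=Z(g_{D},\sigma)$ in closed form. Second, Lemma~\ref{lemma:LRTlessTruePrior} argues by \emph{contradiction}: if $L(g_{D},\tau^{*},\alpha)>Z(g_{D},q)$, then the deterministic strategy $h^{\dagger}_{A}$ that replicates the $\alpha$-LRT decisions would (for $\sigma=q$) be a best response whose loss exceeds the maximum over $\mathcal{BR}^{q}[g_{D}]$ --- impossible; informative and non-informative $\sigma$ are then handled through the definition and a parallel argument, and the last two inequalities follow from Neyman--Pearson and citation to prior work. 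Note the asymmetry: the paper's argument never compares operating points; it compares an arbitrary candidate attack against a \emph{characterized maximum}. Your proposal instead lower-bounds $L^{\sigma}(g_{D})$ by one particular best response and must then do the comparison by hand, which is precisely the unproved step. A secondary flaw: your claim that the posterior-threshold rule ``is exactly a likelihood-ratio threshold test'' on $\rho_{D}(r\mid b_{k}=1)/\rho_{D}(r\mid b_{k}=0)$, hence lies \emph{on} the per-individual trade-off curve, holds only when $\mu_{\sigma}(b_{k}=1\mid r)$ is monotone in that ratio --- true for product priors, but false for general $\sigma\in\Delta(W)$, where the marginal posterior is a ratio of mixtures over $b_{-k}$ and the Neyman--Pearson optimality anchor disappears.
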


Theorem \ref{thm:Bayesian_better_LRT} demonstrates an ordering of WCPL perceived by the vNM defender encountering $\sigma$-Bayesian, optimal $\alpha$-LRT, adaptive $\alpha$-LRT, and naive $\alpha$-LRT attackers.
The WCPL of the vNM defender induced by $\sigma$-Bayesian attack is the worst among these four types of attacks.
However, the ordering (specifically, $L^{\sigma}(g_{D})$ versus $L(g_{D}, \tau^{*}, \alpha)$) does not in general hold when the attacker's subjective prior is misaligned.
We establish a set of necessary and sufficient conditions for determining the order of $L^{\sigma}(g_{D})$ versus $L(g_{D}, \tau^{*}, \alpha)$ under the framework of Gaussian $g_{D}$, where the subjective prior is arbitrary.

Consider any simple binary hypothesis testing problem: $\widehat{H}_{0}$ versus $\widehat{H}_{1}$.
Let two normal distributions $\mathcal{N}(0,1)$ and $\mathcal{N}(\widehat{\mathtt{M}}_{j},1)$ over $\mathcal{Y}_{j}$, respectively, be the corresponding probability distributions of generating $y_{j}$ from $\mathcal{Y}_{j}\subseteq \mathbb{R}$ under $\widehat{H}_{0}$ and $\widehat{H}_{1}$, for all $j=1,\dots,m$ with $1\leq m< \infty$.
By $\rho_{j}(\cdot|\widehat{H}_{i})\in\Delta(\mathcal{Y}_{j})$ we denote the associated density function under $\widehat{H}_{i}$ for $i\in\{0,1\}$.
For any $y=(y_{j})_{j\in Q}$, the LR statistics is given by $\mathcal{L}(y)\equiv \sum\nolimits_{j\in Q}\log\left(\rho_{j}(y_{j}|\widehat{H}_{0})/\rho_{j}(y_{j}|\widehat{H}_{1})\right)$.
For any Type-I error rate $\widehat{\alpha}$, let $\widehat{\beta}$ denote the minimum Type-II error rate that can be obtained for this hypothesis testing problem.
\begin{lemma}\label{lemma:normal_relationship}
    Let $\mathcal{F}\left(\alpha, \beta\right)\equiv\frac{\left(z_{\alpha}+z_{\beta}\right)^{2}\overline{\textup{V}}}{4\overline{\textup{M}}^{2}}$, where $z_{a}$ is the $100(1-a)$th percentile of the standard normal distribution, $\overline{\textup{M}}=\frac{1}{2}\sum_{j\in Q}\widehat{\mathtt{M}}^{2}_{j}$, and $\overline{\textup{V}}=\sum_{j\in Q}\widehat{\mathtt{M}}^{2}_{j}$.
    Then, the following holds. \textup{(i)} $\mathcal{F}\left(\widehat{\alpha}, \widehat{\beta}\right)=m$. 
    \textup{(ii)} Fix any $\hat{\alpha}$, as $m$ increases (resp. decreases), $\widehat{\beta}$ decreases (resp. increases).
\end{lemma}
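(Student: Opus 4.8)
The plan is to collapse this to the textbook Neyman--Pearson analysis of two Gaussian location families and then read off the stated quantities. First I would write the log-likelihood ratio $\mathcal{L}(y)$ in closed form: since each $\rho_j(\cdot\mid\widehat H_i)$ is a unit-variance normal, every term $\log\!\big(\rho_j(y_j\mid\widehat H_0)/\rho_j(y_j\mid\widehat H_1)\big)$ simplifies to $-\widehat{\mathtt{M}}_j y_j+\tfrac12\widehat{\mathtt{M}}_j^2$, so $\mathcal{L}(y)=-S+\tfrac12\sum_j\widehat{\mathtt{M}}_j^2$ with sufficient statistic $S\equiv\sum_j\widehat{\mathtt{M}}_j y_j$. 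Hence the UMP test is a one-sided threshold test on $S$ (large $S$, equivalently small $\mathcal{L}$, favoring $\widehat H_1$), and the whole problem reduces to a one-dimensional normal testing problem.

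Next I would compute the law of $S$ under each hypothesis. As a fixed linear combination of independent unit-variance normals, $S\sim\mathcal{N}(0,\overline{\textup{V}})$ under $\widehat H_0$ and $S\sim\mathcal{N}(\overline{\textup{V}},\overline{\textup{V}})$ under $\widehat H_1$, where $\overline{\textup{V}}=\sum_j\widehat{\mathtt{M}}_j^2$ is the common variance. Thus the two hypotheses are equal-variance normals whose means are separated by exactly $\overline{\textup{V}}$, i.e. by $\sqrt{\overline{\textup{V}}}$ standard deviations; this separation is the single quantity that governs both error rates.

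The crux is the threshold calibration. Fixing the level $\widehat\alpha$ pins the cutoff on $S$ to the upper $\widehat\alpha$-quantile of the null, namely $z_{\widehat\alpha}\sqrt{\overline{\textup{V}}}$; evaluating the alternative at the same cutoff and equating the resulting lower-tail mass to $\widehat\beta$ gives the cutoff as $\overline{\textup{V}}-z_{\widehat\beta}\sqrt{\overline{\textup{V}}}$ (using the upper-tail convention $\Phi^{-1}(\beta)=-z_\beta$). Eliminating the cutoff between the two expressions yields the scalar identity $z_{\widehat\alpha}+z_{\widehat\beta}=\sqrt{\overline{\textup{V}}}$, equivalently $(z_{\widehat\alpha}+z_{\widehat\beta})^2=\sum_j\widehat{\mathtt{M}}_j^2$. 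Part (i), $\mathcal{F}(\widehat\alpha,\widehat\beta)=m$, then follows by substituting this identity into the definition of $\mathcal{F}$ and simplifying with the definitions of $\overline{\textup{M}}$ and $\overline{\textup{V}}$, so that the residual constants cancel against the separation term. For part (ii) I would hold $\widehat\alpha$ (hence $z_{\widehat\alpha}$) fixed in the same identity: increasing $m$ appends strictly positive terms $\widehat{\mathtt{M}}_j^2$, so $\sqrt{\overline{\textup{V}}}$ strictly increases and forces $z_{\widehat\beta}$ up; since $z_\beta=\Phi^{-1}(1-\beta)$ is strictly decreasing in $\beta$, this is exactly $\widehat\beta$ decreasing, and symmetrically for decreasing $m$.

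I expect the difficulty to be bookkeeping rather than conceptual. The delicate points are orienting the one-sided test correctly, keeping the quantile/sign convention consistent throughout (the paper's $z_a$ is an upper-tail quantile, so $\Phi^{-1}(a)=-z_a$), and--most of all--carrying the normalizations through the last substitution so that $\overline{\textup{V}}/(4\overline{\textup{M}}^2)$ combines with $(z_{\widehat\alpha}+z_{\widehat\beta})^2$ to exactly the coordinate count $m$; this final step is where an off-by-a-constant normalization in $\overline{\textup{M}},\overline{\textup{V}}$ would surface and must be reconciled. A secondary check for part (ii) is that the nontrivial regime keeps $z_{\widehat\alpha}+z_{\widehat\beta}>0$, so that squaring preserves the monotone dependence on $m$.
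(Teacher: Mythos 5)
Your reduction to a one-dimensional Gaussian problem is the right start, and your calibration identity $z_{\widehat\alpha}+z_{\widehat\beta}=\sqrt{\overline{\textup{V}}}$ is the correct Neyman--Pearson conclusion for the sum statistic $S=\sum_{j}\widehat{\mathtt{M}}_{j}y_{j}$ (whose variance is the total $\overline{\textup{V}}$). The genuine gap is the final substitution that you defer with ``the residual constants cancel'': they do not, and cannot. Since the lemma defines $\overline{\textup{M}}=\frac{1}{2}\sum_{j\in Q}\widehat{\mathtt{M}}_{j}^{2}$ and $\overline{\textup{V}}=\sum_{j\in Q}\widehat{\mathtt{M}}_{j}^{2}$, one has $\overline{\textup{M}}=\overline{\textup{V}}/2$ identically, so $\mathcal{F}(\alpha,\beta)=(z_{\alpha}+z_{\beta})^{2}/\overline{\textup{V}}$. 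Plugging in your identity $(z_{\widehat\alpha}+z_{\widehat\beta})^{2}=\overline{\textup{V}}$ yields $\mathcal{F}(\widehat\alpha,\widehat\beta)=1$, not $m$. In other words, your computation carried to completion contradicts part (i) rather than proving it: once the test statistic is normalized to carry the total variance $\overline{\textup{V}}$, there is no source for the factor $m$. This is precisely the ``off-by-a-constant normalization'' you flagged in your last paragraph, but it is not a bookkeeping nuisance that reconciles itself --- it is the crux of the lemma.

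The paper's proof gets the factor $m$ from a different scaling. After deriving the same laws $\mathcal{N}(\pm\overline{\textup{M}},\overline{\textup{V}})$ for $\mathcal{L}(y)$, it calibrates thresholds for a \emph{sample-mean} statistic whose means are still $\pm\overline{\textup{M}}$ but whose standard deviation is $\sqrt{\overline{\textup{V}}/m}$, equating $\sqrt{m}\,\overline{\textup{M}}+z_{\widehat\alpha}\sqrt{\overline{\textup{V}}}=-\sqrt{m}\,\overline{\textup{M}}-z_{\widehat\beta}\sqrt{\overline{\textup{V}}}$, which gives $(z_{\widehat\alpha}+z_{\widehat\beta})^{2}=4m\overline{\textup{M}}^{2}/\overline{\textup{V}}$ and hence $\mathcal{F}(\widehat\alpha,\widehat\beta)=m$. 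This is the classical sample-size identity $n=(z_{\alpha}+z_{\beta})^{2}\sigma^{2}/\delta^{2}$ with $\overline{\textup{M}}$, $\overline{\textup{V}}$ playing the role of per-observation effect size and variance; it differs from your normalization by exactly the disputed factor $m$ (your statistic has variance $\overline{\textup{V}}$, theirs $\overline{\textup{V}}/m$). So to prove part (i) as the paper intends, you must adopt that standard-error scaling --- equivalently, interpret $\overline{\textup{M}},\overline{\textup{V}}$ as per-coordinate rather than summed quantities; within your (literal) reading of the definitions, the claimed identity is false for $m>1$. Your argument for part (ii) survives in spirit under either scaling, since it only needs the separation between the two hypotheses to be increasing in $m$ and $z_{\widehat\beta}$ to be decreasing in $\widehat\beta$; but as written it inherits the same normalization problem because it is routed through the part (i) identity.
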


\paragraph{Gaussian Mechanisms.}
Define $g_{D}(\delta|b)=\prod_{j\in Q} g^{j}_{D}(\delta_{j}|b)$, where each $g^{j}_{D}(\cdot|b)$ is the density function of the Gaussian distribution $\mathcal{N}(\mathtt{M}^{j}_{b}, \mathtt{V}^{j})$, where $\mathtt{M}_{b}$ is the mean and $\mathtt{V}_{b}$ is the variance, for $b\in U$, $j\in Q$.
Let $y=x+\delta=(x_{j}+\delta_{j})_{j\in Q}\in \mathcal{Y}\equiv\prod_{j\in Q}\mathcal{Y}_{j}$, where each $y_{j}=x_{j} +\delta_{j}\in \mathcal{Y}_{j}$.
Thus, the random variable $\tilde{r}=\mathtt{R}(\tilde{y})$ is the output of the post-processing $\mathtt{R}(\cdot)$ of the random variable $\tilde{y}$.
Suppose $\mathtt{R}$ is the identity function. That is, the attacker observes the un-clipped noisy observation $y$.
Let $\rho_{D}(\cdot|b)\in\Delta(\mathcal{Y})$ be the resulting conditional probability of the observation.
Let $b^{k}_{[0]}$ and $b^{k}_{[1]}$ be two \textit{adjacent} membership vectors differing only in individual $k$'s $b_{k}$: $b_{k}=0$ in $b^{k}_{[0]}$ and $b_{k}=1$ in $b^{k}_{[1]}$.
Given $Q$, define $\mu^{\sigma}_{0|1}[|Q|]\equiv\max_{k\in U}\sum_{s_{-k}}\int_{y}\mu_{\sigma}(s_{k}=0|y)\rho_{D}(y|b^{k}_{[1]})d y$ as the maximum probability of $s_{k}=0$ conditioning on any individual $b_{k}=1$ induced by the posterior belief $\mu_{\sigma}$ and $g_{D}$, where the maximum is over all individuals.

\begin{theorem}\label{thm:Gaussian_comparison}
Let $g_{D}$ be a Gaussian mechanism defined above with each $g^{j}_{D}(\cdot|b)\in\Delta(\mathcal{Y}_{j})$ as the density function of $\mathcal{N}(\mathtt{M}^{j}_{b}, \mathtt{V}^{j})$ given any $b\in W$.
Suppose $\mathtt{V}^{j}=\left(\frac{m}{K^{\dagger}\widehat{\mathtt{M}}_{j}}\right)^{2}$ for all $j\in Q$, where $1\leq K^{\dagger} \leq K$ is the minimum number of individuals involved in $B$.
Suppose in addition $\max_{b,b'}|\mathtt{M}^{j}_{b}-\mathtt{M}^{j}_{b'}|\leq \frac{m}{K^{\dagger}}$, where the maximum is over all adjacent membership vectors.
%
% %
% Suppose $\mathtt{V}^{j}=\left(m/\left(K\widehat{\mathtt{M}}_{j}\right)\right)^{2}$ for all $j\in Q$.
% %
Then, for any $Q$ with $|Q|=m\geq 1$ and any arbitrary subjective prior $\sigma$, if $\mathcal{F}\left(\alpha, \mu^{\sigma}_{0|1}[m]\right)\geq m$, then $L(g_{D}, \tau^{*},\alpha)\leq L^{\sigma}(g_{D})$.
%
% (i) $L(g_{D}, \tau^{*},\alpha)\leq L^{\sigma}(g_{D})$ if and only if $\mathcal{F}\left(\alpha, \mu^{\sigma}_{0|1}[m]\right)\geq m$.
% (ii) $L(g_{D}, \tau^{*},\alpha)> L^{\sigma}(g_{D})$ if and only if $\mathcal{F}\left(\alpha, \mu^{\sigma}_{0|1}[m]\right)< m$.
\end{theorem}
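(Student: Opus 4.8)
The plan is to reduce both worst-case privacy losses to sums of per-individual detection rates and then dominate the LRT detection rate by the Bayesian one coordinate by coordinate. First I would recall from the preliminaries that the optimal-LRT loss is $L(g_{D},\tau^{*},\alpha)=\sum_{k}(1-\beta^{*,k})\theta(b_{k}=1)$, where $\beta^{*,k}=T[P^{k}_{0}(g_{D}),P^{k}_{1}(g_{D})](\alpha)$ is the Neyman--Pearson optimal Type-II error at level $\alpha$ for the simple test of $b^{k}_{[1]}$ against $b^{k}_{[0]}$. Expanding $L(g_{D},h_{A})=\sum_{b,s}\int_{r}v(s,b)h_{A}(s|r)\rho_{D}(r|b)dr\,q(b)$ with $v(s,b)=\sum_{k}s_{k}b_{k}$ gives $L^{\sigma}(g_{D})=\sum_{k}q(b_{k}=1)\pi^{k}_{\sigma}$, where $\pi^{k}_{\sigma}\equiv\Pr(s_{k}=1\mid b_{k}=1)$ is the detection rate of the Bayesian best response. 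Since $\theta$ and $q$ are both the objective prior, the theorem reduces to showing $\pi^{k}_{\sigma}\ge 1-\beta^{*,k}$ for each $k$, i.e.\ a per-coordinate comparison of miss rates.

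Next I would pin down the Bayesian best response. Because $u_{A}(s,b)=\sum_{k}s_{k}(\gamma-b_{k})$ decouples across individuals, minimizing the interim cost $V_{A}$ sets $s_{k}=1$ iff $\mu_{\sigma}(b_{k}=1\mid r)>\gamma$, a posterior-threshold rule. Hence the per-individual miss rate is $1-\pi^{k}_{\sigma}=\sum_{s_{-k}}\int_{y}\mu_{\sigma}(s_{k}=0\mid y)\rho_{D}(y\mid b^{k}_{[1]})dy$, and by the definition of $\mu^{\sigma}_{0|1}[m]$ we have $\max_{k}(1-\pi^{k}_{\sigma})=\mu^{\sigma}_{0|1}[m]$, so that $\pi^{k}_{\sigma}\ge 1-\mu^{\sigma}_{0|1}[m]$ for every $k$.

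The heart of the argument is reducing the LRT under the Gaussian mechanism to the canonical test of Lemma~\ref{lemma:normal_relationship}. Since $g_{D}$ factorizes over $j$ with $y_{j}\sim\mathcal{N}(x_{j}+\mathtt{M}^{j}_{b},\mathtt{V}^{j})$, I would write the log-LR for $\rho_{D}(\cdot\mid b^{k}_{[1]})$ versus $\rho_{D}(\cdot\mid b^{k}_{[0]})$ and standardize coordinate $j$ by $\sqrt{\mathtt{V}^{j}}$, turning it into a simple Gaussian location test with standardized per-coordinate shift $\Delta_{j}/\sqrt{\mathtt{V}^{j}}$, where $\Delta_{j}$ is the total mean gap. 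The assumptions then control this shift: toggling one member changes the AAF $x_{j}$ by at most $1/\sum_{k}b_{k}\le 1/K^{\dagger}$, the defender's mean gap satisfies $\max|\mathtt{M}^{j}_{b}-\mathtt{M}^{j}_{b'}|\le m/K^{\dagger}$, and $\mathtt{V}^{j}=(m/(K^{\dagger}\widehat{\mathtt{M}}_{j}))^{2}$, so $|\Delta_{j}|\le m/K^{\dagger}$ and the standardized shift is at most $\widehat{\mathtt{M}}_{j}$. Thus the real test is \emph{no easier} than the canonical $\mathcal{N}(0,1)$ versus $\mathcal{N}(\widehat{\mathtt{M}}_{j},1)$ test in $m$ coordinates, whence $\beta^{*,k}\ge\widehat{\beta}$ with $\mathcal{F}(\alpha,\widehat{\beta})=m$ by Lemma~\ref{lemma:normal_relationship}(i). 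Because $z_{\beta}=\Phi^{-1}(1-\beta)$ is decreasing in $\beta$, the map $\mathcal{F}(\alpha,\cdot)$ is strictly decreasing, so $\beta^{*,k}\ge\widehat{\beta}$ yields $\mathcal{F}(\alpha,\beta^{*,k})\le m$.

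Finally I would combine the hypothesis $\mathcal{F}(\alpha,\mu^{\sigma}_{0|1}[m])\ge m\ge\mathcal{F}(\alpha,\beta^{*,k})$ with the strict monotonicity of $\mathcal{F}(\alpha,\cdot)$ to conclude $\mu^{\sigma}_{0|1}[m]\le\beta^{*,k}$ for every $k$. Then the Bayesian miss rate obeys $1-\pi^{k}_{\sigma}\le\mu^{\sigma}_{0|1}[m]\le\beta^{*,k}$, i.e.\ $\pi^{k}_{\sigma}\ge 1-\beta^{*,k}$, and summing against the common objective prior gives $L^{\sigma}(g_{D})=\sum_{k}q(b_{k}=1)\pi^{k}_{\sigma}\ge\sum_{k}\theta(b_{k}=1)(1-\beta^{*,k})=L(g_{D},\tau^{*},\alpha)$. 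I expect the third paragraph to be the main obstacle: carrying out the standardization of the multivariate Gaussian LRT exactly, verifying that the assumptions genuinely bound the effective per-SNV signal by $\widehat{\mathtt{M}}_{j}$ (correctly aggregating the AAF sensitivity and the mean-gap bound into $|\Delta_{j}|\le m/K^{\dagger}$), and getting the ``harder test'' direction right so that the reduction delivers $\mathcal{F}(\alpha,\beta^{*,k})\le m$ rather than the reverse.
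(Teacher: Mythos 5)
Your proposal follows essentially the same route as the paper's proof: a per-individual reduction of both losses, identification of the Bayesian attacker's worst-case miss rate with $\mu^{\sigma}_{0|1}[m]$, reduction of the Gaussian-mechanism LRT to the canonical test $\mathcal{N}(0,1)$ versus $\mathcal{N}(\widehat{\mathtt{M}}_{j},1)$ followed by composition across independent SNVs to obtain a Type-II error $\widehat{\beta}$ with $\mathcal{F}(\alpha,\widehat{\beta})=m$ (this is the paper's Lemma~\ref{lemma:gpd} plus its Gaussian-DP composition lemma), and finally the monotonicity of $\mathcal{F}(\alpha,\cdot)$ from Lemma~\ref{lemma:normal_relationship} to close the chain. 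There is, however, one genuine flaw in your second paragraph. From the decoupling of $u_{A}$ you correctly derive that the interim best response is the deterministic rule $s_{k}=\mathbf{1}\{\mu_{\sigma}(b_{k}=1\mid r)>\gamma\}$, but then conclude (``hence'') that the Bayesian miss rate equals $\sum_{s_{-k}}\int_{y}\mu_{\sigma}(s_{k}=0\mid y)\rho_{D}(y\mid b^{k}_{[1]})\,dy$. These are different quantities: the threshold rule's miss rate is $\int_{y}\mathbf{1}\{\mu_{\sigma}(b_{k}=1\mid y)\leq\gamma\}\rho_{D}(y\mid b^{k}_{[1]})\,dy$, which depends on $\gamma$ (it vanishes as $\gamma\to 0$), whereas the posterior-weighted expression appearing in the definition of $\mu^{\sigma}_{0|1}[m]$ is the miss rate of the \emph{posterior-mirroring} strategy $h^{\mu}_{A}$. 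The bridge you actually need---that the worst-case loss $L^{\sigma}(g_{D})$, a maximum over the attacker's best responses, is attained by and equal to the loss of the mirroring strategy---is precisely Proposition~\ref{thm:attacker_mirror_mu}, which the paper's proof of Theorem~\ref{thm:Gaussian_comparison} invokes at exactly this point. Without citing or re-proving that proposition, your inequality $\pi^{k}_{\sigma}\geq 1-\mu^{\sigma}_{0|1}[m]$, and hence the link between the theorem's hypothesis and its conclusion, is unsupported.

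Your self-flagged worry in the third paragraph is legitimate, but it is the paper's problem as much as yours. With the variance as stated in the theorem, $\mathtt{V}^{j}=\left(m/(K^{\dagger}\widehat{\mathtt{M}}_{j})\right)^{2}$, the worst-case standardized shift is bounded by $\left(1/K^{\dagger}+m/K^{\dagger}\right)\cdot K^{\dagger}\widehat{\mathtt{M}}_{j}/m=\widehat{\mathtt{M}}_{j}(m+1)/m$, not by $\widehat{\mathtt{M}}_{j}$. The paper sidesteps this in Lemma~\ref{lemma:gpd} by calibrating $\mathtt{V}^{j}=\left(2\,\textup{sens}^{j}(f)/\widehat{\mathtt{M}}_{j}\right)^{2}$ and bounding the data shift and the defender's mean gap each by $\textup{sens}^{j}(f)$; the constant in that lemma does not match the variance stated in the theorem, so the mismatch you noticed is a real inconsistency in the paper rather than a defect specific to your argument. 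The remainder of your proposal---the direction of the ``harder test'' inequality, the chain $\mathcal{F}(\alpha,\beta^{*,k})\leq m\leq\mathcal{F}(\alpha,\mu^{\sigma}_{0|1}[m])$ with $\mathcal{F}(\alpha,\cdot)$ decreasing, and the final summation against the common prior---is correct and coincides with the paper's argument.
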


Theorem \ref{thm:Gaussian_comparison} provides a sufficient condition when $\sigma$-Bayesian outperforms $\alpha$-LRT under the Gaussian mechanisms when $\sigma$ is any arbitrary subjective prior, which is independent of the true prior $q$.
Given any $\alpha$, let $1-\beta^{\alpha}_{|Q|}$ denote the power of the $\alpha$-LRT attack and let $m^{\alpha}=\mathcal{F}\left(\alpha, \mu^{\sigma}_{0|1}[|Q|]\right)$ denote the number of SNVs used to calculate the summary statistics so that $1-\beta^{\alpha}_{|Q|}=1-\mu^{\sigma}_{0|1}[|Q|]$ (i.e., the power of the $\alpha$-LRT attack coincides with the worst-case TPR induced by the $\sigma$-Bayesian attack).
Since $m$ increases as $\hat{\beta}$ decreases by Lemma \ref{lemma:normal_relationship}, when $m^{\alpha}\geq |Q|$, the actual $1-\beta^{\alpha}_{|Q|}\leq 1-\mu^{\sigma}_{0|1}[|Q|]$.
Then, by Proposition \ref{thm:attacker_mirror_mu} in Appendix \ref{sec:app_proof_thm1}, the lowest TPR that can be obtained by the $\sigma$-Bayesian attacker is greater than the best power of the $\alpha$-LRT.
Therefore, $L(g_{D}, \tau^{*},\alpha)\leq Z(g_{D},\sigma)$.
Please refer to Appendix \ref{sec:app_GDM} for more information about the Gaussian mechanisms and a necessary and sufficient condition that depends on the true prior.

%%%%%%%%%%%%%%%%%%%%%%%%%
\section{Experiments}

% \subsection{Experiment Setup}

Our experiments use a dataset of $800$ individuals with $5000$ SNVs of each individual on Chromosome $10$.
The dataset was provided by the 2016 iDASH Workshop on Privacy and Security \cite{tang2016idash}, derived from the 1000 Genomes Project \cite{10002015global}. 
Since the sigmoid function is used as the activation function, the output of the Bayesian attacker's neural network $H_{\lambda_{A}}$ is a vector with each element taking a value between $0$ and $1$, representing the probabilistic confidence of selecting each $s_{k}=1$.
The output of the defender's neural network $G_{\lambda_{D}}$ is the noise term set within the range $[-0.5,0.5]$.
We use the ROC (Receiver Operating Characteristic) Curve of the attacker's $H_{\lambda_{A}}$ to measure the strength of privacy protection by varying the thresholds to turn the confidence output of $H_{\lambda_{A}}$ to binary values $s_{k}\in\{0,1\}$ for $k\in U$, where a lower (resp. higher) AUC (Area Under the ROC Curve) indicates greater (resp. reduced) privacy effectiveness of $G_{\lambda_{D}}$.
We benchmark our Bayesian game-theoretic models with three baseline frameworks: defense against fixed-threshold attacks, defense against adaptive attacks, and defense using pure differential privacy.
The experiments were conducted using an NVIDIA A40 48G GPU. PyTorch was used as the deep learning framework.
Experiment statistical significance, neural network configurations, and hyperparameters are described in Appendix \ref{sec:app_Hyperparameters}.
To make (the approximation of) the objective function in (\ref{eq:nariveLRT}) differentiable with respect to the parameters of the generator $G_{\lambda_{D}}$, we use proxies.
In addition to using proxy $v_{D}$ for the defender's privacy loss function, we use the sigmoid function to approximate the threshold-based rejection rule of LRT.
In particular, $\mathbf{1}\{\ell(d_{k}, x)\leq\tau\}$ is approximated by $1/(1+\exp(-(\tau-\ell(d_{k}, x))))$, where $\ell(d_{k}, x)$ is the log-likelihood statistics.
Similarly, we use the sigmoid function to approximate $\mathbf{1}\{\ell(d_{k}, r)\leq \tau^{(N)}(r)\}$. 
The \textit{fixed- and adaptive-threshold LRT defenders} optimally choose $g_{D}$ by solving (\ref{eq:nariveLRT}) and (\ref{eq:adaptiveLRT}), respectively.

\subsection{Bayesian Defender vs. Three Types of Attackers}

We first compare the performance of the Bayesian defender under our Bayesian attacks, and two baseline attack models of the fixed-threshold and the adaptive-threshold attackers \cite{sankararaman2009genomic,shringarpure2015privacy,venkatesaramani2021defending,venkatesaramani2023enabling}.
In the experiments, we consider the true prior $q$ as the uniform distribution and assume that the Bayesian attacker uses $\sigma=q$ and $0<\gamma<1$.
In addition, we set $\kappa_{j}=\kappa$ for all $j\in Q$.
Figures \ref{fig:fig1}-\ref{fig:fig3} show the performance of the Bayesian defense against the Bayesian attacker, the fixed-threshold LRT attacker, and the adaptive-threshold LRT attacker for different values of the defender's parameter $\kappa$ that captures the defender's different preferences over privacy-utility trade-offs.
The experimental results support our theoretical analysis that (a) the Bayesian attacker is stronger than the existing LRT-based attacks, and (b) the Bayesian defense is effective against both fixed- and adaptive-threshold LRT attackers. 
In addition, the privacy of the defense decreases (resp. increases) as $\kappa$ increases (resp. decreases), as we would expect, since $\kappa$ captures the tradeoff between privacy and utility.

\subsection{Bayesian Attacker vs. Three Types of Defenders}

Next, we compare the performances of the Bayesian attacker under our Bayesian defense, and two state-of-the-art baseline defender models \cite{venkatesaramani2021defending,venkatesaramani2023enabling} that optimally respond to the fixed-threshold and the adaptive-threshold attackers, respectively. 
Figure \ref{fig:fig4} ($\kappa=1.5$) showcases the performances of the Bayesian attacker against different defenders, namely Bayesian, fixed-threshold LRT, and adaptive-threshold LRT defenders.
Here, the LRT defenders execute defense mechanisms that are optimally designed against their respective LRT attackers.
Notably, the Bayesian defense provides superior privacy protection against the Bayesian attacker, while the adaptive-threshold LRT defender outperforms its fixed-threshold counterpart under the Bayesian attacks.

\subsection{Bayesian Defender vs. Differential Privacy}

Finally, we compare our Bayesian defense mechanism with a conventional $\epsilon$-DP mechanism (see Appendix \ref{sec:app_differential_privacy} for more information).
In particular, we compare the performance of these two defense mechanisms under the Bayesian attack when \textit{(i)} both mechanisms cause the defender the same expected utility loss, but \textit{(ii)} the $\epsilon$-DP mechanism does not take into consideration the trade-off $\vec{\kappa}=(\kappa_{j})_{j\neq}$, where $\kappa_{j}$ is not the same for all $j\in Q$.
In the experiments, we consider the true prior $q$ as the uniform distribution and assume that the Bayesian attacker uses $\sigma=q$ and $0<\gamma<1$.
First, Figure \ref{fig:fig5} demonstrates the performances of the Bayesian, fixed-threshold, and adaptive-threshold attackers under $\epsilon$-DP defense where $\epsilon=600$.
The choice of such a large value of $\epsilon$ is explained in Appendix \ref{sec:app_differential_privacy}).
Similar to the scenarios under the Bayesian defense, the Bayesian attacker outperforms the LRT attackers under the $\epsilon$-DP.
Figure \ref{fig:fig6} illustrates an example under the Bayesian attack where the non-strategic DP defense results in significant privacy loss compared to the Bayesian defense, despite both defense mechanisms causing the same $\kappa$-weighted utility loss for the defender.
Specifically, we analyze a trade-off parameter vector $\kappa=(\kappa_{j})_{j\in Q}$ where $\kappa_{j}=0$ for the $90\%$ of $5000$ SNVs and $\kappa_{j}=50$ for the remaining $10\%$.
Under these conditions, the Bayesian attacker achieves an AUC of $0.53$ against the Bayesian defender and an AUC of $0.91$ against the $\epsilon$-DP defender.

\begin{figure}[htb]
    \centering
    % First row of figures
    \begin{subfigure}[b]{0.32\textwidth}
        \includegraphics[width=\textwidth]{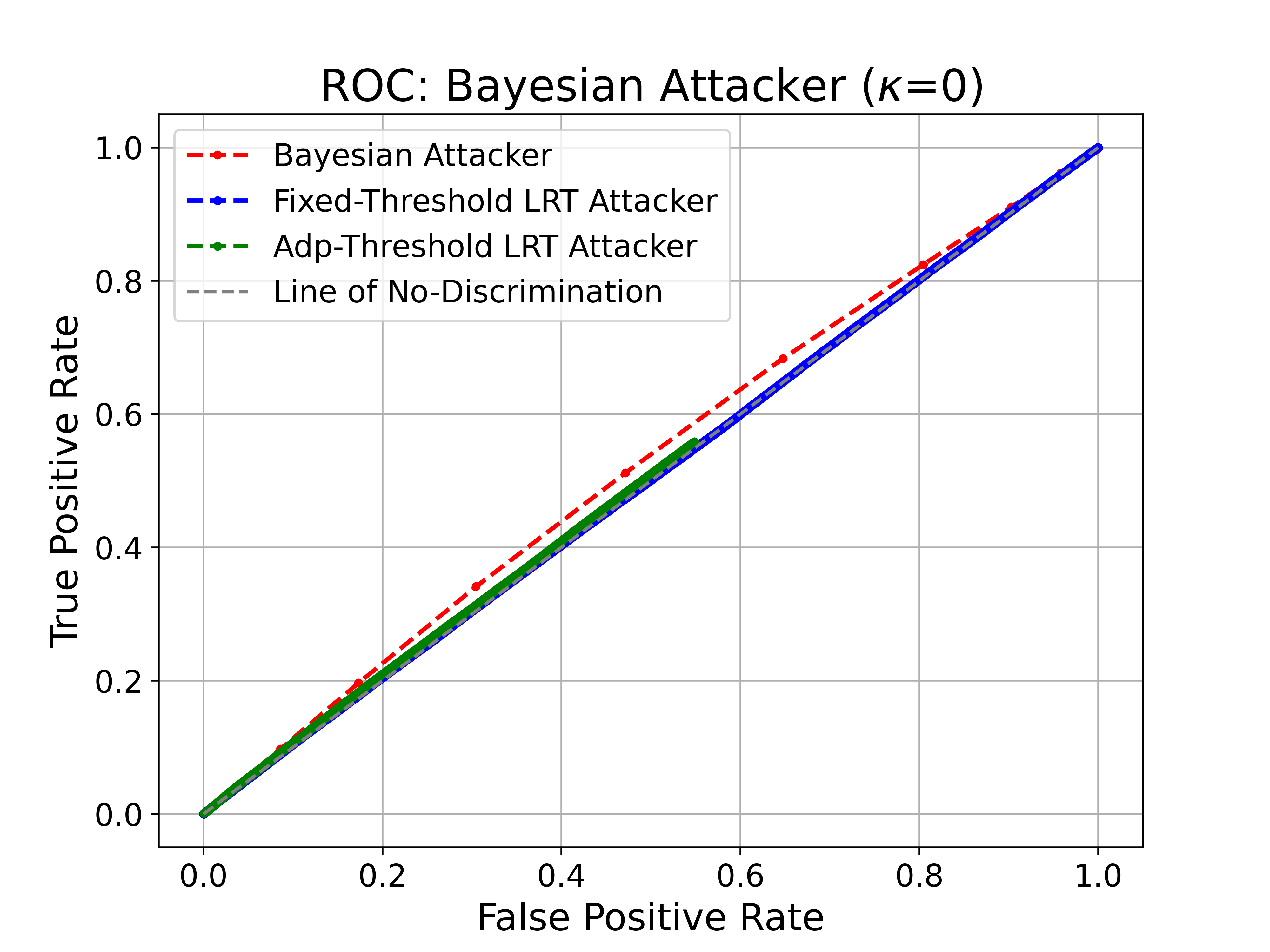}
        \caption{}
        \label{fig:fig1}
    \end{subfigure}
    \hfill % adds horizontal space between figures
    \begin{subfigure}[b]{0.32\textwidth}
        \includegraphics[width=\textwidth]{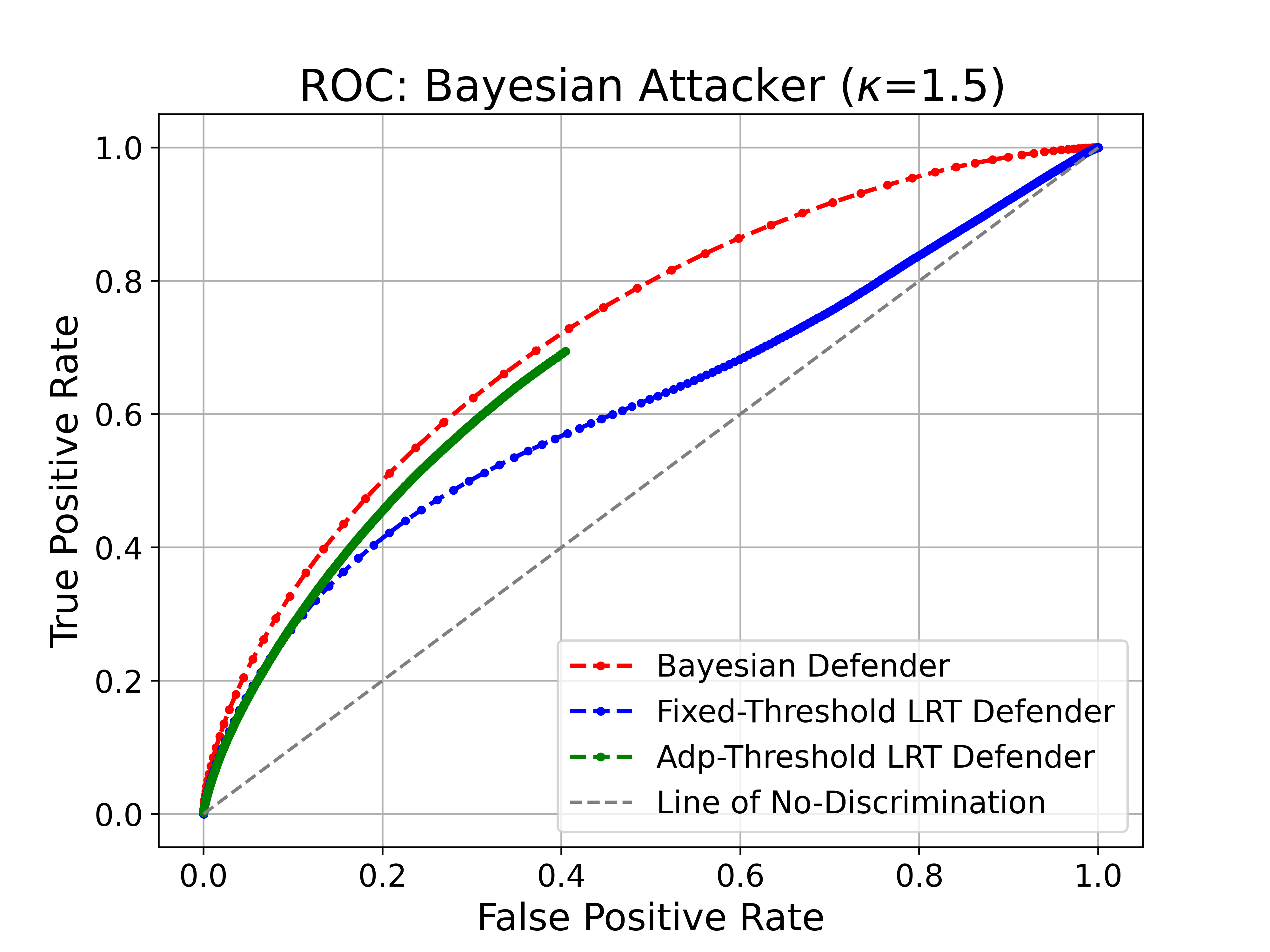}
        \caption{}
        \label{fig:fig2}
    \end{subfigure}
    \hfill
    \begin{subfigure}[b]{0.32\textwidth}
        \includegraphics[width=\textwidth]{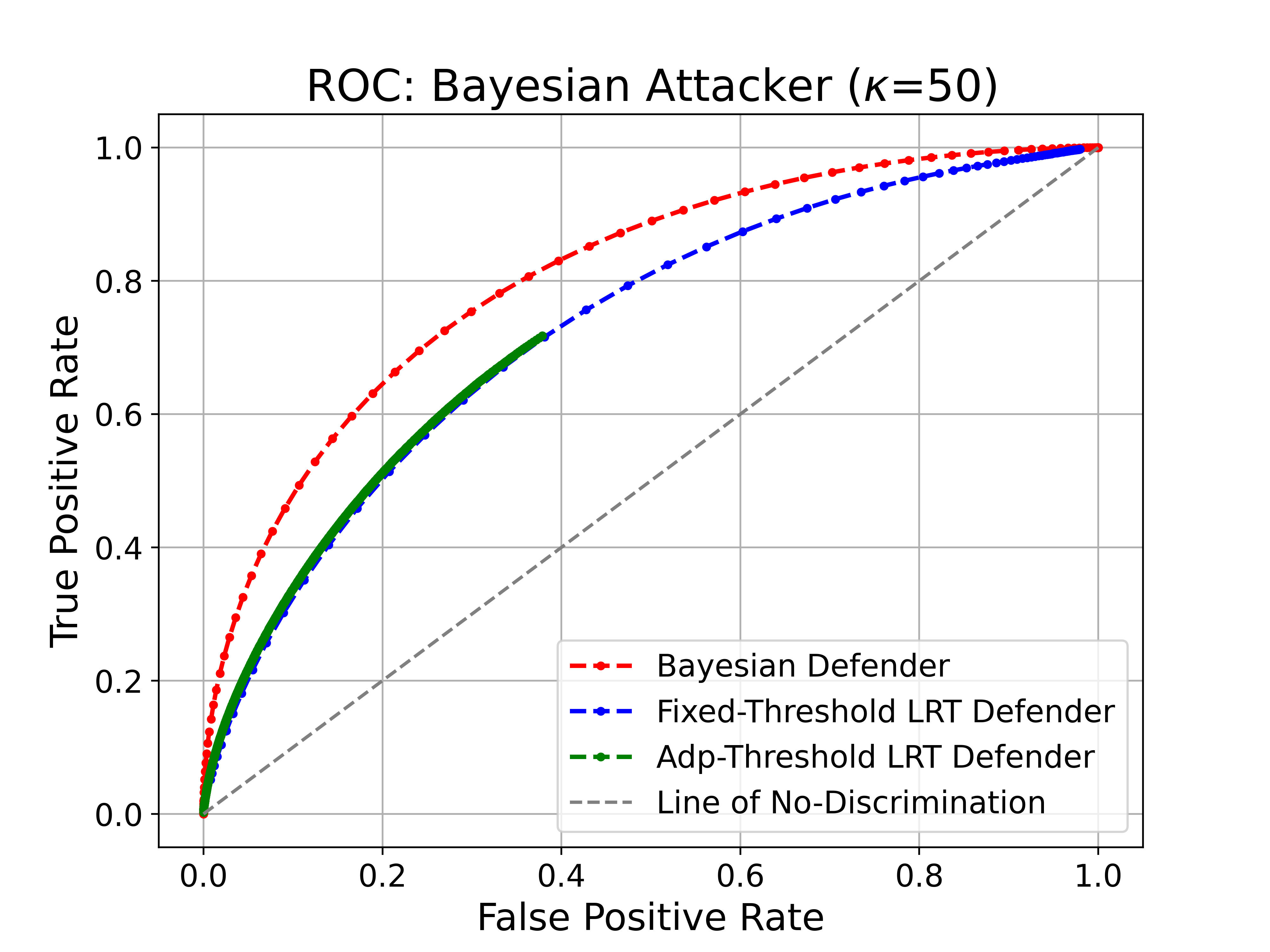}
        \caption{}
        \label{fig:fig3}
    \end{subfigure}
    
    % Second row of figures
    \begin{subfigure}[b]{0.32\textwidth}
        \includegraphics[width=\textwidth]{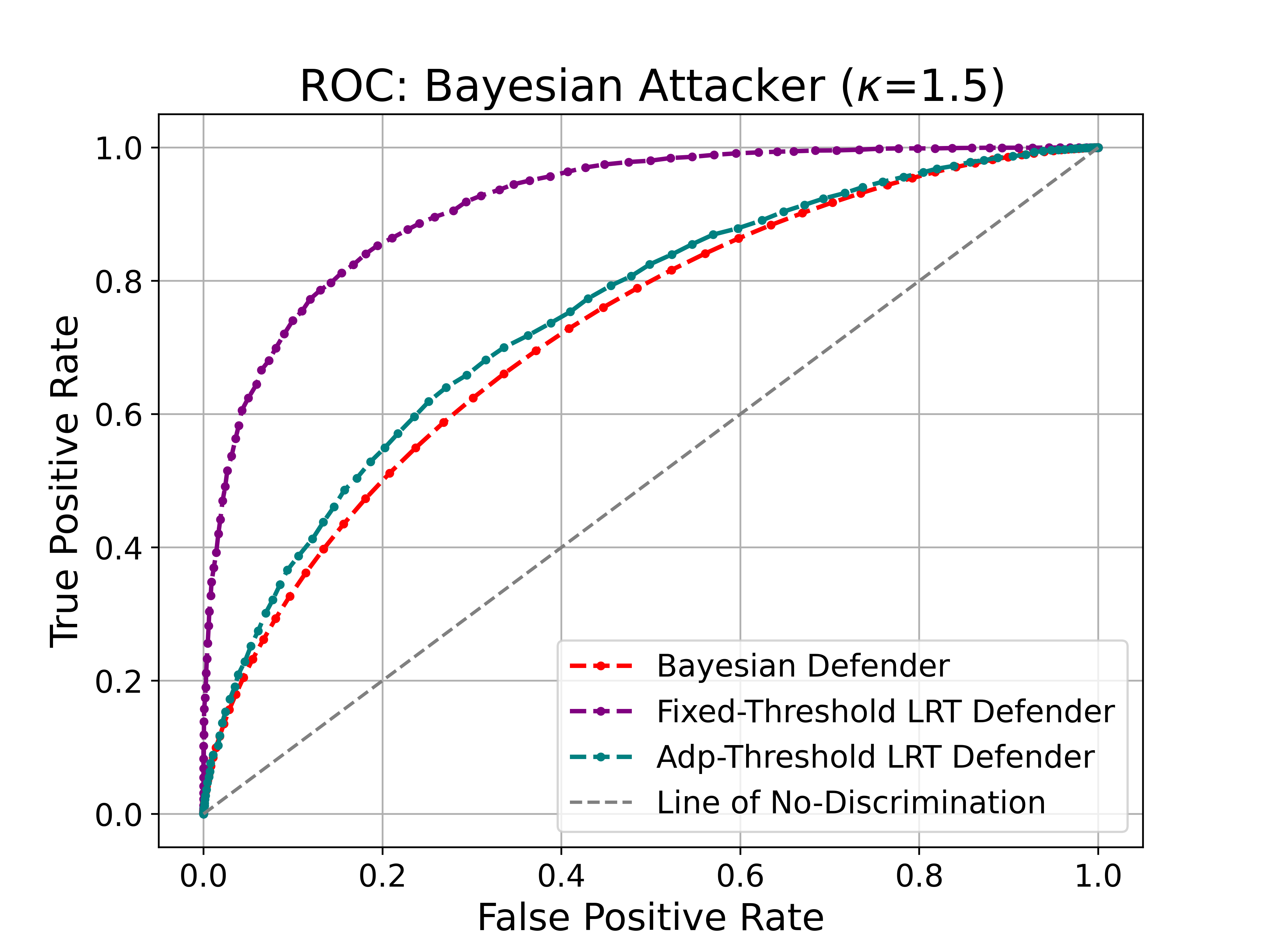}
        \caption{}
        \label{fig:fig4}
    \end{subfigure}
    \hfill
    \begin{subfigure}[b]{0.32\textwidth}
        \includegraphics[width=\textwidth]{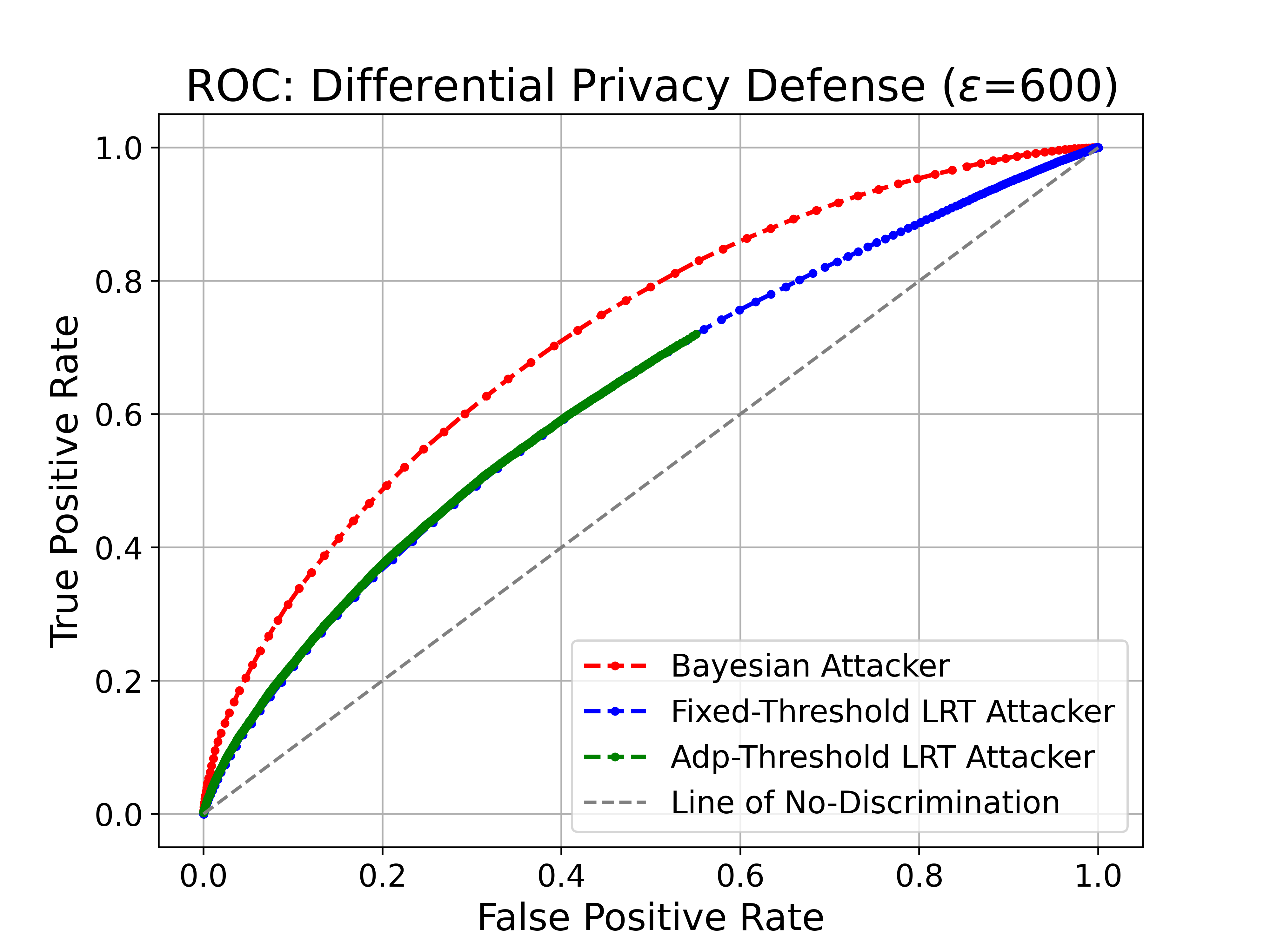}
        \caption{}
        \label{fig:fig5}
    \end{subfigure}
    \hfill
    \begin{subfigure}[b]{0.32\textwidth}
        \includegraphics[width=\textwidth]{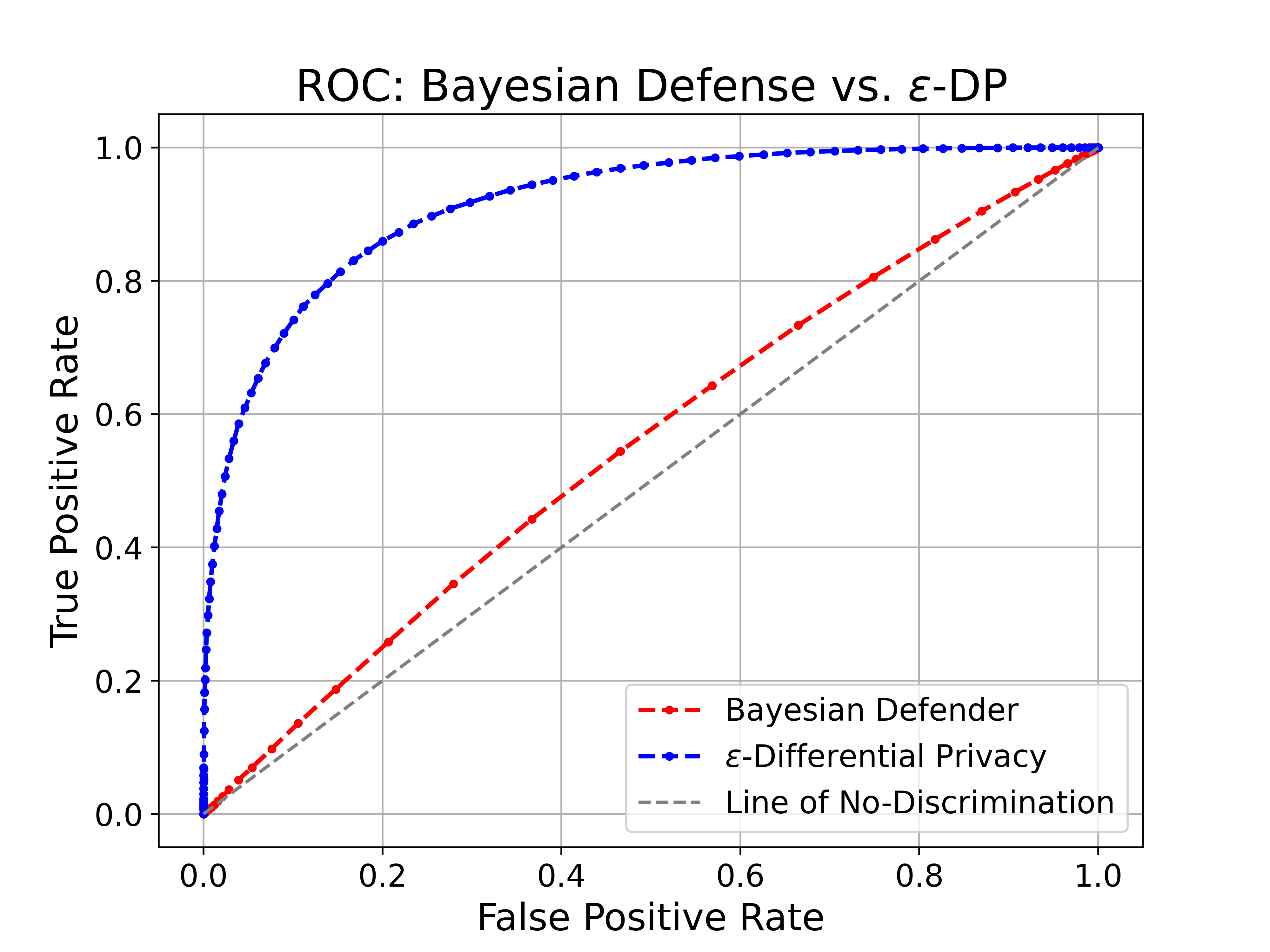}
        \caption{}
        \label{fig:fig6}
    \end{subfigure}
    
    \caption{(a)-(c): Bayesian Defender with $\kappa=0, 1.5, 50$, respectively. (d): Bayesian attacker under different defenders with $\kappa=1.5$. (e): Different attacks under non-strategic DP with $\epsilon=600$. (f): Bayesian attack under Bayesian defender and $\epsilon$-DP when $90\%$ of SNVs have $\kappa_{j}=0$ while $10\%$ have $\kappa_{j}=50$.}
    \label{fig:six_figures}
\end{figure}

\section{Conclusion}

This paper introduced a Bayesian game-theoretic framework to optimize the privacy-utility trade-off in genomic data sharing. Our theoretical analysis revealed that a Bayesian attacker, even with bounded rationality, poses a greater privacy risk to the defender than traditional LRT-based attackers. 
We compare Bayesian and LRT attacks under the Gaussian mechanism, highlighting conditions under which Bayesian attacks are more powerful. 
We also provided a practical method 
for approximating Bayes-Nash equilibria of this game, thereby computing a data sharing strategy that is explicitly trained to be robust to Bayesian attackers.
Our findings emphasize the importance of considering adaptive attackers in privacy risk assessments and offer a robust solution for privacy-preserving genomic data sharing.

\bibliographystyle{unsrt}  
\bibliography{references} 

\appendix

\section*{Appendix}

\section{Broader Impact, Limitations, and Future Work}

Our research investigates the privacy risks associated with genomic data-sharing services and proposes a Bayesian game-theoretic framework to optimize the trade-off between privacy and utility when sharing genomic summary statistics. By demonstrating that Bayesian attackers can induce greater privacy loss than conventional LRT-based attackers, our work underscores the pressing need for robust privacy-preserving mechanisms in genomic data sharing. This study contributes to the broader goal of developing secure methods for genomic data analysis, addressing critical societal concerns regarding individual privacy.
Our findings could help create guidelines for sharing genomic data responsibly, balancing the benefits of scientific collaboration with protecting individual privacy. This research aims to build public trust and enable responsible use of genomic data for scientific progress and better healthcare.

Our results are based on the assumption of linkage equilibrium. This requires the data-sharing mechanism to follow a prefiltering protocol that retains a subset of SNVs in linkage equilibrium, where each SNV is independent of the others. Although such a protocol is practically operationalizable, our robustness results cannot be generalized to real-world genomic data that exhibit linkage disequilibrium. By developing more sophisticated models that account for linkage disequilibrium, we may enhance the robustness and practical relevance of our findings. Additionally, advanced statistical techniques could further refine our privacy and utility estimates, improving the overall effectiveness of the proposed methods.

%%%%%%%%%%%%%%%%%%%%%%%%%%%%%%%%%%%%%%
\section{More on Gaussian Defense Mechanisms}\label{sec:app_GDM}

Based on the sensitivity of $f$ (see the proof at Appendix \ref{app:proof_thm2} for details), 
Theorem \ref{thm:Gaussian_comparison} considers the worst-case bound of the powers of the LRT attack when the attacker knows the membership of every individual in the dataset except for a single individual. This bound is evaluated over all possible input membership vectors. Notably, the comparison in Theorem \ref{thm:Gaussian_comparison} is independent of the true prior distributions \( q = (q_k)_{k \in U} \) of the membership vectors and does not rely on specific true membership vectors forming the Beacon dataset.

When \(\mathcal{F}\left(\alpha, \mu^{\sigma}_{0|1}[m]\right) < m\), the lowest true positive rate (TPR) of the \(\sigma\)-Bayesian attacker is strictly smaller than the best power of the \(\alpha\)-LRT attacker. However, this does not guarantee that every TPR of the \(\sigma\)-Bayesian attacker is smaller than every power of the \(\alpha\)-LRT attacker across different Beacon datasets. Therefore, \(\mathcal{F}\left(\alpha, \mu^{\sigma}_{0|1}[m]\right) < m\) generally cannot imply that \(L(g_{D}, \tau^{*}, \alpha) > L^{\sigma}(g_{D})\).
Moreover, the condition \(\mathcal{F}\left(\alpha, \mu^{\sigma}_{0|1}[m]\right) \geq m\) is not necessary. That is, \(L(g_{D}, \tau^{*}, \alpha) \leq L^{\sigma}(g_{D})\) does not imply \(\mathcal{F}\left(\alpha, \mu^{\sigma}_{0|1}[m]\right) \geq m\) for any arbitrary subjective prior \(\sigma\).
We can also conclude that the sufficient condition in Theorem \ref{thm:Gaussian_comparison} is not applied only to aligned subjective priors.
The following corollary directly follows Theorem \ref{thm:Gaussian_comparison}.

\begin{corollary}
Given a Gaussian mechanism $g_{D}$ with $Q$, if the number of SNVs of the Beacon dataset satisfies $|Q|\leq \mathcal{F}\left(\alpha, \mu^{\sigma}_{0|1}[|Q|]\right)$, then the mechanism $g_{D}$ that is optimal to the $\sigma$-Bayesian attacks with any arbitrary $\sigma$ is guaranteed to be robust to any optimal $\alpha$-LRT attacks.
\end{corollary}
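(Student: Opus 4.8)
The plan is to reduce the worst-case power of the optimal $\alpha$-LRT attack under the Gaussian mechanism to the canonical Gaussian hypothesis-testing problem governed by Lemma~\ref{lemma:normal_relationship}, and then to compare that power against the worst-case true-positive rate $1-\mu^{\sigma}_{0|1}[m]$ of the $\sigma$-Bayesian attacker. First I would fix an individual $k$ together with the adjacent pair $b^{k}_{[0]},b^{k}_{[1]}$ and write the per-coordinate likelihood ratio that distinguishes $\rho_{D}(\cdot|b^{k}_{[1]})$ from $\rho_{D}(\cdot|b^{k}_{[0]})$. Since each $g^{j}_{D}(\cdot|b)$ is $\mathcal{N}(\mathtt{M}^{j}_{b},\mathtt{V}^{j})$ and $\mathtt{R}$ is the identity, each $y_{j}$ is Gaussian whose mean shifts between the two hypotheses by the summary-statistic sensitivity of $f_{j}$ plus the mechanism-mean gap $\mathtt{M}^{j}_{b^{k}_{[1]}}-\mathtt{M}^{j}_{b^{k}_{[0]}}$. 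Standardizing by $\sqrt{\mathtt{V}^{j}}=m/(K^{\dagger}\widehat{\mathtt{M}}_{j})$, the variance assumption together with the bound $\max_{b,b'}|\mathtt{M}^{j}_{b}-\mathtt{M}^{j}_{b'}|\le m/K^{\dagger}$ and the $1/K^{\dagger}$-sensitivity of $f_{j}$ should cap the standardized separation so that the Neyman--Pearson-optimal test is dominated by the canonical problem of Lemma~\ref{lemma:normal_relationship} with parameters $\overline{\textup{M}}$ and $\overline{\textup{V}}$. Taking the maximum over adjacent pairs calibrates $\widehat{\mathtt{M}}_{j}$ to the strongest admissible attacker, so the resulting $1-\beta^{\alpha}_{|Q|}$ is a worst-case bound on the LRT power across all membership vectors.

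Next I would invoke Lemma~\ref{lemma:normal_relationship}(i) to read the quantity $m^{\alpha}=\mathcal{F}\!\left(\alpha,\mu^{\sigma}_{0|1}[m]\right)$ as the number of SNVs at which the optimal $\alpha$-LRT would attain Type-II error exactly $\mu^{\sigma}_{0|1}[m]$. The hypothesis $\mathcal{F}\!\left(\alpha,\mu^{\sigma}_{0|1}[m]\right)\ge m$ then says $m^{\alpha}\ge|Q|=m$. Applying the monotonicity in Lemma~\ref{lemma:normal_relationship}(ii)---for fixed significance, fewer SNVs yield a larger minimal Type-II error---the actual $m$-SNV test satisfies $\beta^{\alpha}_{|Q|}\ge\mu^{\sigma}_{0|1}[m]$, equivalently $1-\beta^{\alpha}_{|Q|}\le 1-\mu^{\sigma}_{0|1}[m]$. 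Thus the best attainable power of the $\alpha$-LRT is no larger than the $\sigma$-Bayesian attacker's worst-case true-positive rate.

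The final step is to lift this per-individual power comparison to the aggregate losses. Because $\mu^{\sigma}_{0|1}[m]$ is defined as the maximum miss probability over all individuals, $1-\mu^{\sigma}_{0|1}[m]$ lower-bounds the $\sigma$-Bayesian attacker's true-positive rate for \emph{every} $k$, by Proposition~\ref{thm:attacker_mirror_mu}; meanwhile $1-\beta^{\alpha}_{|Q|}$ upper-bounds the LRT power for every $k$. Chaining the two bounds gives, for each $k$, that the LRT true-positive rate is at most the $\sigma$-Bayesian true-positive rate. Since both $L(g_{D},\tau^{*},\alpha)$ and $L^{\sigma}(g_{D})$ are sums of per-individual true-positive rates weighted by the \emph{same} objective prior $q(b_{k}=1)$, term-by-term monotonicity yields $L(g_{D},\tau^{*},\alpha)\le L^{\sigma}(g_{D})$, as claimed.

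I expect the main obstacle to be the reduction in the first paragraph: rigorously showing that the worst case, over all membership vectors and all individuals, of the Gaussian LRT power is controlled by the single canonical test of Lemma~\ref{lemma:normal_relationship}. This requires cleanly separating the two sources of mean shift---the sensitivity of $f_{j}$ and the mechanism-mean gap---and verifying that the prescribed variance $\mathtt{V}^{j}=(m/(K^{\dagger}\widehat{\mathtt{M}}_{j}))^{2}$ calibrates the standardized separation uniformly in $b$, so the worst-case LRT power cannot exceed the canonical $1-\beta^{\alpha}_{|Q|}$. Once this calibration and Proposition~\ref{thm:attacker_mirror_mu} are in place, the monotonicity and prior-weighting steps are comparatively routine.
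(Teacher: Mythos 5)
Your proposal is correct and takes essentially the same route as the paper: the paper obtains this corollary as an immediate consequence of Theorem~\ref{thm:Gaussian_comparison}, and your argument is an inlined reconstruction of that theorem's proof --- the reduction of the Gaussian mechanism's LRT power to the canonical test (as in Lemma~\ref{lemma:gpd}), the monotonicity reading of Lemma~\ref{lemma:normal_relationship}, and the aggregation of per-individual rates via Proposition~\ref{thm:attacker_mirror_mu} --- followed by the same one-line passage from $L(g_{D},\tau^{*},\alpha)\leq L^{\sigma}(g_{D})$ to robustness of the Bayesian-optimal mechanism. The only difference is economy: since the condition $|Q|\leq \mathcal{F}\left(\alpha,\mu^{\sigma}_{0|1}[|Q|]\right)$ is exactly the hypothesis $\mathcal{F}\left(\alpha,\mu^{\sigma}_{0|1}[m]\right)\geq m$ of Theorem~\ref{thm:Gaussian_comparison}, the paper simply cites that theorem's conclusion rather than re-deriving it.
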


In this section, we relax Theorem  \ref{thm:Gaussian_comparison} and study the comparison between the Bayesian attacks with arbitrary subjective priors and the optimal LRT attacks without considering the worst-case bound of the powers of the LRT attacks.
Suppose in addition that the number of individuals involved in the Beacon dataset is fixed to be $0<n< K$. 
For ease of exposition, we consider the noises added to all SNVs to be iid. 
Consider a Gaussian mechanism $g_{D}(\delta|b)=\prod_{j\in Q} g^{j}_{D}(\delta_{j}|b)$, where each $g^{j}_{D}(\cdot|b)$ is the density function of $\mathcal{N}(\mathtt{M}_{b}, \mathtt{V})$.
Let two adjacent membership vectors $b^{[k]}_{0}$ and $b^{[k]}_{1}$ differing in individual $k$'s $b_k$, where $b^{[k]}_{0}$ has $b_{k}=0$ and $b^{[k]}_{1}$ has $b_{k}=1$.
Define two hypotheses: $H^{[k]}_{0}$: the true membership is $b^{[k]}_{0}$ vs. $H^{[k]}_{1}$: the true membership is $b^{[k]}_{1}$.
For any $k\in U$, it is straightforward to see that each $\tilde{y}_{j}=\tilde{x}_{j} + \tilde{\delta}_{j}$ is a Gaussian random variable.
That is, 
$\tilde{y}_{j}\sim \mathcal{N}(\mathtt{M}_{0}+x^{0}_{j}, \mathtt{V})$ under $H^{[k]}_{0}$ and $\tilde{y}_{j}\sim \mathcal{N}(\mathtt{M}_{1}+x^{1}_{j}, \mathtt{V})$ under $H^{[k]}_{1}$, where $\mathtt{M}_{i}= \mathtt{M}_{b^{[k]}_{i}}$ and $x^{i}_{j} = f(b^{[k]}_{i},d_{i})$ is the unperturbed summary statistics given $b^{[k]}_{i}$ with $d_{i}$, for $i\in\{0,1\}$.
Then, given any $(b^{[k]}_{0}, b^{[k]}_{1})$, the power of the optimal $\alpha$-LRT performed upon the observation $y_{j}$ for all $j\in Q$ can be obtained as
\[
\begin{aligned}
    T(\mathcal{N}(\mathtt{M}_{0}+x^{0}_{j}, \mathtt{V}), \mathcal{N}(\mathtt{M}_{1}+x^{1}_{j}, \mathtt{V}))(\alpha)=\Phi\left(\Phi^{-1}\left(1-\alpha\right) - \frac{|\mathtt{M}_{1} - \mathtt{M}_{0} + x^{1}_{j} - x^{0}_{j} |}{\sqrt{\mathtt{V}}}\right),
\end{aligned}
\]
where $\Phi$ is the cumulative distribution function (CDF) of the standard normal distribution.

Under the assumption of linkage equilibrium (i.e., each SNV is independent of the others), the power of the optimal $\alpha$-LRT performed upon $y=(y_j)_{j}$ can be obtained by the tensor product of $|Q|$ trade-off functions \cite{dong2021gaussian}.
In particular, the power can be represented by
\[
\begin{aligned}
    &T\left(\times_{j\in Q}\mathcal{N}(\mathtt{M}_{0}+x^{0}_{j}, \mathtt{V}),  \times_{j\in Q} \mathcal{N}(\mathtt{M}_{1}+x^{1}_{j}, \mathtt{V})\right)(\alpha)=T\left(\text{N}^{[k]}_{0}, \text{N}^{[k]}_{1}\right)(\alpha),
\end{aligned}
\]
where $\text{N}^{[k]}_{0} = \mathcal{N}(\mathtt{M}_{0}+x^{0}_{1}, \dots, \mathtt{M}_{0}+x^{0}_{|Q|}, \Sigma(\mathtt{V}))$ and $\text{N}^{[k]}_{1} = \mathcal{N}(\mathtt{M}_{1}+x^{1}_{1}, \dots, \mathtt{M}_{0}+x^{1}_{|Q|}, \Sigma(\mathtt{V}))$, in which $\Sigma(\mathtt{V})$ is a $|Q|\times|Q|$ diagonal matrix where each principal diagonal element is $\mathtt{V}$.
The Mahalanobis distance for the joint distributions is
\[
d_{\Sigma(V)}\left((\mathtt{M}_{0}+x^{0}_{1}, \dots, \mathtt{M}_{0}+x^{0}_{|Q|}), (\mathtt{M}_{1}+x^{1}_{1}, \dots, \mathtt{M}_{1}+x^{1}_{|Q|})\right)=\sqrt{\sum_{j\in Q}\frac{\left(\mathtt{M}_{1} - \mathtt{M}_{0} + x^{1}_{j} - x^{0}_{j} \right)^{2}}{\mathtt{V}}}.
\]
Therefore, we have
\[
\begin{aligned}
    T\left(\text{N}^{[k]}_{0}, \text{N}^{[k]}_{1}\right)(\alpha) =& \Phi\left(\Phi^{-1}\left(1-\alpha \right) -\sqrt{\sum_{j\in Q}\frac{\left(\mathtt{M}_{1} - \mathtt{M}_{0} + x^{1}_{j} - x^{0}_{j} \right)^{2}}{\mathtt{V}}} \right)\\
    =& T\left(\mathcal{N}(0,1), \mathcal{N}(\mathtt{M}_{eq}[b^{[k]}_{0}, b^{[k]}_{1}], 1)\right)(\alpha),
\end{aligned}
\]
where $\mathtt{M}_{eq}[b^{[k]}_{0}, b^{[k]}_{1}] = \sqrt{\sum_{j\in Q}\frac{\left(\mathtt{M}_{1} - \mathtt{M}_{0} + x^{1}_{j} - x^{0}_{j} \right)^{2}}{\mathtt{V}}}$, in which we show $[b^{[k]}_{0}, b^{[k]}_{1}]$ to indicate that the trade-off function is based on $b^{[k]}_{0}$ and $b^{[k]}_{1}$.

Let $b^{[k]}_{0}=(b_{k}=0, \hat{b}_{-k})$ and $b^{[k]}_{1}=(b_{k}=1, \hat{b}_{-k})$.
Define 
\[
\beta(\alpha,q)\equiv \sum\nolimits_{b_{k},\hat{b}_{-k}}T\left(\mathcal{N}(0,1), \mathcal{N}(\mathtt{M}_{eq}[b^{[k]}_{0}, b^{[k]}_{1}], 1)\right)(\alpha)q_{k}(b_{k}) q_{-k}(\hat{b}_{-k}),
\]
and
\[
\mu_{0|1}(\sigma, q)\equiv \sum_{b_{k},\hat{b}_{-k}}\sum_{s_{-k}}\int_{y}\mu_{\sigma}(s_{k}=0|y)\rho_{D}(y|b_{k}=1, \hat{b}_{-k})dyq_{k}(b_{k}) q_{-k}(\hat{b}_{-k}).
\]
In addition, define
\[
\Delta(\alpha, \sigma, q)\equiv \mu_{0|1}(\sigma,q)- \beta(\alpha,q). 
\]
The following corollary is straightforward.

\begin{corollary}\label{corollary:gaussian_prior}
    Let $g_{D}(\delta|b)=\prod_{j\in Q} g^{j}_{D}(\delta_{j}|b)$ be a Gaussian mechanism, where each $g^{j}_{D}(\cdot|b)$ is the density function of $\mathcal{N}(\mathtt{M}_{b}, \mathtt{V})$.
    Then, $L(g_{D}, \tau^{*},\alpha)\leq L^{\sigma}(g_{D})$ if and only if $\Delta(\alpha, \sigma, q)\geq 0$.
\end{corollary}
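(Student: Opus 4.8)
The plan is to sharpen Theorem~\ref{thm:Gaussian_comparison} from a one-sided sufficient condition into an exact equivalence by computing both privacy losses in closed form and comparing them term by term, replacing the worst-case-over-individuals bound $\mu^{\sigma}_{0|1}[m]$ by the true-prior averages. The key structural facts I would exploit are that the privacy value $v(s,b)=\sum_{k}s_{k}b_{k}$ and the attacker cost $c_{A}(s)=\sum_{k}s_{k}$ are additive across individuals, and that $g_{D}(\delta|b)=\prod_{j}g^{j}_{D}(\delta_{j}|b)$ makes the SNVs independent. Together these let the whole comparison decouple individual by individual, the only coupling of an individual $k$ to the others being through their configuration $\hat{b}_{-k}$, which I average out with $q_{-k}$.

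For the optimal $\alpha$-LRT, I would fix $k$ and $\hat{b}_{-k}$ so that the test is the simple test between the adjacent vectors $b^{[k]}_{0}$ and $b^{[k]}_{1}$. By the Neyman--Pearson lemma and the Gaussian tensorization recalled in this appendix, the power of this test equals $1-T(\mathcal{N}(0,1),\mathcal{N}(\mathtt{M}_{eq}[b^{[k]}_{0},b^{[k]}_{1}],1))(\alpha)$. Since only the event $b_{k}=1$ contributes to $v$, the $k$-th summand of $L(g_{D},\tau^{*},\alpha)$ is $q_{k}(b_{k}=1)$ times the $q_{-k}$-average of this power; collecting terms and using that the averaged false-negative rate is exactly $\beta(\alpha,q)$ gives $L(g_{D},\tau^{*},\alpha)=(\sum_{k}q_{k}(b_{k}=1))(1-\beta(\alpha,q))$.

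For the $\sigma$-Bayesian WCPL I would use the separability of $u_{A}=-v+\gamma c_{A}$ to argue that an interim best response in $\mathcal{BR}^{\sigma}_{\Gamma}[g_{D}]$ acts independently on each coordinate, playing $s_{k}=1$ exactly when the posterior membership odds induced by $\mu_{\sigma}$ exceed the marginal cost $\gamma$. Integrating the resulting action probability $\mu_{\sigma}(s_{k}=0\mid y)$ against $\rho_{D}(y\mid b_{k}=1,\hat{b}_{-k})$ and averaging over $\hat{b}_{-k}$ with $q_{-k}$ reproduces precisely $\mu_{0|1}(\sigma,q)$, so that $L^{\sigma}(g_{D})=(\sum_{k}q_{k}(b_{k}=1))(1-\mu_{0|1}(\sigma,q))$.

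Subtracting, $L^{\sigma}(g_{D})-L(g_{D},\tau^{*},\alpha)$ equals $(\sum_{k}q_{k}(b_{k}=1))$ times the gap between the two averaged false-negative rates $\beta(\alpha,q)$ and $\mu_{0|1}(\sigma,q)$; since this prefactor is strictly positive, the ordering of the two losses is controlled entirely by the sign of $\Delta(\alpha,\sigma,q)$, which establishes the stated equivalence. I expect the Bayesian step to be the main obstacle: one must justify that the worst-case selection in $L^{\sigma}(g_{D})=\max_{h_{A}\in\mathcal{BR}^{\sigma}[g_{D}]}L(g_{D},h_{A})$ really decomposes into the per-individual posterior-threshold rules and produces the false-negative rate $\mu_{0|1}(\sigma,q)$, including a check that on the set of observations where the attacker is indifferent the loss-maximizing tie-break is still the one encoded by $\mu_{\sigma}(s_{k}=0\mid y)$. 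By contrast, once the tensorized trade-off function is in hand the LRT side is routine bookkeeping, and the symmetry of the setup (iid noise and fixed involvement $n$) ensures the per-individual rates $\beta(\alpha,q)$ and $\mu_{0|1}(\sigma,q)$ are common across $k$ and factor cleanly out of the sum.
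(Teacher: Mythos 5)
Your overall route---reduce both losses to prior-weighted, per-individual true-positive rates and compare them---is the same one the paper takes: its own proof is essentially two sentences that interpret $1-\beta(\alpha,q)$ as the expected power of the optimal $\alpha$-LRT, interpret $1-\mu_{0|1}(\sigma,q)$ as the expected accuracy of the posterior-mirroring attack, and invoke Proposition \ref{thm:attacker_mirror_mu}. The genuine gap is in how you obtain the Bayesian side. An interim best response to $\mu_{\sigma}$ is a \emph{hard threshold} rule: since $u_{A}(s,b)=\sum_{k}s_{k}(\gamma-b_{k})$, the coordinates decouple and the cost-minimizing action is $s_{k}=1$ iff $\mu_{\sigma}(b_{k}=1|y)\geq\gamma$. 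The false-negative rate of that rule is $\int\mathbf{1}\{\mu_{\sigma}(b_{k}=1|y)<\gamma\}\rho_{D}(y|b_{k}=1,\hat{b}_{-k})\,dy$, which is \emph{not} $\int\mu_{\sigma}(s_{k}=0|y)\rho_{D}(y|b_{k}=1,\hat{b}_{-k})\,dy$ except in degenerate cases, and no tie-breaking check on the indifference set converts one into the other. The integrand defining $\mu_{0|1}(\sigma,q)$ is the action probability of the \emph{randomized mirror strategy} $h^{\mu}_{A}(s|r)=\mu_{\sigma}(b|r)\mathbf{1}\{s=b\}$, and the statement that this strategy both lies in $\mathcal{BR}^{\sigma}[g_{D}]$ and attains the worst case, i.e. $L^{\sigma}(g_{D})=Z(g_{D},\sigma)$, is exactly Proposition \ref{thm:attacker_mirror_mu}, which the paper proves by a separate Blackwell-garbling argument. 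You correctly flag this step as the main obstacle, but the mechanism you sketch (per-coordinate posterior-threshold rules) produces the wrong false-negative rate; the proposition must simply be imported, as the paper does, after which the Bayesian side is immediate.

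There is also a sign problem you wave through. Carried out carefully, your decomposition gives
\[
L^{\sigma}(g_{D})-L(g_{D},\tau^{*},\alpha)=\Bigl(\sum\nolimits_{k}q_{k}(b_{k}=1)\Bigr)\bigl(\beta(\alpha,q)-\mu_{0|1}(\sigma,q)\bigr)=-\Bigl(\sum\nolimits_{k}q_{k}(b_{k}=1)\Bigr)\Delta(\alpha,\sigma,q),
\]
so your computation establishes $L(g_{D},\tau^{*},\alpha)\leq L^{\sigma}(g_{D})$ iff $\Delta(\alpha,\sigma,q)\leq 0$ --- the \emph{opposite} orientation of the statement under the printed definition $\Delta=\mu_{0|1}-\beta$, yet you assert the stated equivalence follows. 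To be fair, the mismatch originates in the paper: its explanatory sentence (Bayesian accuracy ``higher than'' LRT power) describes $1-\mu_{0|1}\geq 1-\beta$, i.e. $\Delta\leq 0$ as printed, and consistency with Theorem \ref{thm:Gaussian_comparison} --- whose hypothesis $\mathcal{F}(\alpha,\mu^{\sigma}_{0|1}[m])\geq m$ amounts to $\mu^{\sigma}_{0|1}[m]\leq\widehat{\beta}$ by Lemma \ref{lemma:normal_relationship} --- likewise requires $\Delta$ to mean $\beta-\mu_{0|1}$. A correct treatment must surface this and fix the definition (or reverse the inequality in the conclusion); silently gluing a correct computation to a mis-signed statement leaves the proof wrong as written. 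A last, minor point: factoring the common rates $\beta(\alpha,q)$ and $\mu_{0|1}(\sigma,q)$ out of the sum over $k$ needs more than iid noise and fixed $n$, since $\mathtt{M}_{eq}[b^{[k]}_{0},b^{[k]}_{1}]$ depends on individual $k$'s genome; otherwise $\Delta$ must be read as the $k$-averaged (or per-$k$) gap.
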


Corollary \ref{corollary:gaussian_prior} represents shows a condition for $L(g_{D}, \tau^{*}, \alpha)\leq L^{\sigma}(g_{D})$ when the Bayesian attacker's subjective prior $\sigma$ is arbitrary.
Here, $1-\beta(\alpha,q)$ is the expected power of the $\alpha$-LRT attacker perceived by the vNM defender, while $1-\mu_{0|1}$ is the expected posterior beliefs of $\{s_{k}=1\}_{k\in U}$.
Thus, $\Delta(\alpha, \sigma, q)\geq 0$ implies that the expected accuracy of inferring $\{s_{k}=1\}$ using the posterior beliefs is higher than the expected power of the $\sigma$-LRT.
By Proposition \ref{thm:attacker_mirror_mu}, we have that the Bayesian strategy that mirrors the posterior belief leads to the WCPL.
Therefore, given any $\rho_{D}$ and the true prior $q$, $\Delta(\alpha, \sigma, q)\geq 0$ is equivalent to $L(g_{D}, \tau^{*},\alpha)\leq L^{\sigma}(g_{D})$.
This condition is independent of the sensitivity of $f$ but depends on $g_{D}$ and the true prior $q$.

\subsection{LRT vNM Defender}

We use $g_{\mathtt{N}}$, $g_{\mathtt{Adp}}$, and $g_{\mathtt{Opt}}$ to denote the typical solutions to (\ref{eq:nariveLRT}), (\ref{eq:adaptiveLRT}) and (\ref{eq:optimalLRT}), respectively.
Suppose that all $g_{\mathtt{N}}$, $g_{\mathtt{Adp}}$, and $g_{\mathtt{Opt}}$ are Gaussian mechanisms.
We refer to the defender using $g_{\mathtt{N}}$, $g_{\mathtt{Adp}}$, and $g_{\mathtt{Opt}}$, respectively, as the naive, adaptive, and optimal \textit{LRT vNM defender}.
Then, the WCPL is captured by the power of the UMP test given a significant level $\alpha$.
Due to the Neyman-Pearson lemma, the WCPL is the power or the TPR of the optimal $\alpha$-LRT, $1-T[P^{k}_{0}(g_{D}), P^{k}_{1}(g_{D})](\alpha)$.

\begin{corollary}\label{corollary:LRT_defender}
    Fix any $g_{D}$ and $\alpha$.
    Let $\mathtt{TPR}(g_{D}, \sigma)$ denote the maximum TPR can be obtained by a $\sigma$-Bayesian attacker under $g_{D}$.
    Suppose that $g_{D}$ is chosen such that the WCPL is $1-T[P^{k}_{0}(g_{D}), P^{k}_{1}(g_{D})](\alpha)$.
    Then, the following hold.
    \begin{itemize}
        \item[(i)] If $\sigma$ is an informative or non-informative prior, then $\mathtt{TPR}(g_{D}, \sigma)\geq1-T[P^{k}_{0}(g_{D}), P^{k}_{1}(g_{D})](\alpha)$.
        \item[(ii)] Suppose that $g_{D}$ is Gaussian as described in Theorem \ref{thm:Gaussian_comparison}.
        If $\mathcal{F}\left(\alpha, \mu^{\sigma}_{0|1}[m]\right)\geq m$, then $\mathtt{TPR}(g_{D}, \sigma)\geq1-T[P^{k}_{0}(g_{D}), P^{k}_{1}(g_{D})](\alpha)$. If $\mathcal{F}\left(\alpha, \mu^{\sigma}_{0|1}[m]\right)< m$, then $\mathtt{TPR}(g_{D}, \sigma)< 1-T[P^{k}_{0}(g_{D}), P^{k}_{1}(g_{D})](\alpha)$.
    \end{itemize}
\end{corollary}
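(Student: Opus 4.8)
The plan is to recognize that Corollary \ref{corollary:LRT_defender} is the true-positive-rate (TPR) restatement of Theorems \ref{thm:Bayesian_better_LRT} and \ref{thm:Gaussian_comparison}: the work is to strip the prior weights from the privacy-loss comparisons and reduce each claim to a single per-individual TPR inequality. First I would record the bridge between privacy loss and TPR. Since $v(s,b)=\sum_{k}s_{k}b_{k}$, for any attacker strategy $h_{A}$ we have $L(g_{D},h_{A})=\sum_{k}q(b_{k}=1)\,\mathtt{TPR}_{k}(h_{A})$ with $\mathtt{TPR}_{k}(h_{A})=P[s_{k}=1\mid b_{k}=1]$ computed under $g_{D}$ and $h_{A}$. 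Because $g_{D}$ is assumed to be an LRT vNM defender whose LRT worst-case privacy loss equals the optimal $\alpha$-LRT power, the per-individual LRT TPR is exactly $1-T[P^{k}_{0}(g_{D}),P^{k}_{1}(g_{D})](\alpha)$. On the Bayesian side, Proposition \ref{thm:attacker_mirror_mu} identifies the best response attaining the WCPL as the one mirroring the posterior $\mu_{\sigma}$, whose per-individual success probability is $1-P[s_{k}=0\mid b_{k}=1]$; the quantity $\mathtt{TPR}(g_{D},\sigma)$ is precisely this maximal attainable value. Each inequality below is thus equivalent, after multiplying by $q(b_{k}=1)$ and summing, to the corresponding weighted privacy-loss statement.

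For part (i), I would invoke Theorem \ref{thm:Bayesian_better_LRT} directly. When $\sigma$ is aligned (informative or non-informative), the theorem gives $L^{\sigma}(g_{D})\geq L(g_{D},\tau^{*},\alpha)$. Since both sides are prior-weighted sums of the respective TPRs sharing the common nonnegative weights $q(b_{k}=1)$, the statement I extract is the representative TPR comparison that already underlies the theorem's proof, namely $\mathtt{TPR}(g_{D},\sigma)\geq 1-T[P^{k}_{0}(g_{D}),P^{k}_{1}(g_{D})](\alpha)$. No new argument beyond reading Theorem \ref{thm:Bayesian_better_LRT} at the TPR level is required here.

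For part (ii), I would work inside the Gaussian worst-case framework of Theorem \ref{thm:Gaussian_comparison}, where the attacker knows every membership bit except one, the relevant Bayesian TPR is $1-\mu^{\sigma}_{0|1}[m]$, and the optimal $\alpha$-LRT power is $1-\beta^{\alpha}$ with $\beta^{\alpha}=T[P^{k}_{0}(g_{D}),P^{k}_{1}(g_{D})](\alpha)$ and $\mathcal{F}(\alpha,\beta^{\alpha})=m$ by Lemma \ref{lemma:normal_relationship}(i). The key computation is that $\mathcal{F}(\alpha,\cdot)$ is strictly decreasing in its second argument: $z_{\beta}$ is the $100(1-\beta)$th percentile of the standard normal and hence decreases in $\beta$, so $(z_{\alpha}+z_{\beta})^{2}$ decreases over the range where $z_{\alpha}+z_{\beta}>0$. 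Therefore $\mathcal{F}(\alpha,\mu^{\sigma}_{0|1}[m])\geq m=\mathcal{F}(\alpha,\beta^{\alpha})$ is equivalent to $\mu^{\sigma}_{0|1}[m]\leq\beta^{\alpha}$, i.e. $1-\mu^{\sigma}_{0|1}[m]\geq 1-\beta^{\alpha}$, with the strict reverse whenever $\mathcal{F}(\alpha,\mu^{\sigma}_{0|1}[m])<m$. Combining this with Proposition \ref{thm:attacker_mirror_mu}, which yields $\mathtt{TPR}(g_{D},\sigma)=1-\mu^{\sigma}_{0|1}[m]$, and with the identification $1-\beta^{\alpha}=1-T[P^{k}_{0}(g_{D}),P^{k}_{1}(g_{D})](\alpha)$, both claims of part (ii) follow.

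The main obstacle I anticipate is the bookkeeping around $\mu^{\sigma}_{0|1}[m]$: I must justify via Proposition \ref{thm:attacker_mirror_mu} that the posterior-mirroring best response is exactly what realizes $\mathtt{TPR}(g_{D},\sigma)$, and that within the single-unknown-individual worst case this maximal attainable TPR coincides with $1-\mu^{\sigma}_{0|1}[m]$ rather than some other per-individual value. The second delicate point is tracking the direction of monotonicity of $\mathcal{F}(\alpha,\cdot)$ so that the $\mathcal{F}$-threshold condition flips correctly into the TPR comparison, preserving strictness; this relies on $z_{\alpha}+z_{\beta}$ remaining positive over the relevant range of $\alpha$ and $\beta$, which I would state as the operative regime.
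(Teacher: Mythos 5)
Your overall route is the paper's own: part (i) is read off Theorem~\ref{thm:Bayesian_better_LRT} (aligned priors give $L^{\sigma}(g_{D})\geq L(g_{D},\tau^{*},\alpha)$), and part (ii) is read off Theorem~\ref{thm:Gaussian_comparison} together with the monotonicity of $\mathcal{F}$ from Lemma~\ref{lemma:normal_relationship}. In fact your write-up is more careful than the paper's two-line argument: the bridge identity $L(g_{D},h_{A})=\sum_{k}q(b_{k}=1)\,\mathtt{TPR}_{k}(h_{A})$, plus the observation that a weighted-sum inequality against the \emph{constant} LRT power forces $\max_{k}\mathtt{TPR}_{k}\geq 1-T[P^{k}_{0}(g_{D}),P^{k}_{1}(g_{D})](\alpha)$, correctly fills in the ``Hence'' that the paper leaves implicit in part (i); and your monotonicity computation for $\mathcal{F}(\alpha,\cdot)$ is exactly how Lemma~\ref{lemma:normal_relationship}(ii) is deployed in the proof of Theorem~\ref{thm:Gaussian_comparison}.

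The genuine gap---which you flagged as your ``main obstacle'' but did not close---is the identity $\mathtt{TPR}(g_{D},\sigma)=1-\mu^{\sigma}_{0|1}[m]$, on which the entire second claim of part (ii) rests. By definition, $\mu^{\sigma}_{0|1}[m]$ is a \emph{maximum over individuals} of the miss probability, so $1-\mu^{\sigma}_{0|1}[m]$ is the \emph{minimum} per-individual TPR of the posterior-mirroring strategy, whereas $\mathtt{TPR}(g_{D},\sigma)$ is the \emph{maximum} attainable TPR. You therefore get $\mathtt{TPR}(g_{D},\sigma)\geq 1-\mu^{\sigma}_{0|1}[m]$ for free, which suffices for the first claim of (ii), but the reverse direction---needed to conclude $\mathtt{TPR}(g_{D},\sigma)<1-T[P^{k}_{0}(g_{D}),P^{k}_{1}(g_{D})](\alpha)$ from $\mu^{\sigma}_{0|1}[m]>\beta^{\alpha}$---does not follow: an upper bound on the worst individual's TPR says nothing about the best individual's. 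To be fair, the paper's own proof has the same hole: it asserts that $\mathcal{F}(\alpha,\mu^{\sigma}_{0|1}[m])<m$ implies $L(g_{D},\tau^{*},\alpha)>L^{\sigma}(g_{D})$, which is the converse of Theorem~\ref{thm:Gaussian_comparison}, while the paper's Appendix~\ref{sec:app_GDM} explicitly concedes that this converse ``generally cannot'' hold. So your proposal reproduces the paper's argument, including its defect; actually closing it would require either restricting to the single-unknown-individual worst case, where the max and min per-individual TPRs coincide, or an additional argument that the Gaussian mechanism equalizes TPRs across individuals.
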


Part \textit{(i)} of Corollary \ref{corollary:LRT_defender} follows Theorem \ref{thm:Bayesian_better_LRT}.
In particular, from Theorem \ref{thm:Bayesian_better_LRT} we have $L^{\sigma}(g_{D})\geq L(g_{D}, \tau^{*}, \alpha)$ for aligned subjective priors.
Hence, $\mathtt{TPR}(g_{D}, \sigma)\geq1-T[P^{k}_{0}(g_{D}), P^{k}_{1}(g_{D})](\alpha)$.
Part \textit{(ii)} of Corollary \ref{corollary:LRT_defender} follows Theorem \ref{thm:Gaussian_comparison}.
If $\mathcal{F}\left(\alpha, \mu^{\sigma}_{0|1}[m]\right)\geq m$, Theorem \ref{thm:Gaussian_comparison} implies that $L(g_{D}, \tau^{*},\alpha)\leq L^{\sigma}(g_{D})$, which gives $\mathtt{TPR}(g_{D}, \sigma)\geq1-T[P^{k}_{0}(g_{D}), P^{k}_{1}(g_{D})](\alpha)$.
If $\mathcal{F}\left(\alpha, \mu^{\sigma}_{0|1}[m]\right)< m$, then $L(g_{D}, \tau^{*},\alpha)> L^{\sigma}(g_{D})$, which implies $\mathtt{TPR}(g_{D}, \sigma)< 1-T[P^{k}_{0}(g_{D}), P^{k}_{1}(g_{D})](\alpha)$.

%%%%%%%%%%%%%%%%%%%%%%%%%%%%%%%%
\section{Differential Privacy}\label{sec:app_differential_privacy}

\paragraph{Standard Differential Privacy}
Differential privacy \cite{dwork2006calibrating,dwork2006differential}
is a widely used data privacy preservation technique based on probabilistic distinguishability. 
Formally, we say a randomized mechanism $F$ is $(\epsilon,\varrho)$-differentially private if for any two adjacent dataset $\mathtt{D}$ and $\mathtt{D}'$ differing in only one entry if holds that
\[
\textup{P}\left(F((\mathtt{D}'))\in \mathcal{F}\right)\leq e^{\epsilon}\textup{P}\left(F(\mathtt{D}')\in \mathcal{F}\right) + \varrho
\]
for any possible subset $\mathcal{F}$ of the image of the mechanism $F$.
The parameter $\epsilon$ is usually referred to as the \textit{privacy budget}, which is small but non-negligible. 
$(\epsilon,0)$-DP or $\epsilon$-DP is known as \textit{pure differential privacy}, while with a non-zero $\varrho>0$, $(\epsilon,\varrho)$-DP is viewed as \textit{approximate differential privacy}.

\paragraph{Sensitivity}
Define the \textit{sensitivity} of $f$ by
\[
\mathtt{sens}(f) \equiv \max_{b,b'}|f(b,d) - f(b', d')|,
\]
where the maximum is over all adjacent datasets $(b, d)$ and $(b', d')$ where $b$ and $b'$ differs only in a single individual with $d$ and $d'$ as the corresponding SNVs, respectively.
For a given SNV in a dataset with $B\subseteq U$, $d_{kj}$ is either $0$ or $1$.
Thus, the maximum possible difference between the averages over the columns that differ in one entry is $\frac{1}{|B|}$.
Let $1\leq K^{\dagger} \leq K$ be the minimum number of individuals involved in the Beacon dataset.
Hence, $\mathtt{sens}(f) = \frac{m}{K^{\dagger}}$.
Suppose we choose $g_{D}$ as a Laplace mechanism.
That is, $g_{D}(\cdot|b)$ is $\mathtt{Laplace}(0, \frac{ \mathtt{sens}(f)}{\epsilon})$, for all $b\in W$.
Then, it satisfies (pure) $\epsilon$-differential privacy if $\mathtt{R}$ is the identity function since the Laplace mechanism performs output perturbation \cite{dwork2006differential}.
Due to the post-processing property of the standard differential privacy, it is clear that the Laplace mechanism $g_{D}$ is also $\epsilon$-differentially private for any non-identity $\mathtt{R}$.

\paragraph{Choice of $\epsilon$}
The sensitivity of the summary statistics function $f(\cdot)$ has sensitivity $\frac{m}{K^{\dagger}}$ (see Appendix \ref{sec:app_differential_privacy}), where $m=|Q|$ and $1\leq K^{\dagger}\leq K$ is the minimum number of individuals in $U$ involved in the Beacon dataset.
In general, $m \gg K$. Hence, a small value of $\epsilon$ (e.g., on the order of one to $10$) leads to very large scalar for the Laplace distribution. Thus, we choose a relatively large value of $\epsilon$ (e.g., $\epsilon=600$) in our experiments.

\paragraph{Gaussian Differential Privacy }
Next, we consider the scenario when $g_{D}$ is a Gaussian mechanism described in Theorem \ref{thm:Gaussian_comparison}.
In particular, given any $b\in W$, $g^{j}_{D}(\cdot|b)\in\Delta(\mathcal{Y}_{j})$ is the density function of $\mathcal{N}(\mathtt{M}^{j}_{b}, \mathtt{V}^{j})$ for all $j\in Q$, where $\mathtt{V}^{j}=\left(\frac{m}{K^{\dagger}\widehat{\mathtt{M}}_{j}}\right)^{2}$ and $\max_{b,b'}|\mathtt{M}^{j}_{b}-\mathtt{M}^{j}_{b'}|\leq \frac{m}{K^{\dagger}}$, for all $j\in Q$.
By Lemma \ref{lemma:gpd}, we have
\[
    T\left[P_{b}(g^{j}_{D}), P_{b'}(g^{j}_{D})\right](\alpha)\geq T\left[\mathcal{N}(0,1), \mathcal{N}(\widehat{\mathtt{M}}_{j}, 1)\right],
\]
for all adjacent $b$ and $b'$.
Therefore, each $g^{j}_{D}$ satisfies $\widehat{\mathtt{M}}_{j}$-\textit{Gaussian differential privacy} ($\widehat{\mathtt{M}}_{j}$-GDP) \cite{dong2021gaussian}, for all $j\in Q$.
By Corollary 2.1 of \cite{dong2021gaussian}, this $\widehat{\mathtt{M}}_{j}$-GDP mechanism $g^{j}_{D}$ is also $(\epsilon_{j}, \varrho_{j}(\epsilon_{j}))$-DP for all $\epsilon_{j}\geq 0$ with
\[
\varrho_{j}(\epsilon_{j}) = \Phi\left(-\frac{\epsilon_{j}}{\widehat{\mathtt{M}}_{j}} + \frac{\widehat{\mathtt{M}}_{j}}{2}\right) - e^{\epsilon_{j}}\Phi\left(-\frac{\epsilon_{j}}{\widehat{\mathtt{M}}_{j}} - \frac{\widehat{\mathtt{M}}_{j}}{2}\right),
\]
where $\Phi$ is the cumulative distribution function (CDF) of the standard normal distribution.
Under the assumption of linkage equilibrium and the construct of $g_{D}(y|b)=\prod_{j\in Q}g^{j}_{D}(y_j|b)$, the Gaussian defense mechanism $g_{D}$ is $\overline{\mathtt{M}}$-GDP with $\overline{\mathtt{M}} = \sqrt{\sum\nolimits_{j\in Q}\widehat{\mathtt{M}}^{2}_{j}}$ \cite{dong2021gaussian} due to the composition property.

\section{Proof of Theorem \ref{thm:Bayesian_better_LRT}}\label{sec:app_proof_thm1}

We start by proving Proposition \ref{thm:attacker_mirror_mu}.
Define 
\[
Z(g_{D},\sigma)\equiv \sum\nolimits_{b,s}\int_{r}v(s,b)\mu_{\sigma}(s|r)\rho_{D}(r|b)dr q(b),
\]
which is independent of the $\sigma$-Bayesian attacker's strategy $h_{A}$ and the test conclusions of $\alpha$-LRT attacker.

\begin{restatable}{proposition}{TheoremAttMirrMu}\label{thm:attacker_mirror_mu}
For any $g_{D}$ and $\sigma$, $Z(g_{D},\sigma)=L^{\sigma}(g_{D})$.
\end{restatable}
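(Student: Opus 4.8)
The plan is to reduce the worst-case quantity $L^\sigma(g_D)$, which is a maximum of the true-prior loss over the ex-ante best-response set $\mathcal{BR}^\sigma[g_D]$, to a pointwise (per-observation, per-individual) problem, and then to recognize that the resulting defender-pessimal best response is exactly the posterior-induced strategy appearing in $Z(g_D,\sigma)$. The first step is a change of measure: writing $m_\sigma(r)\equiv\sum_{b'}\rho_D(r|b')\sigma(b')$ and invoking Bayes' law in the form $\rho_D(r|b)\sigma(b)=m_\sigma(r)\mu_\sigma(b|r)$, the attacker's ex-ante objective factors as $U_A(g_D,h_A;\sigma)=\int_r m_\sigma(r)\,V_A(g_D,h_A,r;\sigma)\,dr$. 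Since $h_A(\cdot|r)$ may be chosen independently across $r$, this shows $\mathcal{BR}^\sigma[g_D]$ consists precisely of strategies that are interim-optimal for $m_\sigma$-almost every $r$ (consistent with $\mathcal{BR}^\sigma_\Gamma[g_D]\subseteq\mathcal{BR}^\sigma[g_D]$). In parallel I would expand the true-prior loss using $v(s,b)=\sum_k s_k b_k$, obtaining $L(g_D,h_A)=\sum_k\int_r h_A(s_k=1|r)\,w_k(r)\,dr$, where $w_k(r)\equiv\sum_{b:\,b_k=1}\rho_D(r|b)q(b)\ge 0$ and $h_A(s_k=1|r)$ is the marginal probability of claiming individual $k$.

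Next I would characterize the interim best response. Because $u_A(s,b)=\sum_k s_k(\gamma-b_k)$ is additively separable and linear in the coordinates of $s$, the interim cost at $r$ equals $\sum_k h_A(s_k=1|r)\,(\gamma-\mu_\sigma(b_k=1|r))$, so each coordinate is optimized by comparing the posterior marginal $\mu_\sigma(b_k=1|r)$ against the operational cost $\gamma$; the attacker claims $k$ when the posterior belief in membership dominates $\gamma$ and is indifferent on the tie set where the posterior marginal equals $\gamma$. Evaluating the induced claim probabilities against the nonnegative weights $w_k(r)$ and, on the tie set, resolving the attacker's indifference in the defender-pessimal direction yields a specific interim-optimal rule; identifying this rule with the posterior-induced strategy $\mu_\sigma(\cdot|r)$ that defines $Z(g_D,\sigma)$ is what would deliver $L^\sigma(g_D)=Z(g_D,\sigma)$. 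The interchange of the maximum over $\mathcal{BR}^\sigma[g_D]$ with the sum over $k$ and the integral over $r$ is legitimate precisely because the problem is separable across $k$ and the weights $w_k(r)$ are nonnegative, so the pointwise worst-case selections are simultaneously feasible and jointly optimal.

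The main obstacle I anticipate is exactly this identification. Optimality of a vNM attacker facing the linear cost $-v+\gamma c_A$ produces a threshold rule on the posterior marginal, namely $\mathbf{1}\{\mu_\sigma(b_k=1|r)\ge\gamma\}$, rather than a verbatim report of the posterior density; the substantive content of the proposition is therefore to read the symbol $\mu_\sigma(s|r)$ in $Z$ as this Bayes-optimal decision (with worst-case tie-breaking) and to verify that it lies in $\mathcal{BR}^\sigma[g_D]$ and maximizes $L(g_D,\cdot)$ over that set. Establishing this requires a careful treatment of the indifference region, since it is only on this set that the worst-case selection genuinely increases the loss. A secondary technical point is the behaviour on $m_\sigma$-null observation sets, where the attacker is unconstrained by interim optimality yet $w_k$ may be positive; I would check that such sets contribute identically to $Z(g_D,\sigma)$ and to the maximizing best response, so that the claimed equality is unaffected.
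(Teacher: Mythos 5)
Your reduction is a genuinely different route from the paper's, and every step you actually carry out is correct. The paper works at the level of whole strategies rather than coordinates: it first argues (its Proposition~\ref{prop:mu_least_privacy}, via the Blackwell-ordering Lemmas~\ref{lemma:attacker_preference_inclusive} and \ref{lemma:blackwell_garbling}) that any ex-ante best response factors through the posterior as a garbling $h_A = y\circ\mu_\sigma$, and that garbling can only lower the defender's loss, so the posterior-mirroring strategy $h^{\mu}_{A}(s|r)=\mu_{\sigma}(b|r)\mathbf{1}\{s=b\}$ is defender-pessimal among best responses; it then argues (the $S^{*}[b,r;g_{D}]$/$W^{\natural}[r]$ step) that $h^{\mu}_{A}$ is itself an ex-ante best response, and observes that $Z(g_{D},\sigma)$ is by definition the loss induced by $h^{\mu}_{A}$. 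Your change of measure, the decomposition $L(g_D,h_A)=\sum_k\int_r h_A(s_k{=}1|r)w_k(r)\,dr$, and the threshold characterization of interim best responses (claim $k$ iff $\mu_\sigma(b_k{=}1|r)>\gamma$, indifference at equality) are all sound, and more transparent than the paper's argument.

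However, the ``identification'' you defer to the end is not a technicality you can engineer around --- it fails, and your own computation shows why. The paper's proof makes explicit that $\mu_\sigma(s|r)$ in $Z$ is the literal mirror strategy, not a thresholded decision, so $Z=\sum_k\int_r \mu_\sigma(b_k{=}1|r)\,w_k(r)\,dr$, whereas your defender-pessimal best response yields $\sum_k\int_r \mathbf{1}\{\mu_\sigma(b_k{=}1|r)\geq\gamma\}\,w_k(r)\,dr$. These coincide only when every posterior marginal is $0$, $1$, or exactly $\gamma$; correspondingly, the mirror strategy is interim-optimal only in those degenerate cases (take $K=1$, $\gamma=1/2$, $\mu_\sigma(b_1{=}1|r)=0.9$: the unique best response claims with probability $1$, while the mirror claims with probability $0.9$), so generically $h^{\mu}_{A}\notin\mathcal{BR}^{\sigma}[g_{D}]$ and $Z$ is not even attained within the feasible set of the maximization defining $L^{\sigma}(g_{D})$. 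Reading $\mu_\sigma(s|r)$ ``as the Bayes-optimal decision'' changes the quantity $Z$ and hence the statement. What your analysis actually exposes is that the bridge the paper builds at this exact point is invalid: its claim that every $s$ in the support of $\mu_\sigma(\cdot|r)$ lies in $W^{\natural}[r]$ passes from $u_A(b,b)\mu_\sigma(b|r)\leq u_A(s,b)\mu_\sigma(b|r)$ (which holds only at $b=\hat s$) to comparing the \emph{diagonal} sum $\sum_b u_A(b,b)\mu_\sigma(b|r)$ with $\widehat{U}^{\natural}(\hat s,r)$ at a \emph{fixed} $\hat s$, which is exactly the conflation your coordinate-wise linearity rules out. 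So your proposal cannot be completed into a proof of the proposition as stated; the obstruction lies in the statement and the paper's own argument, not in your reduction, and your secondary concern about $m_\sigma$-null but $m_q$-positive observation sets is a further genuine (if smaller) issue of the same kind.
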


% \begin{restatable}{proposition}{PropifLRTisOptimal}\label{prop:ifLRTisOptimal}
% Fix $g_{D}$. Let $h^{\dagger}_{A}:\Gamma\mapsto\Delta(W)$ such that $h^{\dagger}_{A}(s_{k}=1|r)=\mathbf{1}\left\{y_{k}(r,\tau^{*})=1\right\}$ for all $r\in\Gamma$.
%     Suppose $\sigma$ is aligned. 
%     Then, $h^{\dagger}_{A}\in \mathcal{BR}^{\sigma}[g_{D}]$ if $L(g_{D}, h^{\dagger}_{A})> L(g_{D}, h^{*}_{A})$ for any $h^{\dagger}_{A}\neq h^{*}_{A}\in\mathcal{BR}^{\sigma}_{D}[g_{D}]$.
% \end{restatable}

Proposition \ref{thm:attacker_mirror_mu} demonstrates that the WCPL of a vNM defender under $\sigma$-Bayesian attacker can be fully characterized endogenously by $g_{D}$ and the attacker's subjective prior $\sigma$, and is independent of any BNE strategy $h_{D}\in \mathcal{BR}^{\sigma}[g_{D}]$.
% %
% Proposition \ref{prop:ifLRTisOptimal} considers a Bayesian strategy $h^{\dagger}_{A}$ that mirrors the LRT, when the Bayesian attacker has an informative or non-informative prior.
% It shows that if $h^{\dagger}_{A}$ induces a worse WCPL for the vNM defender using any $g_{D}$ than all other Bayesian best responses, then $h^{\dagger}_{A}$ must also be a Bayesian best response.
% %
Moreover, Proposition \ref{thm:attacker_mirror_mu} does not assume the informativeness of the subjective prior.
However, the WCPL ordering of Bayesian and LRT attacks shown in Theorem \ref{thm:Bayesian_better_LRT} requires the subjective priors to be informative or non-informative.
In general, when the subjective priors are misaligned, the comparison between $\sigma$-Bayesian and $\alpha$-LRT attacks needs to consider $\sigma$ and $\alpha$ in a case-by-case manner.
A straightforward way is to compare the corresponding WCPLs; i.e., $L^{\sigma}(g_{D})\leq (\textup{resp. }\geq) L(g_{D}, \tau^{*},\alpha)$ implies that the $\alpha$-LRT attack is (resp. not) worse than the $\sigma$-Bayesian attack for the vNM defender when $\sigma$ is misaligned.
We further compare $\sigma$-Bayesian and $\alpha$-LRT attacks when $\sigma$ is an arbitrary subjective priors in the next subsection when the defender adopts Gaussian defense mechanisms in a theoretically ideal manner.

%%%%%%%%%%%%%%%%%%%%%%%%%%%%%%%%%
\subsection{Poof of Proposition \ref{thm:attacker_mirror_mu}}\label{sec:app_proposition_1}

Given $\mu_{\sigma}$ (determined by $g_{D}$ and $\sigma$) and any $h_{A}$, let $\widehat{U}_{A}(h_{A}, b, r)\equiv \sum_{s}u_{A}(s,b)h_{A}(s|r)\mu_{\sigma}(b|r)$, which depends on the membership vector $b$ sampled by $\mu_{\sigma}$ but is independent of the samples $s$ drawn by $h_{A}$.
Define 
\[
S^{*}[b,r;g_{D}]\equiv\left\{h_{A}(\cdot|r)\middle|h_{A}(\cdot|r)\in\argmin\nolimits_{h'_{A}} \widehat{U}_{A}(h'_{A}, b, r)\right\},
\]
for all $b\in W$ with $\mu_{\sigma}(b|r)>0$, any $r\in\Gamma$,
where $S^{*}[b,r;g_{D}]$ depends on $g_{D}$ through $\mu_{\sigma}$.
We first show that there is a $h^{*}_{A}(\cdot|r)\in S^{*}[b,r;g_{D}]$ that assigns probability $1$ to $b$ (with $\mu_{\sigma}(b|r)>0$).
Suppose in contrast that $0\leq h^{*}_{A}(b|r)<1$.
Then, it holds that $\sum_{s: s\neq b}u_{A}(s,b)h_{A}(s|r)\mu_{\sigma}(b|r)>0$, which gives
\[
\begin{aligned}
    &\widehat{U}_{A}(h^{*}_{A}, b, r)= \sum\nolimits_{s}u_{A}(s,b)h_{A}(s|r)\mu_{\sigma}(b|r)\\
    &=\sum\nolimits_{s: s\neq b}u_{A}(s,b)h_{A}(s|r)\mu_{\sigma}(b|r)+ u_{A}(b,b)h_{A}(b|r)\mu_{\sigma}(b|r)>u_{A}(b,b)h_{A}(b|r)\mu_{\sigma}(b|r).
\end{aligned}
\]
Thus, $\widehat{U}_{A}(h^{*}_{A}, b, r)|_{h^{*}_{A}(b|r)\neq 1}>\hat{U}_{A}(h^{'}_{A}, b, r)|_{h^{'}_{A}(b|r)= 1}$, which contradicts to $h^{*}_{A}(\cdot|r)\in S^{*}[b,r;g_{D}]$.
Therefore, we have $u_{A}(b,b)\mu_{\sigma}(b|r)\leq \sum_{s}u_{A}(s,b)h_{A}(s|r)\mu_{\sigma}(b|r)$, for all $h_{A}(\cdot|r)$, $b\in W$, $r\in \Gamma$, where the equality holds when $h_{A}(\cdot|r)\in S^{*}[b,r;g_{D}]$.

Let $h^{\mu}_{A}:\Gamma\mapsto \Delta(W)$ mirror the posterior belief $\mu_{\sigma}$; i.e., $h^{\mu}_{A}(s|r)=\mu_{\sigma}(b|r)\mathbf{1}\{s=b\}$, for all $s,b\in W$, $r\in\Gamma$.
It is clear that $h^{\mu}_{A}(\cdot|r)\in S^{*}[b,r;g_{D}]$ for all $b\in W$.
Next, we show that if $h^{\mu}_{A}(s|r)$ is used by the $\sigma$-Bayesian attacker, it induces the WCPL for the vNM defender, which is captured by Proposition \ref{prop:mu_least_privacy}.

\begin{proposition}\label{prop:mu_least_privacy}
    Given any $g_{D}$ and $\sigma$, $L(g_{D}, h^{\mu}_{A},\sigma)\leq L(g_{D},h^{*}_{A},\sigma)$, for all $h^{*}_{A}\in\mathcal{BR}^{\sigma}[g_{D}]$. 
\end{proposition}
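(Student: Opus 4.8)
The plan is to exploit two features already in place: the additive separability of the privacy loss $v(s,b)=\sum_{k}s_{k}b_{k}$, and the per-realization optimality inequality derived just above. First I would record that both the defender's privacy loss and the attacker's interim cost depend on a strategy only through its per-individual claim probabilities $P^{h_{A}}_{k}(r)\equiv\sum_{s}s_{k}h_{A}(s|r)$. Writing $v$ out and using that $h_{A}(\cdot|r)$ is chosen independently of $b$, one gets $L(g_{D},h_{A})=\sum_{k}\int_{r}P^{h_{A}}_{k}(r)\bigl(\sum_{b}b_{k}\rho_{D}(r|b)q(b)\bigr)\,dr$, and the interim cost reduces to $V_{A}(g_{D},h_{A},r;\sigma)=\sum_{k}P^{h_{A}}_{k}(r)\bigl(\gamma-\mu_{\sigma}(b_{k}=1|r)\bigr)$. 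The whole comparison is thus separable across individuals $k$ and observations $r$, with nonnegative weights $\sum_{b}b_{k}\rho_{D}(r|b)q(b)\ge 0$ multiplying each term.

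Next I would use the pointwise bound $u_{A}(b,b)\mu_{\sigma}(b|r)\le\sum_{s}u_{A}(s,b)h_{A}(s|r)\mu_{\sigma}(b|r)$, with equality exactly on $S^{*}[b,r;g_{D}]$, to argue that the posterior-mirroring strategy $h^{\mu}_{A}$ lies in $\mathcal{BR}^{\sigma}[g_{D}]$ and, among best responses, is the one minimizing the expected operational cost $\mathbb{E}[c_{A}]$. Translating cost back into privacy via $v(s,b)=-u_{A}(s,b)+\gamma c_{A}(s)$, every best response incurs the same minimal attacker cost, so the gap $L(g_{D},h^{*}_{A})-L(g_{D},h^{\mu}_{A})$ should reduce to a nonnegative multiple of a difference of expected operational costs; integrating the per-$(k,r)$ comparison against the nonnegative membership weights would then give $L(g_{D},h^{\mu}_{A})\le L(g_{D},h^{*}_{A})$ for every $h^{*}_{A}\in\mathcal{BR}^{\sigma}[g_{D}]$.

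The hard part will be the reconciliation in the second step: the per-realization minimizer of $\sum_{s}u_{A}(s,b)h_{A}(s|r)$ is the \emph{deterministic} guess $s=b$ of the realized membership vector, whereas $h^{\mu}_{A}$ instead spreads mass according to $\mu_{\sigma}(\cdot|r)$, and the two coincide only when the posterior is degenerate. Concretely, at observations where some $\mu_{\sigma}(b_{k}=1|r)$ lies strictly between $0$ and the threshold $\gamma$, mirroring claims individual $k$ with positive probability while an exactly cost-minimizing response sets $P_{k}=0$, so a naive term-by-term comparison fails there. Compounding this, the attacker best-responds under the \emph{subjective} prior $\sigma$ while $L$ is weighted by the \emph{true} prior $q$, so for arbitrary $\sigma$ the $\sigma$-optimal behavior need not extremize the $q$-weighted loss. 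The crux is therefore to show that these low-posterior, prior-mismatched terms cannot reverse the inequality once summed — which I expect to lean essentially on the restriction that $v$ rewards only true positives and on the additive form of $c_{A}$; if that control cannot be secured unconditionally, I would anticipate needing an auxiliary hypothesis (for instance pinning $\gamma$ or requiring $\sigma$ to be aligned with $q$ in the sense of Definition \ref{def:informative}).
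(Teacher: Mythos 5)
Your proposal is not a complete proof, and the gap is exactly where you place it. After reducing both objectives to the per-individual claim probabilities $P^{h_{A}}_{k}(r)$, you need the pointwise comparison $P^{h^{\mu}_{A}}_{k}(r)\,w_{k}(r)\leq P^{h^{*}_{A}}_{k}(r)\,w_{k}(r)$, with $w_{k}(r)=\sum_{b}b_{k}\rho_{D}(r|b)q(b)\geq 0$, to survive integration, and your own threshold characterization of $\mathcal{BR}^{\sigma}[g_{D}]$ shows that it cannot: wherever $0<\mu_{\sigma}(b_{k}=1|r)<\gamma$ a best response sets $P_{k}=0$ while mirroring sets $P_{k}=\mu_{\sigma}(b_{k}=1|r)>0$, so the mirror's contribution is strictly larger there; symmetrically, wherever $\mu_{\sigma}(b_{k}=1|r)>\gamma$ the mirror's contribution ($=\mu_{\sigma}$) is strictly smaller than the best response's ($=1$). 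Choosing $g_{D}$ and $\sigma$ so that the posteriors lie entirely on one side of $\gamma$ therefore produces instances where each direction of the inequality fails, so no unconditional argument --- yours or any other --- can close this gap; your instinct that an auxiliary hypothesis is required is sound. A further slip in your second step: for $\gamma\in(0,1)$ and non-degenerate posteriors, mirroring is not even an element of $\mathcal{BR}^{\sigma}[g_{D}]$ (its interim cost $\sum_{k}\mu_{k}(\gamma-\mu_{k})$ strictly exceeds the threshold rule's cost whenever some $\mu_{k}\notin\{0,\gamma,1\}$), so the claim that it is the cost-minimizing best response is not available to you.

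For comparison, the paper's own proof takes an entirely different route that never decomposes across individuals: it argues that every ex-ante best response factors through the posterior as $h^{*}_{A}=y\circ\mu_{\sigma}$ for some garbling $y$, that mirroring corresponds to the ungarbled channel $\mu_{\sigma}\circ\rho_{D}$, and then invokes Blackwell-type Lemmas \ref{lemma:attacker_preference_inclusive} and \ref{lemma:blackwell_garbling} to order the $q$-weighted losses for the payoff $\zeta=v$; this is how it sidesteps your $\sigma$-versus-$q$ concern, since Blackwell dominance orders values for every payoff and every prior simultaneously. But you should not read that as the idea you were missing: the ordering the paper actually derives is $L(\hat{g}_{D})\geq L(\hat{g}'_{D})$, i.e.\ $L(g_{D},h^{\mu}_{A})\geq L(g_{D},h^{*}_{A})$ --- mirroring induces the \emph{worst-case} loss, which is what the sentence preceding Proposition \ref{prop:mu_least_privacy} announces and what the proof of Proposition \ref{thm:attacker_mirror_mu} requires --- and the final line then reverses this to ``$\leq$'' without justification. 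Moreover, the paper's intermediate claim that $h^{\mu}_{A}(\cdot|r)\in S^{*}[b,r;g_{D}]$ for all $b$ contradicts its own derivation that, for fixed $b$ with $\mu_{\sigma}(b|r)>0$, the unique minimizer is the point mass on $b$. In short: the obstructions you identified are genuine defects of the statement as printed (and of the paper's argument for it), not a failure of your reading; what your attempt lacks is simply a finished proof, and no finished proof of the ``$\leq$'' statement exists without additional assumptions.
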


\begin{proof}

Define $\pi\equiv h_{A}\circ \rho_{D}:W\mapsto\Delta(W)$ by $\pi(s|b)= \sum_{r}h_{A}(s|r)\rho_{D}(r|b)$, for all $s,b\in W$.
Define the set 
\[
\Pi[g_{D}]\equiv\left\{\pi= h_{A}\circ \rho_{D}\middle|h_{A}:\Gamma\mapsto\Delta(W)\right\}.
\]
That is, $\Pi[g_{D}]$ is the set of all feasible probabilistic mappings from a true membership vector $b$ to an inference $s$, perceived by the defender.
We first establish the following lemma
regarding the informativeness of $g_{D}$ in the sense of Blackwell's ordering of informatinveness \cite{blackwell1951comparison,de2018blackwell}.

\begin{lemma}\label{lemma:attacker_preference_inclusive}
Fix any $\sigma\in\Delta(W)$.
        Given any two $g_{D}, g'_{D}$, 
       $\Pi[g_{D}]\subseteq\Pi[g'_{D}]$, if and only if, for any function $\zeta:W\times W\mapsto \mathbb{R}$,
       \[
       \sum\nolimits_{b,s}\zeta(s,b) \pi'(s|b)\sigma(b) \leq \sum\nolimits_{b,s}\zeta(s,b) \pi(s|b)\sigma(b),
       \]
       where $\pi\in\Pi[g_{D}]$ and $\pi'\in\Pi[g'_{D}]$.
    \end{lemma}

    \begin{proof}
    We start by showing the \textit{``only if"} part.
    Let $\Pi^{*}[g_{D}]\equiv\{\pi|\pi\in \argmin_{\pi\in \Pi[g_{D}]}\sum\nolimits_{b,s}\zeta(s,b) \pi'(s|b)\sigma(b) \}$.
    Since $\Pi[g_{D}]\subseteq \Pi[g'_{D}]$ and $\Pi^{*}[g_{D}]\subseteq \Pi[g_{D}]$, it must hold that $\Pi^{*}[g_{D}]\subseteq \Pi[g'_{D}]$.
    Hence, 
    \[
    \sum\nolimits_{b,s}\zeta(s,b) \pi'(s|b)\sigma(b)\leq \sum\nolimits_{b,s}\zeta(s,b) \pi(s|b)\sigma(b),
    \]
    for all $\pi\in\Pi[g_{D}]$ and $\pi'\in\Pi[g'_{D}]$.

    Next, we show the \textit{``if"} part.
    Suppose in contrast that $\Pi[g_{D}]\not\subseteq \Pi[g'_{D}]$.
     Then, there exists a $\pi\in \Pi[g_{D}]$ such that $\pi\not\in \Pi[g'_{D}]$. 
     Since the set $\Pi[\bar{g}_{D}]$ for every $\bar{g}_{D}:W\mapsto\Delta(D)$ is closed under convex combinations of its elements, it is convex.
     In addition, it is a continuous image of a compact set in the space of probability distributions.
     Hence, the set $\Pi[\bar{g}_{D}]$ is also compact. 
     The set $\Pi[\bar{g}_{D}]$ can be seen as a subset of $\mathbb{R}^{W\times W}$.
     Therefore, we can also perceive $\pi \in \mathbb{R}^{W\times W}\backslash \Pi[g'_{D}]$.

     Let $\pi^{\sigma}(s,b)\equiv \pi(s|b)\sigma(b)$ for all $s,b\in W$.
    With abuse of notation, let $\Pi[g''_{D}, \sigma]\equiv \{\pi^{\sigma}|\pi\in \Pi[g''_{D}]\}$.
    Then, the set $\Pi[g''_{D}, \sigma]$ is a subset of $\mathbb{R}^{W\times W}$.
    Thus, $\pi^{\sigma} \in \mathbb{R}^{W\times W}\backslash \Pi[g'_{D},\sigma]$.
    Let $\hat{\zeta}\in \mathbb{R}^{W\times W}$ represents the matrix form of the function $\zeta$.
Since $|W|=2^{K}$ with $K>1$, there exists a separating hyperplane orthogonal to $\hat{\zeta}$, which separates the set $\Pi[g'_{D}]$ from the point $\pi$, such that
\[
\begin{aligned}
    \sum\nolimits_{b,s}\zeta(s,b) \pi'(s|b) > \sum\nolimits_{b,s}\zeta(s,b) \pi(s|b),
\end{aligned}
\]
for all $\pi'\in \Pi[g'_{D}]$.
Then, the attacker with a non-informative (i.e., uniform prior) $\sigma$ obtains an ex-ante expected payoff using $h_{A}$ such that $\pi=h_{A}\circ g_{D}$ that is strictly better than any $h'_{A}$ such that $h'_{A}\circ \rho_{D}\in \Pi[g'_{D}]$.
Thus, we obtain a contradiction to $\sum\nolimits_{b,s}\zeta(s,b) \pi'(s|b)\sigma(b) \leq \sum\nolimits_{b,s}\zeta(s,b) \pi(s|b)\sigma(b)$ for all $\sigma\in\Delta(W)$.
\end{proof}

Next, we want to show that $\Pi[g_{D}]\subseteq \Pi[g'_{D}]$ is equivalent to $g'_{D}=\eta \circ g_{D}$ for some garbling $\eta: \Gamma \mapsto \Delta(\Gamma)$, which is another format of Blackwell's ordering of information structures \cite{blackwell1951comparison,de2018blackwell}.

\begin{lemma}\label{lemma:blackwell_garbling}
    For any two $g_{D}, g'_{D}$, $\Pi[g'_{D}]\subseteq \Pi[g_{D}]$ if and only if $g'_{D}=\eta \circ g_{D}$ for some garbling $\eta: \Gamma \mapsto \Delta(\Gamma)$.
\end{lemma}

\begin{proof}
    If $g'_{D}=\eta\circ g_{D}$, then there is a garbling $\hat{\eta}:\Gamma\mapsto\Delta(\Gamma)$ such that  $\rho'_{D}=\hat{\eta}\circ \rho_{D}$. Hence, $\pi'=\hat{\eta}\circ \pi$ for every $\pi'\in \Pi[g'_{D}]$ and $\pi\in \Pi[g_{D}]$. Then, from (1) and (2) of Theorem 1 in \cite{de2018blackwell}, we obtain $g'_{D}=\hat{\eta}\circ g_{D}$ is equivalent to $\Pi[g'_{D}]\subseteq \Pi[g_{D}]$.
\end{proof}

For simplicity, let $\mu^{r}_{\sigma}=\mu_{\sigma}(\cdot|r)$.
Since $h_{A}\in \mathcal{BR}^{\sigma}[g_{D}]$, there exists a randomized correspondence $Y$ such that $h_{A}(\cdot|r)=Y(\cdot|\mu^{r}_{\sigma})$ for all $r\in\Gamma$.
Then, from Blackwell's theorem \cite{blackwell1951comparison,de2018blackwell}, there exists a garbling $y:W\mapsto \Delta(W)$ such that $h_{A}=y\circ \mu_{\sigma}$.
Let $\hat{\rho}_{D}\equiv \mu_{\sigma}\circ \rho_{D}$ and let $\hat{\rho}'_{D}\equiv y\circ \hat{\rho}_{D}$.
In addition, let $\hat{g}_{D}$ and $\hat{g}'_{D}$, respectively, be corresponding to $\hat{\rho}_{D}$ and $\hat{\rho}'_{D}$.
Then, from Lemma \ref{lemma:blackwell_garbling}, we have $\Pi[\hat{g}'_{D}]\subseteq \Pi[\hat{g}_{D}]$.
In addition, Lemma \ref{lemma:attacker_preference_inclusive} implies that 
\[
\sum\nolimits_{b,s}\zeta(s,b) \hat{\pi}(s|b)\sigma(b) \leq \sum\nolimits_{b,s}\zeta(s,b) \hat{\pi}'(s|b)\sigma(b)
\]
for any $\sigma\in\Delta(W)$, any function $\zeta:W\times W\mapsto \mathbb{R}$, where $\hat{\pi}\in\Pi[\hat{g}_{D}]$ and $\hat{\pi}'\in \Pi[\hat{g}'_{D}]$.
If we take $\zeta(\cdot) = v(\cdot)$ and $\sigma(\cdot)=q(\cdot)$, then we have $L(\hat{g}_{D})\geq L(\hat{g}'_{D})$.
Therefore, $L(g_{D}, h^{\mu}_{A})\leq L(g_{D},h^{*}_{A})$ for all $h^{*}_{A}\in\mathcal{BR}^{\sigma}[g_{D}]$, which concludes the proof of Proposition \ref{prop:mu_least_privacy}.
\end{proof}

Next, we show that there is a $h^{*}_{A}\in\mathcal{BR}^{\sigma}[g_{D}]$ such that $h^{*}_{A}(s|r)=h^{\mu}_{A}(s|r)$ for all $s\in W$, $r\in\Gamma$.
Define $\widehat{U}^{\natural}(s, r)\equiv \sum_{b}u_{A}(s,b)\mu_{\sigma}(b|r)$, which depends on samples of $s\in W$ and $r\in\Gamma$.
Let 
\[
W^{\natural}[r]\equiv\left\{s\in W\middle| s\in\argmin\nolimits_{s'} \widehat{U}^{\natural}(s', r)\right\}.
\]
Let $\hat{s}\in W$ such that $\hat{h}_{A}(\hat{s}|r)=1$ for $\hat{h}_{A}\in S^{*}[b,r;g_{D}]$.
We want to show $\hat{s}\in W^{\natural}[r]$.
Suppose in contrast that $\hat{s}\not\in W^{\natural}[r]$.
Then, $\widehat{U}^{\natural}(s, r)< \widehat{U}^{\natural}(\hat{s}, r)$ for all $s\in W^{\natural}[r]$.
That is, $\sum_{b}u_{A}(s,b)\mu_{\sigma}(b|r)<\sum_{b}u_{A}(\hat{s},b)\mu_{\sigma}(b|r)$.
Since $\hat{h}_{A}\in S^{*}[b,r;g_{D}]$, we have $u_{A}(\hat{s}=b, b)\leq \sum_{s}\mu_{A}(s,b)h'_{A}(s|r)$ for all $h'_{A}(\cdot|r)$, including $h'_{A}(s|r)=1$ for any $s\in W$.
Since every $\mu_{\sigma}(\cdot)\geq0$, we have $u_{A}(\hat{s}=b, b)\mu_{\sigma}(b|r)\leq \mu_{A}(s,b)\mu_{\sigma}(b|r)$, for all $s,b\in W$.
Then, $\widehat{U}^{\natural}(\hat{s}, r)\leq \widehat{U}^{\natural}(s, r)$, contradicting to $\hat{s}\not\in W^{\natural}[r]$.
Therefore, $\hat{s}\in W^{\natural}[r]$.

Next, we show that for every $s^{*}\in W^{\natural}[r]$, there is a $b\in W$ with $\mu_{\sigma}(b|r)>0$ such that $\hat{h}_{A}(s^{*}|r)=1$ for $\hat{h}_{A}\in S^{*}[b,r;g_{D}]$.
Suppose in contrast that there exists a $s^{*}\in W^{\natural}[r]$ such that $\hat{h}_{A}(s^{*}|r)=0$, for a $\hat{h}_{A}\in S^{*}[b,r;g_{D}]$.
Then, there exists $\hat{s}$ with $\hat{h}_{A}(\hat{s}|r)=1$ such that, for all $h'_{A}:\Gamma \mapsto\Delta(W)$, 
\[
\begin{aligned}
    &\sum\nolimits_{s}u_{A}(s,b)\hat{h}_{A}(s|r)\mu_{\sigma}(b|r) \\
    &= u_{A}(\hat{s},b)\mu_{\sigma}(b|r)\leq u_{A}(s^{*}, b)h'_{A}(s^{*}|r)\mu_{\sigma}(b|r)+ \sum\nolimits_{s: s\neq s^{*}} u_{A}(s, b)h'_{A}(s|r)\mu_{\sigma}(b|r),
\end{aligned}
\]
where the equality of the inequality holds when $h'_{A}=\hat{h}_{A}$.
For all $h'_{A}\neq\hat{h}_{A}$, $h'_{A}(s^{*}|r)\in[0,1]$, which implies $u_{A}(\hat{s},b)\mu_{\sigma}(b|r)< u_{A}(\hat{s},b)\mu_{\sigma}(b|r)$ for all $b\in W$ and $r\in\Gamma$ with $\mu_{\sigma}(b|r)>0$.
Thus, $\sum_{b}u_{A}(\hat{s},b)\mu_{\sigma}(b|r)< \sum_{b}u_{A}(s^{*},b)\mu_{\sigma}(b|r)$, which contradicts to $s^{*}\in W^{\natural}[r]$.
Therefore, $W^{\natural}[r]=\cup_{b}\{s\in W| h_{A}(s|r)=1, h_{A}\in S^{*}[b,r;g_{D}]\}$.
It is not hard to see that every feasible mixed strategy $h_{A}(\cdot|r)$ that assigns strictly positive probability only to elements of $W^{\natural}[r]$ is a best response to $g_{D}$.
Since $h^{\mu}_{A}\in S^{*}[b,r;g_{D}]$, we can conclude that $h^{*}_{A}\in \mathcal{BR}^{\sigma}[g_{D}]$ with $h^{*}_{A}(s|r)=h^{\mu}_{A}(s|r)$, for all $s\in W$, $r\in\Gamma$.
In addition, we can rewrite $Z(g_{D},\sigma)$ in terms of $h^{\mu}_{A}$ as $Z(g_{D}, \sigma)=\sum\nolimits_{b,s}\int_{r}v(s,b)h^{\mu}_{A}(s|r)\rho_{D}(r|b)dr q(b)$.
Thus, by Proposition \ref{prop:mu_least_privacy}, we conclude that $L^{\sigma}(g_{D})=Z(g_{D},\sigma)$.

\subsection{Proof of Theorem \ref{thm:Bayesian_better_LRT} (Cont'd)}

% \begin{lemma}
%     Fix any $g_{D}$ and $\sigma$. If $\gamma\geq \max_{s,b}\frac{v(s,b)}{c_{A}(s)}$, then $h^{*}_{A}(s=\bm{0}|r)=1$ for all $r\in \Gamma$ is a best response. 
% \end{lemma}

% \begin{proof}
%     Since $c_{A}(s)\geq v(s,b)$, when $\gamma\geq1$, $\min_{h_{A}} U_{D}(g_{D}, h_{A};\sigma)=0$ and $\min_{h_{A}}V_{A}(g_{D}, h_{A}, r;\sigma)=0$
% \end{proof}

\begin{lemma}\label{lemma:MuAlsoBestGamma}
    Fix any $g_{D}$. 
    Suppose that $\sigma$ is aligned.
    Let $h^{\mu}_{A}: \Gamma\mapsto\Delta(W)$ be defined by $h^{\mu}_{A}(s|r)=\mu_{\sigma}(b|r)\mathbf{1}(s=b)$ for all $s,b\in W$, $r\in\Gamma$.
    Then, $h^{\mu}_{A}\in \mathcal{BR}^{\sigma}_{\Gamma}[g_{D}]$.
\end{lemma}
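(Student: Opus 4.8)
The goal is to verify that the posterior-mirroring strategy $h^\mu_A$ lies in the \emph{interim} best-response set $\mathcal{BR}^\sigma_\Gamma[g_D]$, which is strictly stronger than the ex-ante membership $h^\mu_A \in \mathcal{BR}^\sigma[g_D]$ already extracted in the proof of Proposition~\ref{thm:attacker_mirror_mu} (recall $\mathcal{BR}^\sigma_\Gamma[g_D] \subseteq \mathcal{BR}^\sigma[g_D]$). The plan is to reduce interim optimality to a pointwise statement in $r$: for each fixed $r \in \Gamma$, the interim cost $V_A(g_D, h_A, r; \sigma) = \sum_{s,b} u_A(s,b)\,h_A(s|r)\,\mu_\sigma(b|r)$ is linear in the mixed action $h_A(\cdot|r) \in \Delta(W)$, so a strategy minimizes it if and only if it assigns positive probability only to the pure-action minimizers $W^\natural[r] = \argmin_{s} \widehat U^\natural(s,r)$, where $\widehat U^\natural(s,r) = \sum_b u_A(s,b)\mu_\sigma(b|r)$. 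Thus it suffices to establish the support inclusion $\mathrm{supp}\,h^\mu_A(\cdot|r) \subseteq W^\natural[r]$ for every $r$.

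The key step reuses the machinery built in the proof of Proposition~\ref{thm:attacker_mirror_mu}. There we showed (i) that for each $b$ with $\mu_\sigma(b|r) > 0$ the per-realization minimizer set $S^*[b,r;g_D]$ contains the degenerate response placing mass $1$ on $s = b$, and that $h^\mu_A(\cdot|r) \in S^*[b,r;g_D]$; and (ii) the exact characterization $W^\natural[r] = \bigcup_b \{\, s : h_A(s|r) = 1,\ h_A \in S^*[b,r;g_D]\,\}$. Combining (i) and (ii), every $s$ in the support of $h^\mu_A(\cdot|r)$ — namely every $b$ with $\mu_\sigma(b|r) > 0$ — belongs to $W^\natural[r]$. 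Hence $h^\mu_A(\cdot|r)$ is supported on pure interim minimizers for each $r$, which by the linearity reduction yields $h^\mu_A \in \mathcal{BR}^\sigma_\Gamma[g_D]$.

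Two points deserve care, and I expect the support-inclusion identity (ii) to be the main obstacle. First, because $u_A(s,b) = \sum_k s_k(\gamma - b_k)$ is separable across individuals, $\widehat U^\natural(\cdot, r)$ factorizes coordinatewise and ties arise precisely when a posterior marginal $\mu_\sigma(b_k=1|r)$ sits at the boundary value $\gamma$; the delicate part is to argue that mirroring the \emph{entire} posterior never places mass on a coordinatewise suboptimal $s$, which is exactly what (ii) encodes and must be invoked rather than re-derived. Second, the \textup{aligned} hypothesis is not needed for the pointwise optimality itself — that argument is prior-structure-agnostic — rather, it is what links this interim best response back to the worst-case privacy loss: combined with Proposition~\ref{prop:mu_least_privacy} and the identity $Z(g_D,\sigma) = L^\sigma(g_D)$ of Proposition~\ref{thm:attacker_mirror_mu}, it certifies that the same $h^\mu_A$ simultaneously attains the WCPL, so the resulting profile is a $\sigma$-PBNE supporting the ordering in Theorem~\ref{thm:Bayesian_better_LRT}.
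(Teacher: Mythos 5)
Your proposal has the same skeleton as the paper's own proof of Lemma \ref{lemma:MuAlsoBestGamma}: the paper likewise reduces interim optimality to a pointwise condition (any $h_{A}$ that puts positive probability only on pure minimizers of $V^{\ddagger}_{A}(s,r)\equiv\sum_{b}u_{A}(s,b)\mu_{\sigma}(b|r)$ lies in $\mathcal{BR}^{\sigma}_{\Gamma}[g_{D}]$), and then argues that the posterior support $W^{\sharp}[r]\equiv\{s:\mu_{\sigma}(s|r)>0\}$ is contained in that minimizer set. You are also right that alignment of $\sigma$ plays no role in the pointwise argument; the paper's proof never invokes it either. The only substantive difference is how the support inclusion is justified: the paper gives a self-contained argument (it claims $\gamma c_{A}(s)-v(s,b)$ weakly decreases as the agreement $\sum_{k}\mathbf{1}\{s_{k}=b_{k}\}$ grows, and from this deduces $V^{\ddagger}_{A}(s^{\sharp},r)\leq V^{\ddagger}_{A}(s,r)$ for all $s^{\sharp}\in W^{\sharp}[r]$), whereas you import the identity $W^{\natural}[r]=\cup_{b}\{s: h_{A}(s|r)=1,\ h_{A}\in S^{*}[b,r;g_{D}]\}$ from the proof of Proposition \ref{thm:attacker_mirror_mu}.

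However, the step you yourself flag as the main obstacle is a genuine gap, and deferring to that identity does not close it, because it fails for exactly the same reason the paper's monotonicity step is fragile. The sets $S^{*}[b,r;g_{D}]$ collect minimizers against a \emph{fixed} $b$ (for $0<\gamma<1$ the unique pure minimizer of $u_{A}(\cdot,b)$ is $s=b$), while $W^{\natural}[r]$ collects minimizers of the posterior \emph{average}, which by your own separability observation equals $\sum_{k}s_{k}\bigl(\gamma-\mu_{\sigma}(b_{k}=1|r)\bigr)$; optimality against each $b$ separately does not survive averaging over $b$. Concretely, take $\gamma=0.6$ and a $g_{D}$ and $r$ at which the posterior splits mass $1/2$ and $1/2$ between two vectors $b$ and $b'$ that differ in their first two coordinates, say $b=(1,0,0,\dots)$ and $b'=(0,1,0,\dots)$ (achievable under a non-informative, hence aligned, $\sigma$). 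Every coordinate then has posterior marginal at most $1/2<\gamma$, so the unique element of $W^{\natural}[r]$ is $s=(0,\dots,0)$, yet $h^{\mu}_{A}(\cdot|r)$ places all of its mass on $b$ and $b'$, each of which has strictly positive interim cost. Thus the inclusion of the support of $h^{\mu}_{A}(\cdot|r)$ in $W^{\natural}[r]$ fails, and the cited identity is false as stated; moreover, the tie scenario you single out ($\mu_{\sigma}(b_{k}=1|r)=\gamma$) is not the only failure mode --- any two posterior-support points disagreeing on a coordinate whose marginal is not exactly $\gamma$ break the argument. A repaired statement needs either an additional hypothesis forcing all posterior-support points to agree on every coordinate whose marginal differs from $\gamma$, or a redefinition of the attacker's strategy as the coordinatewise threshold rule $s_{k}=\mathbf{1}\{\mu_{\sigma}(b_{k}=1|r)\geq\gamma\}$, which genuinely is an interim best response.
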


\begin{proof}
Let $V^{\ddagger}_{A}( s, r)\equiv \sum\nolimits_{b}  u_{A}(s,b)\mu_{\sigma}(b|r)$
Define $W^{\ddagger}[r]\equiv\{s\in W| s\in\argmin_{s'}V^{\ddagger}_{A}( s, r)\}$.
Hence, each $h_{A}:\Gamma\mapsto\Delta(W)$ that only assigns strictly positive probabilities to $s\in W^{\ddagger}[r]$ satisfies $h_{A}\in \mathcal{BR}^{\sigma}_{\Gamma}[g_{D}]$.
In addition, let $W^{\sharp}[r]\equiv\{s\in W|\mu_{\sigma}(s|r)>0\}$.
By definition of $c_{A}$ and $v_{s,b}$, $\gamma c_{A}(s) - v_{A}(s,b)$ (weakly) decreases as $\sum_{k\in U}\mathbf{1}\{s_{k}=b_{k}\}$ increases.
Thus, $V^{\ddagger}_{A}(s^{\sharp},r)\leq V^{\ddagger}_{A}(s,r)$ for all $s^{\sharp}\in W^{\sharp}[r]$ and $s\in W$.
Hence, $W^{\sharp}[r]\subseteq W^{\ddagger}[r]$ for all $r$.
Hence, $h^{\mu}_{A}\in \mathcal{BR}^{\sigma}_{\Gamma}[g_{D}]$ holds. 
\end{proof}

With abuse of notation, we let $q(b)$ and $q(b_{k})=\sum_{b_{-k}}q(b_{k}, b_{-k})$ denote the prior and the marginalized prior, respectively.
Next, we show that optimal $\alpha$-LRT cannot strictly outperform $\sigma$-Bayesian under the same $g_{D}$.

\begin{lemma}\label{lemma:LRTlessTruePrior}
    Fix $g_{D}$ and $\alpha$. Suppose $\sigma=q$. Then, $Z(g_{D},q)\geq L(g_{D},\tau^{*},\alpha)$.
\end{lemma}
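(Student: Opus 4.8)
The plan is to first collapse the left-hand side using the machinery already established. Since $\sigma=q$, Proposition~\ref{thm:attacker_mirror_mu} gives $L^{q}(g_{D})=Z(g_{D},q)$, so it suffices to show $Z(g_{D},q)\geq L(g_{D},\tau^{*},\alpha)$. Both quantities are additive over individuals, so I would reduce everything to a per-individual statement: writing $\eta_{k}(r)\equiv\mu_{q}(b_{k}=1\mid r)$ for the marginal posterior and $m_{q}(r)\equiv\sum_{b}\rho_{D}(r\mid b)q(b)$ for the observation marginal, the posterior-mirroring term contributes $\int_{r}\eta_{k}(r)^{2}m_{q}(r)\,dr$, while the optimal $\alpha$-LRT contributes $\int_{r}\phi^{\alpha}_{k}(r)\,\eta_{k}(r)\,m_{q}(r)\,dr$, where $\phi^{\alpha}_{k}$ is the level-$\alpha$ acceptance rule for individual $k$. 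The target thus becomes a comparison, for each $k$, between two posterior-weighted integrals of decision rules.

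The second step is to put the two rules on a common footing. By the Neyman-Pearson lemma the optimal $\alpha$-LRT is a likelihood-ratio threshold test, and since $\eta_{k}(r)$ is a monotone transform of that likelihood ratio, $\phi^{\alpha}_{k}$ is equivalent to a posterior-threshold rule $\phi^{\alpha}_{k}(r)=\mathbf{1}\{\eta_{k}(r)\geq t_{\alpha}\}$ for a cutoff $t_{\alpha}$ fixed by the significance level. Because $\sigma=q$, the attacker's cost-minimizing best response is itself a posterior-threshold rule: minimizing $\sum_{k}s_{k}\bigl(\gamma-\eta_{k}(r)\bigr)$ pointwise in $r$ yields $s_{k}=1$ exactly when $\eta_{k}(r)$ exceeds the cost ratio $\gamma$, and Lemma~\ref{lemma:MuAlsoBestGamma} certifies that such rules (including the mirroring rule) lie in $\mathcal{BR}^{q}[g_{D}]$. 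I would therefore compare two admissible operating points on the \emph{same} receiver-operating frontier rather than comparing an arbitrary rule against the frontier.

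The third step lower-bounds $L^{q}(g_{D})$ by the privacy loss of the aggressive Bayes best response $\phi^{B}_{k}(r)=\mathbf{1}\{\eta_{k}(r)\geq\gamma\}$, with ties on $\{\eta_{k}=\gamma\}$ broken toward $s_{k}=1$. Since any best response induces loss at most the worst-case loss, $L^{q}(g_{D})\geq\sum_{k}\int_{r}\mathbf{1}\{\eta_{k}(r)\geq\gamma\}\,\eta_{k}(r)\,m_{q}(r)\,dr$. Comparing this with $L(g_{D},\tau^{*},\alpha)=\sum_{k}\int_{r}\mathbf{1}\{\eta_{k}(r)\geq t_{\alpha}\}\,\eta_{k}(r)\,m_{q}(r)\,dr$, the integrand $\eta_{k}m_{q}$ is nonnegative, so the claim reduces to a set-inclusion of the two acceptance regions; whenever the cost cutoff $\gamma$ lies below the significance cutoff $t_{\alpha}$, the Bayes claim set contains the LRT claim set and the inequality follows term by term. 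An equivalent and slightly more robust route is to invoke cost-optimality of the $q$-best response directly: $U_{A}(g_{D},h^{\mu}_{A};q)\leq U_{A}(g_{D},\phi^{\alpha};q)$ with $u_{A}=-v+\gamma c_{A}$ rearranges to $Z(g_{D},q)-L(g_{D},\tau^{*},\alpha)\geq\gamma\bigl(\mathbb{E}_{q}[c_{A}(h^{\mu}_{A})]-\mathbb{E}_{q}[c_{A}(\phi^{\alpha})]\bigr)$, reducing the problem to controlling the gap in the expected number of positive claims.

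The main obstacle is exactly this operating-point comparison: the Bayes threshold is pinned by the cost ratio $\gamma$, the LRT threshold by the significance level $\alpha$, and these are \emph{a priori} unrelated, so one must argue that the worst-case Bayesian loss is governed by a cutoff no larger than $t_{\alpha}$. The cleanest way to close this gap is to use $0<\gamma<1$ together with Proposition~\ref{thm:attacker_mirror_mu} to guarantee that it is the aggressive mirroring/threshold response, rather than a more conservative best response, that realizes $L^{q}(g_{D})$, so that the Bayesian claim set dominates the UMP claim set at the relevant level. The remaining technicalities---tie-breaking on the boundary $\{\eta_{k}=t_{\alpha}\}$, the degenerate cutoffs $t_{\alpha}\in\{0,1\}$, and the monotonicity of $t\mapsto\int_{r}\mathbf{1}\{\eta_{k}(r)\geq t\}\,\eta_{k}(r)\,m_{q}(r)\,dr$---are routine and do not affect the direction of the inequality.
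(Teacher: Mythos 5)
Your reduction of both sides to per-individual posterior-weighted integrals is sound, and you have correctly located the crux --- but the crux is never closed, and the way you propose to close it does not work. Proposition \ref{thm:attacker_mirror_mu} says that the worst-case loss $L^{q}(g_{D})$ is realized by the \emph{mirroring} (randomized) response, whose per-individual contribution is $\int_{r}\eta_{k}(r)^{2}m_{q}(r)\,dr$; it does not say that the aggressive $\gamma$-threshold rule realizes that worst case, and it gives no relation whatsoever between $\gamma$ and $t_{\alpha}$. Since $\alpha$ is an arbitrary parameter fixed in the lemma, $t_{\alpha}$ can lie \emph{below} $\gamma$ (take $\alpha$ large, so the UMP test claims nearly everyone), in which case the inclusion you need, $\{\eta_{k}\geq t_{\alpha}\}\subseteq\{\eta_{k}\geq\gamma\}$, reverses and the term-by-term comparison fails. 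Your fallback via cost-optimality has the same hole: rearranging $U_{A}(g_{D},h^{\mu}_{A};q)\leq U_{A}(g_{D},\phi^{\alpha};q)$ leaves the uncontrolled term $\gamma\bigl(\mathbb{E}_{q}[c_{A}(h^{\mu}_{A})]-\mathbb{E}_{q}[c_{A}(\phi^{\alpha})]\bigr)$, and the claim-count gap can be negative for exactly the same reason (for large $\alpha$ the LRT makes more positive claims than the mirroring rule). So the proposal never establishes the one fact it needs: that the Bayesian operating point dominates the LRT operating point.

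The paper's proof avoids comparing operating points altogether. It argues by contradiction: supposing $Z(g_{D},q)<L(g_{D},\tau^{*},\alpha)$, it takes $h^{\dagger}_{A}(s_{k}=1|r)=\mathbf{1}\{y_{k}(r,\tau^{*})=1\}$ --- the LRT decision rule itself, viewed as an attacker strategy --- and claims that, because $\sigma=q$, $h^{\dagger}_{A}\in\mathcal{BR}^{q}[g_{D}]$. Since $Z(g_{D},q)=L^{q}(g_{D})$ is, by Proposition \ref{thm:attacker_mirror_mu}, the \emph{maximum} loss over $\mathcal{BR}^{q}[g_{D}]$, membership alone yields $L(g_{D},\tau^{*},\alpha)=L(g_{D},h^{\dagger}_{A})\leq Z(g_{D},q)$, the desired contradiction. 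The entire burden thus rests on the single claim that the UMP rule is a Bayesian best response under the true prior; there is no threshold comparison and no region inclusion. To be fair, that membership claim is asserted rather than derived in the paper, and it conceals the same tension you ran into (a best response thresholds the posterior at $\gamma$, the UMP test at $t_{\alpha}$); but as proof architecture, reducing the lemma to that one membership statement is what the paper does, and your region-inclusion strategy is not a substitute for it --- if you want to repair your route, you would have to prove that membership claim, at which point your per-individual decomposition becomes unnecessary.
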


\begin{proof}
    Suppose in contrast that $Z(g_{D},q)< L(g_{D},\tau^{*},\alpha)$. Then,
\begin{equation*}
    \begin{aligned}
        \sum\nolimits_{k}P^{k}_{1}\left[y_{k}(r,\tau^{*})=1\middle|g_{D}\right] q(b_{k}=1) &>\sum\nolimits_{b,s}\int_{r}v(s,b)\mu_{\sigma}(s|r)\rho_{D}(r|b)dr q(b)\\
&=\sum\nolimits_{k}P^{k}_{\sigma}\left[s_k=1\middle|g_{D},b_{k}=1\right]q(b_{k}=1),
    \end{aligned}
\end{equation*}
where $P^{k}_{\sigma}\left[s_k=1\middle|g_{D},b_{k}=1\right]=\int_{r} \mu_{\sigma}(s_{k}=1|r)\rho_{D}(r|b)dr$.
From Proposition \ref{thm:attacker_mirror_mu}, we have $ L^{\sigma}(g_{D})= Z(g_{D},q)< L(g_{D},\tau^{*},\alpha)$.
Let $h^{\dagger}_{A}(s_{k}=1|r)=\mathbf{1}\left\{y_{k}(r,\tau^{*})=1\right\}$ for all $r\in\Gamma$.
Since $\sigma=q$, $h^{\dagger}_{A}$ is a best response of the Bayesian attacker.
Hence, $L(g_{D},\tau^{*},\alpha)=L(g_{D}, h^{\dagger}_{A})\leq Z(g_{D},q)$, which contradicts to $Z(g_{D},q)< L(g_{D},\tau^{*},\alpha)$.
Therefore, $Z(g_{D},q)\geq L(g_{D},\tau^{*},\alpha)$.
\end{proof}

If $\sigma$ is informative, we have $L(g_{D}, h^{\sigma}_{A},\sigma)\leq L(g_{D}, h^{q}_{A},q)$.
Hence, it also holds that $Z(g_{D},\sigma)\geq Z(g_{D}, q)$.
Lemma \ref{lemma:LRTlessTruePrior} imples $Z(g_{D},\sigma)\geq L(g_{D},\tau^{*},\alpha)$.

Next, we show that when $\sigma$ is non-informative. Let $h^{\sigma}_{A}(s|r)=\mu_{\sigma}(b|r)\mathbf{1}\{s=b\}$, for all $s,b\in W$, $r\in\Gamma$.
By Lemma \ref{lemma:MuAlsoBestGamma}, it holds that $h^{\sigma}_{A}\in\mathcal{BR}^{\sigma}_{\Gamma}[g_{D}]$.
Suppose in contrast that $L(g_{D}, \tau^{*}, \alpha)> Z(g_{D},\sigma)$.
Then, $h^{\sigma}_{A}\in\mathcal{BR}^{\sigma}_{\Gamma}[g_{D}]$ implies
\[
        \sum\nolimits_{k}P^{k}_{1}\left[y_{k}(r,\tau^{*})=1\middle|g_{D}\right] >\sum\nolimits_{k,s}\int_{r}v(s_{k}=1,b_{k}=1)\mu_{\sigma}(s_{k}=1|r).
        \]
Let $h^{\dagger}_{A}:\Gamma\mapsto\Delta(W)$ such that $h^{\dagger}_{A}(s_{k}=1|r)=\mathbf{1}\left\{y_{k}(r,\tau^{*})=1\right\}$ for all $r\in\Gamma$.
Then, $h^{\dagger}_{A}\in\mathcal{BR}^{\sigma}[g_{D}]$ when $\sigma$ is uniform (i.e., non-informative).
Proposition \ref{thm:attacker_mirror_mu} implies $Z(g_{D},\sigma)\geq L(g_{D}, h^{\dagger}_{A}, \sigma)$, which leads to a contradiction.
The inequality $L(g_{D}, \tau^{*}, \alpha)\geq L(g_{D}, \tau^{(N)}(\tilde{r}),\alpha_{\tau^{(N)}(\tilde{r})})$ follows the Neyman-Pearson lemma.
In addition, by \cite{venkatesaramani2021defending,venkatesaramani2023enabling}, $L(g_{D}, \tau^{(N)}(\tilde{r}),\alpha_{\tau^{(N)}(\tilde{r})})\geq L(g_{D}, \tau^{o},\alpha_{\tau^{o}})$.
Thus, we can conclude the proof of Theorem \ref{thm:Bayesian_better_LRT}.
\qed

\section{Proof of Lemma \ref{lemma:normal_relationship}}

First, we show that the test statistics $\mathcal{L}(\tilde{y})=\sum\nolimits_{j\in Q}\log\left(\rho_{j}(\tilde{y}_{j}|\widehat{H}_{0})/\rho_{j}(\tilde{y}_{j}|\widehat{H}_{1})\right)$ is normally distributed under $\widehat{H}_{0}$ and $\widehat{H}_1$, respectively, with $\mathcal{N}\left(\overline{\textup{M}}, \overline{\textup{V}}\right)$ and $\mathcal{N}\left(-\overline{\textup{M}}, \overline{\textup{V}}\right)$, where $\overline{\textup{M}}=\frac{1}{2}\sum_{j\in Q}\widehat{\mathtt{M}}^{2}_{j}$ and $\overline{\textup{V}}=\sum_{j\in Q}\widehat{\mathtt{M}}^{2}_{j}$.
For each $y_{j}$, $\tilde{y}_{j}\sim\mathcal{N}(0,1)$ under $\widehat{H}_{0}$, and $\tilde{y}_{j}\sim\mathcal{N}(\widehat{M}_{j},1)$ under $\widehat{H}_{1}$. Thus, the log-likelihood ratio for each $y_{j}$ is $\log\left(\frac{\rho_{j}(y_{j}|\widehat{H}_{0})}{\rho_{j}(y_{j}|\widehat{H}_{1})}\right)$.
Since $\rho_{j}(\cdot|\widehat{H}_{0})$ and $\rho_{j}(\cdot|\widehat{H}_{1})$ are the density functions of normal distribution, the log-likelihood ratio becomes
\[
\begin{aligned}
    \log \left(\frac{\frac{1}{\sqrt{2\pi}} e^{-\frac{y_j^2}{2}}}{\frac{1}{\sqrt{2\pi}} e^{-\frac{(y_j - \widehat{\mathtt{M}}_j)^2}{2}}}\right) = \frac{(y_j - \widehat{\mathtt{M}}_j)^2 - y_j^2}{2}=\frac{-2y_j\widehat{\mathtt{M}}_j + \widehat{\mathtt{M}}_j^2}{2} = -y_j\widehat{\mathtt{M}}_j + \frac{\widehat{\mathtt{M}}_j^2}{2}.
\end{aligned}
\]

Under $\widehat{H}_{0}$, the mean is $\mathbb{E}[y_{j}|\widehat{H}_{0}]=0$ and the variance is $\textup{Var}[y_j|\widehat{H}_{0}]=1$. Hence, the mean of $\mathcal{L}(y)$ under $\widehat{H}_{0}$ is
\[
\begin{aligned}
    \mathbb{E}[\mathcal{L}(y)] = \mathbb{E}\left[\sum_{j\in Q} \left(-y_j \widehat{\mathtt{M}}_j + \frac{\widehat{\mathtt{M}}_j^2}{2}\right)\right] = \sum_{j\in Q} \left(-\mathbb{E}[y_j] \widehat{\mathtt{M}}_j + \frac{\widehat{\mathtt{M}}_j^2}{2}\right) = \sum_{j\in Q} \frac{\widehat{\mathtt{M}}_j^2}{2},
\end{aligned}
\]
and the variance is 
\[
\begin{aligned}
    \text{Var}[\mathcal{L}(y)] = \text{Var}\left[\sum_{j\in Q} \left(-y_j \widehat{\mathtt{M}}_j + \frac{\widehat{\mathtt{M}}_j^2}{2}\right)\right] = \sum_{j\in Q} \text{Var}[-y_j \widehat{\mathtt{M}}_j] = \sum_{j\in Q} \widehat{\mathtt{M}}_j^2.
\end{aligned}
\]

Similarly, under $\widehat{H}_{1}$, the mean of $\mathcal{L}(y)$ is
\[
\begin{aligned}
    \mathbb{E}[\mathcal{L}(y)] = \mathbb{E}\left[\sum_{j\in Q} \left(-y_j \widehat{\mathtt{M}}_j + \frac{\widehat{\mathtt{M}}_j^2}{2}\right)\right]=\sum_{j\in Q} \left(-\mathbb{E}[y_j] \widehat{\mathtt{M}}_j + \frac{\widehat{\mathtt{M}}_j^2}{2}\right) = \sum_{j\in Q} \left(-\widehat{\mathtt{M}}_j^2 + \frac{\widehat{\mathtt{M}}_j^2}{2}\right)=\sum_{j\in Q} -\frac{\widehat{\mathtt{M}}_j^2}{2}.
\end{aligned}
\]
In addition, the variance of $\mathcal{L}(y)$ under $\widehat{H}_{1}$ is
\[
\begin{aligned}
    \text{Var}[\mathcal{L}(y)] = \text{Var}\left[\sum_{j\in Q} \left(-y_j \widehat{\mathtt{M}}_j + \frac{\widehat{\mathtt{M}}_j^2}{2}\right)\right] = \sum_{j\in Q} \text{Var}[-y_j \widehat{\mathtt{M}}_j] = \sum_{j\in Q} \widehat{\mathtt{M}}_j^2.
\end{aligned}
\]

% Next, we show that the Type-I error $\widehat{\alpha}$, Type-II error $\widehat{\beta}$, and the number of observations $m$ satisfies
% \begin{equation}\label{eq:normalRelation}
%     m\overline{\textup{M}}+ z_{\widehat{\alpha}}\sqrt{m\overline{\textup{V}}}= -m\overline{\textup{M}}- z_{\widehat{\beta}}\sqrt{m\overline{\textup{V}}}.
% \end{equation}
%

Since the test statistics $\mathcal{L}(y)$ is normally distributed under $\widehat{H}_{0}$ and $\widehat{H}_{1}$, we have
\[
Z_{0}=\frac{\overline{y}-\overline{\textup{M}}}{\sqrt{\overline{\textup{V}}/\sqrt{m}}} \sim \mathcal{N}(0,1) \textup{ and } Z_{1}= \frac{\overline{y}+\overline{\textup{M}}}{\sqrt{\overline{\textup{V}}/\sqrt{m}}} \sim \mathcal{N}\left(\frac{-2\overline{M} }{\sqrt{\overline{\textup{V}}/\sqrt{m}}} ,1\right),
\]
where $\overline{y}$ is the sample mean.
For a given significance level $\widehat{\alpha}$, the threshold for $Z_{0}$ is set so that $\textup{Pr}(Z_{0}< z_{\widehat{\alpha}}) = \widehat{\alpha}$, corresponding to the value $\overline{\textup{M}} + z_{\widehat{\alpha}}\sqrt{\frac{\overline{\textup{V}}}{m}}$.
For a given Type-II error rate $\widehat{\beta}$, the threshold for $Z_{1}$ is set so that $\textup{Pr}(Z_{1}<z_{\widehat{\beta}})=\widehat{\beta}$, where $z_{\widehat{\beta}}$ aligns with $-\overline{\textup{M}} - z_{\widehat{\beta}}\sqrt{\frac{\overline{\textup{V}}}{m}}$.
To maintain the consistency of decision-making between $\widehat{H}_{0}$ and $\widehat{H}_{1}$, the threshold at which we switch decisions from failing to reject $\widehat{H}_{0}$ to rejecting $\widehat{H}_{0}$ under $\widehat{H}_{0}$ and $\widehat{H}_{1}$ are equated. 
Therefore, we have 
\begin{equation*}%\label{eq:gaussian_relation}
    \sqrt{m}\overline{\textup{M}}+ z_{\widehat{\alpha}}\sqrt{\overline{\textup{V}}}= -\sqrt{m}\overline{\textup{M}}- z_{\widehat{\beta}}\sqrt{\overline{\textup{V}}}.
\end{equation*}
Thus, $\mathcal{F}\left(\widehat{\alpha}, \widehat{\beta}\right)=m$ holds.

Next, we show the monotone relationship between $\widehat{\beta}$ and $m$ given $\mathcal{F}\left(\widehat{\alpha}, \widehat{\beta}\right)=m$  while everything else is fixed.
Since, $z_{\widehat{\beta}}=\Phi^{-1}(1-\widehat{\beta})$, where $\Phi$ is the cumulative distribution function (CDF) of the standard normal distribution, $z_{\widehat{\beta}}$ decreases as $\widehat{\beta}$ increases as the quantile function $\Phi^{-1}$ decreases as the probability increases. 
As a result, $(z_{\widehat{\alpha}}, z_{\widehat{\beta}})$ decreases when $\widehat{\beta}$ increases.
Therefore, $\mathcal{F}\left(\widehat{\alpha}, \widehat{\beta}\right)=m$ implies that $m$ decreases when $\widehat{\beta}$ increases.
\qed

\section{Proof of Theorem \ref{thm:Gaussian_comparison}}\label{app:proof_thm2}

We first obtain the following lemma, which extends Theorem 2.7 of \cite{dong2021gaussian}.

\begin{lemma}\label{lemma:gpd}
Fix $\alpha\in(0,1)$.
    Let $g_{D}$ be Gaussian defined above with each $g^{j}_{D}(\cdot|b)\in\Delta(\mathcal{Y}_{j})$ as the density function of $\mathcal{N}(\mathtt{M}^{j}_{b}, \mathtt{V}^{j})$ given any $b\in W$, where $\mathtt{V}^{j}=\left(2\textup{sens}^{j}(f)/\widehat{\mathtt{M}}_{j}\right)^{2}$.
    Let $P_{b}(g^{j}_{D})$ denote the probability distribution associated with $g^{j}_{D}(\cdot|b)$.
    Suppose $\max_{b,b'}|\mathtt{M}^{j}_{b}-\mathtt{M}^{j}_{b'}|\leq \textup{sens}^{j}(f)$.
    Then, it holds
    \[
    T\left[P_{b}(g^{j}_{D}), P_{b'}(g^{j}_{D})\right](\alpha)\geq T\left[\mathcal{N}(0,1), \mathcal{N}(\widehat{\mathtt{M}}_{j}, 1)\right].
    \]
%     \[
%     T[P^{k}_{0}(g^{j}_{D}), P^{k}_{1}(g^{j}_{D})](\alpha)\geq \Phi\left(\Phi^{-1}\left(1-\alpha \right)- \widehat{\mathtt{M}}_{j}\right),
%     \]
% where $\Phi$ is the cumulative distribution function (CDF) of the standard normal distribution.
\end{lemma}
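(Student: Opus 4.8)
The plan is to reduce everything to the closed-form trade-off function of two equal-variance Gaussians and then exploit its monotonicity in the ``effective separation'' $|\mu_1-\mu_0|/\sigma$. First I would recall the Neyman--Pearson optimal test for $\mathcal{N}(\mu_0,\sigma^2)$ versus $\mathcal{N}(\mu_1,\sigma^2)$: since the log-likelihood ratio is affine in the observation, the UMP test simply thresholds the observation, and a direct computation (as in Theorem~2.7 of \cite{dong2021gaussian}) gives
\begin{equation*}
T\left[\mathcal{N}(\mu_0,\sigma^2),\mathcal{N}(\mu_1,\sigma^2)\right](\alpha)=\Phi\!\left(\Phi^{-1}(1-\alpha)-\frac{|\mu_1-\mu_0|}{\sigma}\right).
\end{equation*}
The key structural observation is that, because $\Phi$ is strictly increasing while $\Phi^{-1}(1-\alpha)-d$ is strictly decreasing in $d$, this trade-off function is \emph{monotonically decreasing} in the separation $d=|\mu_1-\mu_0|/\sigma$: larger separation makes the hypotheses easier to distinguish and hence lowers the attainable Type-II error.

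Next I would identify the two distributions being compared. On input $b$ the mechanism releases $y_j=x_j+\delta_j$ with $\delta_j\sim\mathcal{N}(\mathtt{M}^j_b,\mathtt{V}^j)$ and $x_j=f_j(b,d)$, so $P_b(g^j_D)=\mathcal{N}(x_j+\mathtt{M}^j_b,\mathtt{V}^j)$, and likewise $P_{b'}(g^j_D)=\mathcal{N}(x'_j+\mathtt{M}^j_{b'},\mathtt{V}^j)$ for the adjacent vector $b'$. The effective separation between these outputs is therefore
\begin{equation*}
d=\frac{\left|(x_j-x'_j)+(\mathtt{M}^j_b-\mathtt{M}^j_{b'})\right|}{\sqrt{\mathtt{V}^j}}.
\end{equation*}
The crucial step is to bound the numerator by the triangle inequality using the two separate sensitivity constraints: $|x_j-x'_j|\leq \textup{sens}^j(f)$ holds by definition of sensitivity on adjacent datasets, and $|\mathtt{M}^j_b-\mathtt{M}^j_{b'}|\leq \textup{sens}^j(f)$ is exactly the hypothesis of the lemma. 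Hence the numerator is at most $2\,\textup{sens}^j(f)$, and since $\sqrt{\mathtt{V}^j}=2\,\textup{sens}^j(f)/\widehat{\mathtt{M}}_j$ this yields $d\leq \widehat{\mathtt{M}}_j$.

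Finally, I would invoke monotonicity: because the reference pair $\mathcal{N}(0,1)$ versus $\mathcal{N}(\widehat{\mathtt{M}}_j,1)$ has separation exactly $\widehat{\mathtt{M}}_j$ (unit variance) and $d\leq\widehat{\mathtt{M}}_j$, the decreasing trade-off function gives
\begin{equation*}
T\left[P_b(g^j_D),P_{b'}(g^j_D)\right](\alpha)=\Phi\!\left(\Phi^{-1}(1-\alpha)-d\right)\geq \Phi\!\left(\Phi^{-1}(1-\alpha)-\widehat{\mathtt{M}}_j\right)=T\left[\mathcal{N}(0,1),\mathcal{N}(\widehat{\mathtt{M}}_j,1)\right](\alpha),
\end{equation*}
as claimed. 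The genuine content---and the only place where care is needed---is this separation bound: one must recognize that the data-dependent shift $x_j-x'_j$ and the noise-mean shift $\mathtt{M}^j_b-\mathtt{M}^j_{b'}$ are \emph{two independent sources} of distinguishability, and that the variance was deliberately calibrated with the factor of $2$ precisely so their combined worst case of $2\,\textup{sens}^j(f)$ maps to separation exactly $\widehat{\mathtt{M}}_j$. Everything else is the standard equal-variance Gaussian trade-off computation.
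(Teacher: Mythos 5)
Your proof is correct and follows essentially the same route as the paper's: both reduce to the closed-form trade-off function $\Phi\left(\Phi^{-1}(1-\alpha)-|\mu_1-\mu_0|/\sigma\right)$ for equal-variance Gaussians, bound the combined mean shift by $2\,\textup{sens}^{j}(f)$ (data sensitivity plus the assumed bound on the noise means), and conclude by monotonicity in the separation given the calibration $\sqrt{\mathtt{V}^{j}}=2\,\textup{sens}^{j}(f)/\widehat{\mathtt{M}}_{j}$. Your write-up is in fact slightly more explicit than the paper's, which leaves the triangle-inequality step implicit under the phrase ``by definition of sensitivity.''
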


\begin{proof}

For any two $b,b'\in W$, $y(b)=f_{j}(b,d) + \delta_{j}$ and $y(b')=f_{j}(b',d') + \delta'_{j}$ are normally distributed with means $f^{j}(b,d)+\mathtt{M}_{b}$ and $f^{j}(b',d')+\mathtt{M}_{b'}$, respectively, and a common variance $\mathtt{V}^{j}$.
Then, we have
\[
\begin{aligned}
    T\left[P_{b}(g^{j}_{D}), P_{b'}(g^{j}_{D})\right](\alpha)=&T\left[\mathcal{N}\left(f^{j}(b,d)+\mathtt{M}_{b}, \mathtt{V}\right), \mathcal{N}\left(f^{j}(b',d')+\mathtt{M}_{b'}, \mathtt{V}\right) \right](\alpha)\\
    =&\Phi\left(\Phi^{-1}\left(1-\alpha\right) -\frac{|f^{j}(b,d)- f^{j}(b',d') + \mathtt{M}_{b}-\mathtt{M}_{b'}|}{\sqrt{\mathtt{V}^{j}}} \right),
\end{aligned}
\]
where $\Phi$ is the cumulative distribution function (CDF) of the standard normal distribution.
Since $\mathtt{V}^{j}=\left(2\textup{sens}^{j}(f)/\widehat{\mathtt{M}}_{j}\right)^{2}$ and $\max_{b,b'}|\mathtt{M}^{j}_{b}-\mathtt{M}^{j}_{b'}|\leq \textup{sens}^{j}(f)$, by definition of sensitivity, we obtain
\[
\begin{aligned}
    T\left[\mathcal{N}\left(f^{j}(b,d)+\mathtt{M}_{b}, \mathtt{V}\right), \mathcal{N}\left(f^{j}(b',d')+\mathtt{M}_{b'}, \mathtt{V}\right) \right](\alpha)&\geq \Phi\left(\Phi^{-1}\left(1-\alpha \right)- \widehat{\mathtt{M}}_{j}\right)\\
    &=T\left[\mathcal{N}(0,1), \mathcal{N}(\widehat{\mathtt{M}}_{j}, 1)\right](\alpha).
\end{aligned}
\]
\end{proof}

Lemma \ref{lemma:gpd} shows that distinguishing between $b$ and $b'$ is as hard as distinguishing between $\mathcal{N}(0,1)$ and $ \mathcal{N}(\widehat{\mathtt{M}}_{j}, 1)$.
Thus, if the $\alpha$-LRT attacker only observes $y_{j}$ for $j$th SNV, then the maximum power he can obtain is $1-T\left[\mathcal{N}(0,1), \mathcal{N}(\widehat{\mathtt{M}}_{j}, 1)\right](\alpha)$, which leads to the WCPL for the vNM defender among all possible powers when different membership vectors are realized.
considered are independent, $1-T\left[\mathcal{N}(0,1), \mathcal{N}(\widehat{\mathtt{M}}_{j}, 1)\right](\alpha)$ serves as the performance bound for every $j\in Q$.

Given any two $b,b'\in W$, define the hypothesis testing problem: $H_{0}: \textup{ the membership vector is }b$ versus $H_{1}: \textup{ the membership vector is }b'$.
From the assumption of independent SNVs, we can obtain the log-likelihood statistics 
\[
\ell(y;g_{D}, b, b')\equiv\sum\nolimits_{j\in Q} \log\left(\frac{\rho^{j}_{D}(y_{j}|H_0)}{\rho^{j}_{D}(y_{j}|H_1)} \right).
\]
Let $P_{i}[\cdot|g_{D}]$ denote the probability distribution associated with $H_{i}$ for $i\in\{0,1\}$.

\begin{lemma}
Fix $\alpha\in(0,1)$.
    Let $g_{D}$ be Gaussian defined above with each $g^{j}_{D}(\cdot|b)\in\Delta(\mathcal{Y}_{j})$ as the density function of $\mathcal{N}(\mathtt{M}^{j}_{b}, \mathtt{V}^{j})$ given any $b\in W$, where $\mathtt{V}^{j}=\left(2\textup{sens}^{j}(f)/\widehat{\mathtt{M}}_{j}\right)^{2}$.
    Suppose $\max_{b,b'}|\mathtt{M}^{j}_{b}-\mathtt{M}^{j}_{b'}|\leq \textup{sens}^{j}(f)$.
    Then, it holds for all pair $b,b'\in W$,
    \begin{equation}\label{eq:composition}
        \max_{\tau}P_{1}\left[\ell(\tilde{y};g_{D}, b, b')\geq \tau\middle|g_{D}\right] \leq 1-T\left[\mathcal{N}(0,1), \mathcal{N}\left(\sqrt{\sum\nolimits_{j\in Q}\widehat{\mathtt{M}}^{2}_{j}}, 1\right)\right](\alpha),
    \end{equation}
    with $P_{0}\left[\ell(\tilde{y};g_{D}, b, b')< \tau\middle|g_{D}\right]=\alpha$.
\end{lemma}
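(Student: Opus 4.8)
The plan is to read the left-hand side as the power of the Neyman--Pearson optimal level-$\alpha$ test for the joint problem $H_0:b$ versus $H_1:b'$, and then to bound the associated \emph{joint} trade-off function by composing the per-coordinate bounds already established in Lemma~\ref{lemma:gpd}. Concretely, I would first fix the adjacent pair $b,b'$ and observe that, by the Neyman--Pearson lemma, the likelihood-ratio statistic $\ell(\tilde y;g_D,b,b')$ induces the uniformly most powerful test: among all tests whose Type-I error is calibrated to $\alpha$ (expressed here through the threshold condition $P_0[\ell<\tau\mid g_D]=\alpha$), the thresholded LR test attains the minimal Type-II error, namely the joint trade-off function value $T[P_0(g_D),P_1(g_D)](\alpha)$. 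Hence the maximal power on the left equals $1-T[P_0(g_D),P_1(g_D)](\alpha)$, and the entire claim \eqref{eq:composition} reduces to the lower bound $T[P_0(g_D),P_1(g_D)](\alpha)\ge T[\mathcal{N}(0,1),\mathcal{N}(\sqrt{\sum_{j\in Q}\widehat{\mathtt{M}}_j^2},1)](\alpha)$ on trade-off functions.

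Next I would exploit the independence of the SNVs under linkage equilibrium: since $\rho_D(\cdot\mid b)=\prod_{j\in Q}g^j_D(\cdot\mid b)$, both hypotheses are product measures, so the joint trade-off function tensorizes as $T[P_0(g_D),P_1(g_D)]=\bigotimes_{j\in Q}T[P_b(g^j_D),P_{b'}(g^j_D)]$ by the composition (tensor-product) property of trade-off functions in \cite{dong2021gaussian}. Lemma~\ref{lemma:gpd} supplies the per-coordinate domination $T[P_b(g^j_D),P_{b'}(g^j_D)]\ge T[\mathcal{N}(0,1),\mathcal{N}(\widehat{\mathtt{M}}_j,1)]$ under the stated variance calibration $\mathtt{V}^j=(2\,\textup{sens}^j(f)/\widehat{\mathtt{M}}_j)^2$ and the sensitivity constraint on the means. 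Invoking the monotonicity of the tensor product with respect to the pointwise partial order on trade-off functions (again from \cite{dong2021gaussian}) lifts this to $\bigotimes_{j\in Q}T[P_b(g^j_D),P_{b'}(g^j_D)]\ge\bigotimes_{j\in Q}T[\mathcal{N}(0,1),\mathcal{N}(\widehat{\mathtt{M}}_j,1)]$. The Gaussian composition identity $\bigotimes_{j\in Q}T[\mathcal{N}(0,1),\mathcal{N}(\widehat{\mathtt{M}}_j,1)]=T[\mathcal{N}(0,1),\mathcal{N}(\sqrt{\sum_{j}\widehat{\mathtt{M}}_j^2},1)]$ then closes the chain, with the normally distributed log-likelihood ratio of Lemma~\ref{lemma:normal_relationship} available as an independent check of the root-sum-square mean. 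Taking $1-(\cdot)$ reverses the inequality and recovers \eqref{eq:composition}.

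The step I expect to be the main obstacle is the monotone lifting from the coordinatewise bound to the joint bound: I must be sure that ``harder to distinguish coordinatewise,'' expressed as pointwise domination of trade-off functions, is genuinely preserved under the tensor product, and that the tensor product of the dominating Gaussian factors is \emph{exactly} the single Gaussian trade-off function with the root-sum-square mean rather than merely a lower bound. Both are structural facts about trade-off functions from \cite{dong2021gaussian}, but care is needed to confirm that the tensor product respects the partial order without violating additional convexity or closure hypotheses, and to keep the rejection-region orientation and the calibration $P_0[\ell<\tau\mid g_D]=\alpha$ consistent with the notion of power appearing on the left-hand side.
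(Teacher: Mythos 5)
Your proposal is correct and takes essentially the same route as the paper: the paper likewise reduces the left-hand side to the power of the level-$\alpha$ LRT, uses independence of the SNVs to view the joint density as a $|Q|$-fold composition of the per-coordinate mechanisms, and then combines the per-coordinate bound of Lemma \ref{lemma:gpd} with the Gaussian composition result (Corollary 3.3 of \cite{dong2021gaussian}) to get the $\sqrt{\sum_{j\in Q}\widehat{\mathtt{M}}_{j}^{2}}$ bound. The only difference is that you unpack what that citation black-boxes---tensorization of trade-off functions for product measures, monotonicity of the tensor product under pointwise domination, and the identity $\bigotimes_{j\in Q} T\bigl[\mathcal{N}(0,1),\mathcal{N}(\widehat{\mathtt{M}}_{j},1)\bigr]=T\bigl[\mathcal{N}(0,1),\mathcal{N}\bigl(\sqrt{\sum_{j\in Q}\widehat{\mathtt{M}}_{j}^{2}},1\bigr)\bigr]$---which is exactly the internal structure of the cited corollary.
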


\begin{proof}
Since the SNVs are independent, the joint probability density $P(y|H_{i})$ over $\mathcal{Y}$ that is equal to the product $\prod_{j\in Q} \rho^{j}_{D}(y_{j}|H_{i})$ for $i\in\{0,1\}$.
It is a $|Q|$-fold composition of $\{\rho^{j}_{D}\}_{j\in Q}$, where each $\rho^{j}_{D}$ accesses to the same dataset.
In addition, $\max_{\tau}P_{1}\left[\ell(\tilde{y};g_{D}, b, b')\geq \tau\middle|g_{D}\right]$ is the power of $\alpha$-LRT given $g_{D}$ for any $b,b'\in W$.
Then, (\ref{eq:composition}) follows Corollary 3.3 of \cite{dong2021gaussian}.
\end{proof}

Let $\mathbf{I}_{|Q|}$ denote a $|Q|\times |Q|$ identity matrix. Let $\widehat{\mathbf{M}}\equiv(\widehat{\mathtt{M}}_{1}, \dots, \widehat{\mathtt{M}}_{|Q|})$.
Consider two multivariate normal distribution $\mathcal{N}(0, \mathbf{I}_{|Q|})$ and $\mathcal{N}(\widehat{\mathbf{M}}, \mathbf{I}_{|Q|})$.
Here, $\mathcal{N}(0, \mathbf{I}_{|Q|})$ is rotation invariant, and $\mathcal{N}(\widehat{\mathbf{M}}, \mathbf{I}_{|Q|})$ can be rotated to $\mathcal{N}\left(\sqrt{\sum\nolimits_{j\in Q}\widehat{\mathtt{M}}^{2}_{j}}, 1\right)$. 
In addition, the rotation here is an invertible transformation.
Therefore, $T\left[\mathcal{N}(0,1), \mathcal{N}\left(\sqrt{\sum\nolimits_{j\in Q}\widehat{\mathtt{M}}^{2}_{j}}, 1\right)\right](\alpha)$ is the same as the $T\left[\mathcal{N}(0, \mathbf{I}_{|Q|}), \mathcal{N}(\widehat{\mathbf{M}}, \mathbf{I}_{|Q|})\right](\alpha)$ for any $\alpha$ because the trade-off function is invariant under invertible transformations \cite{dong2021gaussian}.
Let $\widehat{\beta}=T\left[\mathcal{N}(0, \mathbf{I}_{|Q|}), \mathcal{N}(\widehat{\mathbf{M}}, \mathbf{I}_{|Q|})\right](\alpha)$.
Thus, the $\alpha$-LRT with the LR statistics formulated by $\mathcal{L}(y)$ has the power $1-\widehat{\beta}$.
Therefore, it holds that $\mathcal{F}(\alpha, \widehat{\beta}) = |Q|$.

Now, let us focus on when the attacker (either Bayesian or LRT) targets a specific individual $k$.
Given any subjective prior $\sigma$ and $Q$, let $\mu^{\sigma}_{1|0}[|Q|]=\int_{r}\mu_{\sigma}(s_{k}=1|r)\rho_{D}(r|b_{k}=0)$.
By Proposition \ref{thm:attacker_mirror_mu}, a Bayesian attacker's strategy that mirrors the distribution of the posterior belief leads to the WCPL for the defender.
Hence, $\mu^{\sigma}_{1|0}[|Q|]$ captures the highest Type-II errors of the Bayesian attacker. 
Then, $\mathcal{F}\left(\alpha, \mu^{\sigma}_{0|1}[|Q|]\right)$ captures the number of SNVs (i.e., $|Q|$) so that $\alpha$-LRT can attain the power $\mu^{\sigma}_{0|1}[|Q|]$ when the set $Q$ of SNVs of each individual are used in the dataset, leading to $L(g_{D}, \tau^{*},\alpha) = L^{\sigma}(g_{D})$.
If $\mathcal{F}\left(\alpha, \mu^{\sigma}_{0|1}[|Q|]\right)\geq|Q|$, then more SNVs needs to be used to make $\alpha$-LRT have the power $\mu^{\sigma}_{0|1}[|Q|]$.
This is equivalent to $\widehat{\beta}<\mu^{\sigma}_{0|1}[|Q|]$, which implies $L(g_{D}, \tau^{*},\alpha)\leq L^{\sigma}(g_{D})$.
\qed

%%%%%%%%%%%%%%%%%%%%%%%%%%%%%%%
\section{Experiment Details}\label{sec:app_Hyperparameters}

\FloatBarrier

\subsection{Dataset}

The dataset used in our experiments was initially provided by the organizers of the 2016 iDash Privacy and Security Workshop \cite{tang2016idash} as part of their challenge on Practical Protection of Genomic Data Sharing Through Beacon Services. 
In this research, we follow \cite{venkatesaramani2021defending,venkatesaramani2023enabling} and employ SNVs from chromosome $10$ for a subset of $400$ individuals to construct the Beacon, with another $400$ individuals excluded from the Beacon.

\subsection{Network Configurations and Hyperparameters}

The \textbf{Defender} neural network is a generative model designed to process membership vectors and produce beacon modification decisions. The input layer feeds into two fully connected layers with batch normalization and activation functions applied after each layer. The first hidden layer uses ReLU activation, while the second hidden layer uses LeakyReLU activation. The output layer applies a scaled sigmoid activation function. The output of the Defender neural network is a real value between -0.5 and 0.5, which is guaranteed by the scaled sigmoid activation function.
All Defender neural networks were trained using the Adam optimizer with a learning rate of 0.001, weight decay of $0.00001$, and an ExponentialLR scheduler with a decay rate of $0.988$.

The \textbf{Attacker} neural network is a generative model designed to process beacons and noise to produce membership vectors. The input layer feeds into two fully connected layers with batch normalization and activation functions. The first hidden layer uses ReLU activation. The output layer applies a sigmoid activation function.
All Attacker models were trained using the Adam optimizer, a learning rate of $0.0001$, weight decay of $0.00001$, and an ExponentialLR scheduler with a decay rate of $0.988$.

The specific configurations for each model are provided in the tables below. Table \ref{Table:defender_scalar} shows the configurations of the neural network Defender under the Bayesian, the fixed-threshold, and the adaptive-threshold attackers when the trade-off parameter $\kappa$ is a vector (i.e., each $\kappa_j=\kappa$ for all $j\in Q$).
Table \ref{Table:defender_vector} shows the configurations of Defender when the trade-off parameter is a vector; i.e.,  $\vec{\kappa}=(\kappa_{j})_{j\in Q}$ where $\kappa_{j}=0$ for the $90\%$ of $5000$ SNVs and $\kappa_{j}=50$ for the remaining $10\%$.
Table \ref{Table:attacker_defender} lists the configurations of the neural network Attacker under the Bayesian, the fixed-threshold LRT, and the adaptive-threshold LRT defenders.
Table \ref{Table:attacker_DP}
lists the configurations of Attacker under the standard $\epsilon$-DP which induces the same $\vec{\kappa}$-weighted expected utility loss for the defender.

\begin{table}[H]
\centering
\caption{Bayesian Defender Configurations}\label{Table:defender}
\begin{minipage}{0.45\textwidth}
\centering
\subcaption{Defender with scalar $\kappa$}\label{Table:defender_scalar}
\begin{tabular}{|l|c|c|}
\hline
Layer            & Input Units & Output Units \\ \hline
Input Layer      & 830         & 1500         \\ \hline
Hidden Layer 1   & 1500        & 1100         \\ \hline
Hidden Layer 2   & 1100        & 500          \\ \hline
Output Layer     & 500         & 5000         \\ \hline
\end{tabular}
\end{minipage}
\hspace{0.05\textwidth}
\begin{minipage}{0.45\textwidth}
\centering
\subcaption{Defender with vector $\vec{\kappa}$}\label{Table:defender_vector}
\begin{tabular}{|l|c|c|}
\hline
Layer            & Input Units & Output Units \\ \hline
Input Layer      & 830         & 1000         \\ \hline
Hidden Layer 1   & 1000        & 3000         \\ \hline
Hidden Layer 2   & 3000        & 4600         \\ \hline
Output Layer     & 4600        & 5000         \\ \hline
\end{tabular}
\end{minipage}
\end{table}

\begin{table}[H]
\centering
\caption{Attacker Configurations}
\begin{subtable}[t]{0.45\textwidth}
\centering
\caption{Attacker vs. Defender}\label{Table:attacker_defender}
\begin{tabular}{|l|c|c|}
\hline
Layer            & Input Units & Output Units \\ \hline
Input Layer      & 5000        & 3400         \\ \hline
Hidden Layer 1   & 3400        & 2000         \\ \hline
Output Layer     & 2000        & 800          \\ \hline
\end{tabular}
\end{subtable}
\hspace{0.05\textwidth}
\begin{subtable}[t]{0.45\textwidth}
\centering
\caption{Bayesian Attacker vs. $\epsilon$-DP}\label{Table:attacker_DP}
\begin{tabular}{|l|c|c|}
\hline
Layer            & Input Units & Output Units \\ \hline
Input Layer      & 5000        & 3000         \\ \hline
Hidden Layer 1   & 3000        & 1000         \\ \hline
Output Layer     & 1000        & 800          \\ \hline
\end{tabular}
\end{subtable}
\end{table}

\FloatBarrier

\subsection{AUC Values of ROC Curves with Standard Deviations}

Tables \ref{Table:AUC_different_attacks} and \ref{Table:AUC_different_defender} show the AUC values of the ROC curves shown in the plots of Figure \ref{fig:six_figures} in the experiments.

\begin{table}[h!]
\centering
\caption{AUC Values For Different Attackers Under Varying $\kappa$}\label{Table:AUC_different_attacks}
\begin{tabular}{@{}lccc@{}}
\toprule
Attacker           & Figure 1a ($\kappa=0$) & Figure 1b ($\kappa=1.5$) & Figure 1c ($\kappa=50$) \\ \midrule
Bayesian attacker  & $0.5205 \pm 0.0055$    & $0.7253 \pm 0.0069$      & $0.8076 \pm 0.0040$     \\
Fixed-Threshold LRT attacker   & $0.5026 \pm 0.0062$    & $0.6214 \pm 0.0322$      & $0.7284 \pm 0.0089$     \\
Adaptive-Threshold LRT attacker   & $0.1552 \pm 0.0100$    & $0.1716 \pm 0.0144$      & $0.1719 \pm 0.0174$     \\ \bottomrule
\end{tabular}
\end{table}

\begin{table}[h!]
\centering
\caption{AUC Values of Attackers For Figures 1d to 1f }\label{Table:AUC_different_defender}
\begin{tabular}{@{}cccc@{}}
\toprule
Figure & Scenarios             & AUC $\pm$ std        & Condition \\ \midrule
\multirow{3}{*}{1d} & Under Bayesian Defender            & $0.7237 \pm 0.0066$ & $\kappa = 1.5$ \\
                    & Under Fixed-threshold LRT Defender & $0.9124 \pm 0.0026$ & $\kappa = 1.5$ \\
                    & Under Adaptive-threshold LRT Defender   & $0.7487 \pm 0.0027$ & $\kappa = 1.5$ \\ \midrule
\multirow{3}{*}{1e} & Bayesian Attacker            & $0.7178 \pm 0.0050$ & $\epsilon = 600$ \\
                    & Fix-LRT Attacker             & $0.6285 \pm 0.0057$ & $\epsilon = 600$ \\
                    & Adp-LRT Attacker             & $0.2402 \pm 0.0117$ & $\epsilon = 600$ \\ \midrule
\multirow{2}{*}{1f} & Under Bayesian Defender            & $0.5318 \pm 0.0222$ & $\vec{\kappa}$ \\
                    & Under $\epsilon$-DP Defender       & $0.9153 \pm 0.0025$ & $\vec{\kappa}$ \\ \bottomrule
\end{tabular}
\end{table}

\end{document}